\documentclass[onecolumn,11pt]{IEEEtran}
\IEEEoverridecommandlockouts
\usepackage{silence}
\usepackage{soul}
\usepackage[mathscr]{eucal}
\usepackage[cmex10]{amsmath}
\usepackage{epsfig,epsf,psfrag}
\usepackage{amssymb,amsmath,amsthm,amsfonts,latexsym}
\usepackage{amsmath,graphicx,bm,xcolor,url,overpic}
\usepackage[caption=false]{subfig} 
\usepackage{fixltx2e}
\usepackage{array}
\usepackage{verbatim}
\usepackage{bm}
\usepackage{algorithmic}
\usepackage{algorithm}
\usepackage{verbatim}
\usepackage{textcomp}
\usepackage{mathrsfs}
\usepackage{epstopdf}

\newcommand{\openone}{\leavevmode\hbox{\small1\normalsize\kern-.33em1}}

\catcode`~=11 \def\UrlSpecials{\do\~{\kern -.15em\lower .7ex\hbox{~}\kern .04em}} \catcode`~=13 

\allowdisplaybreaks[4]

\newcommand{\nn}{\nonumber}

\newcommand{\calA}{\mathcal{A}}
\newcommand{\calB}{\mathcal{B}}
\newcommand{\calC}{\mathcal{C}}
\newcommand{\calD}{\mathcal{D}}
\newcommand{\calE}{\mathcal{E}}
\newcommand{\calF}{\mathcal{F}}
\newcommand{\calG}{\mathcal{G}}
\newcommand{\calH}{\mathcal{H}}

\newcommand{\calJ}{\mathcal{J}}
\newcommand{\calK}{\mathcal{K}}
\newcommand{\calL}{\mathcal{L}}
\newcommand{\calM}{\mathcal{M}}
\newcommand{\calN}{\mathcal{N}}

\newcommand{\calP}{\mathcal{P}}

\newcommand{\calR}{\mathcal{R}}
\newcommand{\calS}{\mathcal{S}}
\newcommand{\calT}{\mathcal{T}}
\newcommand{\calU}{\mathcal{U}}
\newcommand{\calV}{\mathcal{V}}

\newcommand{\calZ}{\mathcal{Z}}

\newcommand{\ba}{\mathbf{a}}
\newcommand{\bA}{\mathbf{A}}
\newcommand{\bb}{\mathbf{b}}
\newcommand{\bB}{\mathbf{B}}

\newcommand{\bd}{\mathbf{d}}

\newcommand{\be}{\mathbf{e}}

\newcommand{\bg}{\mathbf{g}}
\newcommand{\bG}{\mathbf{G}}
\newcommand{\bh}{\mathbf{h}}
\newcommand{\bH}{\mathbf{H}}

\newcommand{\bN}{\mathbf{N}}
\newcommand{\bo}{\mathbf{o}}

\newcommand{\br}{\mathbf{r}}

\newcommand{\bt}{\mathbf{t}}

\newcommand{\bu}{\mathbf{u}}
\newcommand{\bU}{\mathbf{U}}
\newcommand{\bv}{\mathbf{v}}
\newcommand{\bV}{\mathbf{V}}
\newcommand{\bw}{\mathbf{w}}
\newcommand{\bW}{\mathbf{W}}
\newcommand{\bx}{\mathbf{x}}
\newcommand{\bX}{\mathbf{X}}

\newcommand{\bY}{\mathbf{Y}}
\newcommand{\bz}{\mathbf{z}}
\newcommand{\bZ}{\mathbf{Z}}

\newcommand{\rmE}{\mathrm{E}}

\newcommand{\rmN}{\mathrm{N}}

\newcommand{\rmQ}{\mathrm{Q}}

\newcommand{\bbE}{\mathsf{E}}

\newcommand{\bbI}{\openone}

\newcommand{\bbN}{\mathbb{N}}

\newcommand{\bbR}{\mathbb{R}}
\newcommand{\bbS}{\mathbb{S}}

\DeclareMathAlphabet{\mathbsf}{OT1}{cmss}{bx}{n}
\DeclareMathAlphabet{\mathssf}{OT1}{cmss}{m}{sl}

\newcommand{\rvA}{\mathsf{A}}

\newcommand{\rvR}{\mathsf{R}}

\newcommand{\rvT}{\mathsf{T}}

\DeclareSymbolFont{bsfletters}{OT1}{cmss}{bx}{n}  
\DeclareSymbolFont{ssfletters}{OT1}{cmss}{m}{n}
\DeclareMathSymbol{\bsfGamma}{0}{bsfletters}{'000}
\DeclareMathSymbol{\ssfGamma}{0}{ssfletters}{'000}
\DeclareMathSymbol{\bsfDelta}{0}{bsfletters}{'001}
\DeclareMathSymbol{\ssfDelta}{0}{ssfletters}{'001}
\DeclareMathSymbol{\bsfTheta}{0}{bsfletters}{'002}
\DeclareMathSymbol{\ssfTheta}{0}{ssfletters}{'002}
\DeclareMathSymbol{\bsfLambda}{0}{bsfletters}{'003}
\DeclareMathSymbol{\ssfLambda}{0}{ssfletters}{'003}
\DeclareMathSymbol{\bsfXi}{0}{bsfletters}{'004}
\DeclareMathSymbol{\ssfXi}{0}{ssfletters}{'004}
\DeclareMathSymbol{\bsfPi}{0}{bsfletters}{'005}
\DeclareMathSymbol{\ssfPi}{0}{ssfletters}{'005}
\DeclareMathSymbol{\bsfSigma}{0}{bsfletters}{'006}
\DeclareMathSymbol{\ssfSigma}{0}{ssfletters}{'006}
\DeclareMathSymbol{\bsfUpsilon}{0}{bsfletters}{'007}
\DeclareMathSymbol{\ssfUpsilon}{0}{ssfletters}{'007}
\DeclareMathSymbol{\bsfPhi}{0}{bsfletters}{'010}
\DeclareMathSymbol{\ssfPhi}{0}{ssfletters}{'010}
\DeclareMathSymbol{\bsfPsi}{0}{bsfletters}{'011}
\DeclareMathSymbol{\ssfPsi}{0}{ssfletters}{'011}
\DeclareMathSymbol{\bsfOmega}{0}{bsfletters}{'012}
\DeclareMathSymbol{\ssfOmega}{0}{ssfletters}{'012}

\newcommand{\hatx}{\hat{x}}
\newcommand{\hatX}{\hat{X}}

\newcommand{\haty}{\hat{y}}
\newcommand{\hatY}{\hat{Y}}

\newcommand{\veps}{\varepsilon}

\newcommand{\bomega}{\bm{\omega}}

\newcommand{\blambda}{\bm{\lambda}}

\newcommand{\bmu}{\bm{\mu}}

\newcommand{\bSigma	}{\bm{\Sigma}}

\DeclareMathOperator{\diag}{diag}

\DeclareMathOperator{\tr}{tr}

\DeclareMathOperator{\var}{\mathsf{Var}}

\DeclareMathOperator{\cov}{\mathsf{Cov}}

\DeclareMathOperator{\rank}{rank}

\newcommand{\bzero}{\mathbf{0}}
\newcommand{\bone}{\mathbf{1}}

\newtheorem{theorem}{Theorem} 
\newtheorem{lemma}[theorem]{Lemma}

\newtheorem{proposition}[theorem]{Proposition}
\newtheorem{corollary}[theorem]{Corollary}
\newtheorem{definition}{Definition}

\newtheorem{gwlemma}{Lemma}

\newtheorem{gwproposition}{Proposition}

\usepackage{graphicx,cite}
\usepackage{epstopdf}
\usepackage{enumerate}
\usepackage{color}
\usepackage{bbm}
\usepackage{booktabs}
\usepackage{makecell}
\definecolor{Dyellow}{RGB}{254,152,0}
\definecolor{Dgreen}{RGB}{0,176,80}

\makeatletter

\newcommand{\Rmnum}[1]{\expandafter\@slowromancap\romannumeral #1@}
\makeatother
\def\BibTeX{{\rm B\kern-.05em{\sc i\kern-.025em b}\kern-.08em
T\kern-.1667em\lower.7ex\hbox{E}\kern-.125emX}}
\usepackage[hidelinks]{hyperref}
\usepackage{mathtools,microtype,float,needspace}
\usepackage{tikz,pgfplots,pgfplotstable}
\usetikzlibrary{arrows.meta}
\pgfplotsset{compat=1.18}

\newcommand{\norm}[1]{\left\lVert#1\right\rVert}
\newcommand{\ip}[2]{\left\langle#1,#2\right\rangle}
\allowdisplaybreaks[0]

\hypersetup{pdftitle={Exact Second-Order Asymptotics for the Gaussian Gray-Wyner Problem},
pdfauthor={Lin Zhou, Lei Yu},pdfsubject={Gaussian Gray-Wyner manuscript, version 5}}

\begin{document}
\title{Exact Second-Order Asymptotics for the Gaussian Gray--Wyner Problem}
\author{Lin Zhou and Lei Yu
\thanks{Lin Zhou is with the School of Automation and Intelligent Manufacturing, Southern University of Science and Technology, Shenzhen, China (e-mail: zhoul9@sustech.edu.cn).}
\thanks{Lei Yu is with the School of Statistics and Data Science, Nankai University, Tianjin, China (e-mail: leiyu@nankai.edu.cn).}}
\maketitle
\begin{abstract}
We revisit the Gray-Wyner problem of lossy compression of bivariate Gaussian sources under the quadratic distortion measure and derive the exact second-order asymptotics. Specifically, we first identify the boundary rate tuples using the explicit first-order asymptotics by Yu (TIT 2023) and then characterize the tradeoff among three second-order coding rates subject to a constraint on the joint excess-distortion probability. Our main contribution is the achievability proof, where we derive a covariance shell covering lemma that is analogous to the type covering lemma for the discrete case. In particular, our covering lemma classifies the source sequences into power shell types via their empirical covariance matrix. For each such power shell type, we derive sufficient conditions on the compression rates so that covering within distortion levels are achievable. Finally, asymptotical analysis via the Berry-Esseen theorem yields the desired result.
\end{abstract}
\begin{IEEEkeywords}
Dispersion, Rate-distortion, Covering Lemma, Multiterminal source coding, Lossy source coding
\end{IEEEkeywords}

\section{Introduction}
\label{sec:intro}

Data compression, a.k.a., source coding, was pioneered by Shannon~\cite{shannon1948} in this seminal paper. Specifically, Shannon proved that the entropy is the minimal compression rate for lossless compression of a discrete memoryless source (DMS) with vanishing error probability as the blocklength tends to infinity. A decade later, Strassen~\cite{strassen1962} refined Shannon's result by deriving the second-order asymptotics that approximates the finite blocklength performance of optimal codes by characterizing the order and the coefficient of the second dominant term of the non-asymptotic compression rate as a function of the blocklength and a given error probability. Recently, the result of Strassen was further refined by Kontoyiannis and Verd\'u who derived further refined third-order asymptotics~\cite{kontoyiannisverdu2014}. As demonstrated in~\cite{strassen1962},~\cite{kontoyiannisverdu2014}, refined asymptotics approximates the performance of optimal codes at finite blocklengths.

Although insightful, lossless compression is valid only for a DMS. For continuous memoryless sources, lossy source coding, a.k.a., rate-distortion was introduced by Shannon~\cite{shannon1959}. In lossy source coding, the reconstruction is no longer required to be exactly the same as the original source sequence and the imperfection is measured via a distortion function. Shannon~\cite{shannon1959} derived the first-order asymptotics for lossy source coding under the expected distortion criterion when the blocklength tends to infinity and named the coefficient of the first dominant term of the minimal compression rate of optimal codes as the rate-distortion function. The same result also holds when one uses a vanishing excess-distortion probability criterion. Several decades later, the second-order asymptotics of lossy source coding was derived by Ingber and Kochman~\cite{ingberkochman2011} and Kostina and Verd\'u~\cite{kostinaverdu2012}. Furthermore, for an equiprobable DMS under Hamming distortion, Kostina and Verd\'u also characterized the third-order asymptotics~\cite[Theorem~27]{kostinaverdu2012}.

Beyond point-to-point (P2P) data compression, motivated by practical applications in the storage/retrieval of correlated sources~\cite[Section~V-C]{viswanathaakyolrose2014} and cache-aided delivery of correlated content~\cite{hassanzadeh2020}, Gray and Wyner~\cite{graywyner1974} proposed a source coding problem in which a pair of correlated source sequences is compressed, in either a lossless or lossy manner, and transmitted from a single transmitter over a common link and two private links to two different receivers. The first-order asymptotics, a.k.a., the rate region for the lossless case and the rate-distortion region for the lossy case, was characterized in~\cite{graywyner1974}. Ever since, the problem is known as the Gray-Wyner problem, a classical multiterminal source coding problem concerning compression of correlated memoryless sources. Several decades later, for the DMS, the second-order asymptotics for the Gray-Wyner problem was derived by Watanabe~\cite{watanabe2017} for the lossless case and by Zhou, Tan and Motani~\cite{zhoutanmotani2017gw} for the lossy case. Furthermore, Liu et al.~\cite[Theorem~17]{liu2020} derived second-order converse bounds for both discrete correlated sources under bounded distortion measures and bivariate Gaussian sources under the quadratic distortion measure. Recently, for bivariate Gaussian sources, the explicit form of the rate-distortion region was completely characterized by Yu~\cite[Theorem~3 and Corollary~2]{yu2023}. Throughout the paper, we refer to the Gray-Wyner problem for lossy compression of bivariate Gaussian memoryless sources under the quadratic distortion measure as the Gaussian Gray-Wyner problem.

In contrast to the discrete case, the exact second-order asymptotics for the Gaussian Gray-Wyner problem remains open until now. The key challenge lies in how to accurately extract the common information from bivariate Gaussian sources and derive the Gaussian analogue of type covering lemma~\cite[Lemma~8]{zhoutanmotani2017gw}. The method of types~\cite{csiszar1998}, tailed and used extensively for the DMS, cannot be applied to continuous sources like Gaussian sources studied in this paper.

\subsection{Main Contributions}
\label{sec:contributions}

We address the above challenge and derive exact second-order asymptotics for the Gaussian Gray-Wyner problem. Our main contribution is an achievability bound that matches the existing converse bound of Liu et al.~\cite[Theorem~17]{liu2020}. In particular, we derive a covariance shell covering lemma that is parallel to the type covering lemma for the DMS. To do so, we first define Gaussian type classes by grouping pairs of correlated Gaussian source sequences according to their empirical covariance matrix, generalizing the power type class for a single Gaussian source in the successive refinement problem~\cite[Section~VI-A]{zhoutanmotani2017sr}. Subsequently, for each Gaussian type class, we use random covering for the common message and use Euclidean ball covering for the private messages. In particular, we show that the achievable common rate equals the minimum common rate derived from the first-order asymptotics when the source distribution is given by the Gaussian type class and two private rates are set as desired values. Finally, we use the Berry--Esseen theorem~\cite[Theorem~1.1]{bentkus2005} to conclude the achievability analysis. For the converse part, we combine the second-order converse bounds for the Gaussian Gray--Wyner problem of Liu et al.~\cite[Theorem~17]{liu2020} and the P2P Gaussian lossy source coding problem of Kostina and Verd\'u~\cite[Theorem~40]{kostinaverdu2012}.

Concurrent with our study, Cao~\cite[Theorem~18]{cao2026image} presented a second-order characterization for the bivariate Gaussian Gray--Wyner problem studied in this paper. However, we would like to clarify that no detailed proof steps were provided for~\cite[Theorem~18]{cao2026image} nor did the author of~\cite[Theorem~18]{cao2026image} presented the result exactly as in our paper despite some similarity. In particular, as emphasized above, our paper provides a complementary and explicit second-order asymptotic analysis of the Gaussian Gray-Wyner problem for any first-order boundary rate tuples with positive common rate. Specifically, we classify the boundary rate tuples into four regimes and derive closed-form expressions for the supporting weights and dispersion. For the achievability, we establish a uniform covariance-shell covering lemma that guarantees simultaneous covering of every source block in a covariance shell using with explicit rate bounds. For the converse part, we identify the entropy-support quantity and its covariance gradient needed to specialize the result of Liu et al.~\cite[Theorem~17]{liu2020}, and handle zero supporting weights through the P2P Gaussian converse of~\cite[Theorem~40]{kostinaverdu2012}.

\subsection{Other Related Works}
\label{sec:related}

Besides the Gray--Wyner problem, the second-order asymptotics has also been established for other multiterminal source coding problems. Specifically, Tan and Kosut~\cite{tankosut2014} studied the Slepian--Wolf problem~\cite{slepianwolf1973} of lossless compression of correlated sources, which was generalized by Nomura and Han~\cite{nomurahan2014} to mixed sources. Furthermore, No, Ingber, and Weissman~\cite{noingberweissman2016} studied the successive refinement problem~\cite{rimoldi1994} under separate excess-distortion probabilities criteria while by Zhou, Tan, and Motani~\cite{zhoutanmotani2017sr} considered the joint excess-distortion probability criterion. For the Wyner-Ziv problem~\cite{wynerziv1976} of lossy source coding with decoder side information, achievability result was obtained by Watanabe, Kuzuoka, and Tan~\cite{watanabekuzuokatan2015}, which was recently refined by Li and Li~\cite{lili2025}. Very recently, exact second-order asymptotics was characterized by Zhou~\cite{zhouwz2026} under mild conditions. For a comprehensive summary of second-order asymptotics of lossy source coding, refer to the monograph~\cite{zhoumotani2023}.

The Gray--Wyner problem is also closely related to the study of common information. This line of research was pioneered by Wyner~\cite{wyner1975}, who characterized the minimum common rate for the lossless Gray-Wyner problem when the sum rate of three encoders equals the joint entropy of the two correlated sources. Subsequently, Viswanatha, Akyol, and Rose~\cite{viswanathaakyolrose2014} extended~\cite{wyner1975} to the lossy case and explicitly calculated the lossy common information. For the Gaussian Gray-Wyner problem, Sula and Gastpar~\cite{sulagastpar2020} determined the minimum common rate subject to the sum of two private rates.

\subsection{Organization for the Rest of the Paper}
\label{sec:organization}

In Section II, we formulate the Gaussian Gray-Wyner problem, define the first- and second-order asymptotics and recall existing result. In Section III, we present and discuss our main results. The achievability and converse proofs of our main result are presented in Sections IV and V, respectively. Finally, Section VI concludes the paper and discusses future directions. For smooth presentation, the proofs of supporting lemmas are deferred to the appendices.

\subsection*{Notation}
Upper-case letters denote random variables, and lower-case letters denote their realizations. Bold upper-case letters denote random vectors and matrices, while bold lower-case letters denote deterministic vectors and matrices. We use $(\bbN,\bbR,\bbR_+,\bbR_{+,0})$ to denote the sets of positive integers, real numbers, positive real numbers and non-negative real numbers, respectively. For any real number $x\in\bbR$, we define $[x]^+:=\max\{x,0\}$. For any two integers $(a,b)\in\bbN^2$ such that $a\leq b$, we define $[a:b]:=\{a,a+1,\ldots,b\}$ and use $[a]$ to denote $[1:a]$ when $a\geq1$. Given $d\in\bbN$, $\bmu\in\bbR^d$ and $\bSigma\in\bbS_+^d$, we use $\calN(\bmu,\bSigma)$ to denote the $d$-dimensional Gaussian distribution with mean vector $\bmu$ and covariance matrix $\bSigma$. We use $\Phi$ to denote the cumulative distribution function (cdf) for the standard Gaussian random variable with mean zero and variance one and use $\rmQ:=1-\Phi$ to denote its complementary cdf. We write $(\bbE_Q[\cdot],\var_Q[\cdot],\cov_Q(\cdot,\cdot))$ for expectation, variance, and covariance with respect to a distribution $Q$. We use the standard asymptotic notation such as $(O(\cdot),o(\cdot))$. All logarithms are natural. We use the scalar symbols $(R_j,L_j,D_j,M_j,V)$ for rates, second-order coding rates, distortions, alphabet sizes, and dispersion, respectively.

Fix an integer $d\in\bbN$. We use $(\bbS^d,\bbS_+^d,\bbS_{++}^d)$ to denote the real symmetric, positive-semidefinite, and positive-definite $d\times d$ matrices. Fix any two matrices $(\ba,\bb)\in(\bbS^d)^2$, we write $\ba\preceq\bb$ when $\bb-\ba\in\bbS_+^d$, use $|\ba|$ to denote the determinant, $\tr(\ba)$ to denote the trace, $\ba^{\rvT}$ to denote the transpose, $\norm{\ba}=\sqrt{\ip{\ba}{\ba}}$ to denote the Frobenius norm, and define $\ip{\ba}{\bb}:=\tr(\ba^{\rvT}\bb)$. For each $j\in[d]$, we use $\be_j\in\bbR^d$ to denote the $d$-dimensional vector with one being the $j$-th element and zero elsewhere. For a vector $\bmu\in\bbR^d$ and a real number $r\in\bbR_+$, we use $\calB_d(\bmu,r):=\{\bx\in\bbR^d:\norm{\bx-\bmu}\leq r\}$ to denote the $d$-dimensional ball with center $\bmu$ and radius $r$ and define $\calB_d(\bmu,0):=\{\bmu\}$. For each $j\in[d]$, given a vector $\bv\in\bbR^d$, we use $[\bv]_j$ to denote its $j$-th coordinate, and given another integer $n\in\bbN$ and a matrix $\bx^n$ of dimension $d\times n$, we use $(\bx^n)_j$ to denote its $j$-th row. We write $\bzero_d:=(0,\ldots,0)^{\rvT}\in\bbR^d$ and $\bzero_{d\times d}\in\bbR^{d\times d}$ for the all zero vector and all zero matrix, respectively, and use $\bone_d$ for the $d\times d$ identity matrix.

\section{Problem Formulation and Existing Results}
\label{sec:formulation}
\subsection{System Model and Definitions}
\label{sec:model}
\begin{figure}[t]
\centering
\begin{tikzpicture}[>=Latex,line width=.6pt,font=\fontsize{10}{12}\selectfont]
\node[draw,minimum width=1.4cm,minimum height=.65cm] (enc1) at (0,1.1) {$f_1$};
\node[draw,minimum width=1.4cm,minimum height=.65cm] (enc0) at (0,0) {$f_0$};
\node[draw,minimum width=1.4cm,minimum height=.65cm] (enc2) at (0,-1.1) {$f_2$};
\node[draw,minimum width=1.75cm,minimum height=.9cm,align=center] (d1) at (4,1.1) {Decoder 1\\$\phi_1$};
\node[draw,minimum width=1.75cm,minimum height=.9cm,align=center] (d2) at (4,-1.1) {Decoder 2\\$\phi_2$};
\draw[->] (-2.4,0) node[left] {$(X^n,Y^n)$} -- (enc0.west);
\draw[->] (-1.25,0) |- (enc1.west);
\draw[->] (-1.25,0) |- (enc2.west);
\draw (enc0.east) -- (4,0);
\node[fill=white,inner sep=1.5pt] at (2.1,.25) {$S_0\in[M_0]$};
\fill (4,0) circle(1.4pt);
\draw[->] (4,0) -- (d1.south);
\draw[->] (4,0) -- (d2.north);
\draw[->] (enc1.east) -- (d1.west) node[midway,above=.10cm] {$S_1\in[M_1]$};
\draw[->] (enc2.east) -- (d2.west) node[midway,above=.10cm] {$S_2\in[M_2]$};
\draw[->] (d1.east) -- (6.1,1.1) node[right] {$\hatX^n$};
\draw[->] (d2.east) -- (6.1,-1.1) node[right] {$\hatY^n$};
\end{tikzpicture}
\caption{System model for the Gaussian Gray--Wyner problem.}
\label{fig:system-model}
\end{figure}
Fix positive real numbers $(\sigma_X,\sigma_Y,\rho_{XY})\in\bbR_+^2\times(0,1)$, two distortion levels $(D_1,D_2)\in(0,\sigma_X^2)\times(0,\sigma_Y^2)$ and four positive integers $(n,M_0,M_1,M_2)\in\bbN^3$. Define the joint source distribution $P_{XY}$:
\begin{align}
P_{XY}&:=\calN(\bzero_2,\bSigma_{XY}),~
\bSigma_{XY}:=
\begin{bmatrix}
\sigma_X^2&\rho_{XY}\sigma_X\sigma_Y\\
\rho_{XY}\sigma_X\sigma_Y&\sigma_Y^2
\end{bmatrix}
\in\bbS_{++}^2.\label{eq:source-cov}
\end{align}
Let $(P_X,P_Y)$ be the induced marginal distributions. We consider memoryless sources $(X^n,Y^n)\in(\bbR^n)^2$ that is generated i.i.d. from $P_{XY}$. As shown in Fig.~\ref{fig:system-model}, in the Gaussian Gray-Wyner problem, three encoders compress a pair of correlated sequences $(X^n,Y^n)$ via three encoders $(f_0,f_1,f_2)$, which are used by two decoders $(\phi_1,\phi_2)$ to generate source sequence estimates $(\hatX^n,\hatY^n)$, respectively. In particular, encoder $f_0$ sends the common message $S_0$ to both decoders while encoder $f_i$, for each $i\in[2]$, sends a private message $S_i$ to decoder $\phi_i$. To evaluate the reconstruction quality, we use the quadratic distortion measure. Specifically, given any $(x,\hat x)\in\bbR^2$ and $(x^n,\hat x^n)\in(\bbR^n)^2$, the quadratic distortion measure and its block extension are defined as follows:
\begin{align}
d(x,\hat x)&:=(x-\hat x)^2,
\\
d(x^n,\hat x^n)&:=\frac1n\sum_{i\in[n]}d(x_i,\hat x_i)=\frac1n\norm{x^n-\hat x^n}^2.
\end{align}
We can now define a code that mathematically defines the behavior of the encoders and decoders.
\begin{definition}
\label{def:code}
An $(n,M_0,M_1,M_2)$-code consists of three encoders $f_i$, $i\in[0:2]$, and two decoders $\phi_j$, $j\in[2]$ such that
\begin{align}
f_i:(\bbR^n)^2&\longrightarrow[M_i],
\\
\phi_j:[M_0]\times[M_j]&\longrightarrow\bbR^n.
\end{align}
Given an $(n,M_0,M_1,M_2)$-code and a pair of sequences $\bx^n=(x^n,y^n)\in\bbR^{2\times n}$, the reconstructed sequences are
\begin{align}
\hatx^n&:=\phi_1(f_0(\bx^n),f_1(\bx^n)),\qquad
\haty^n:=\phi_2(f_0(\bx^n),f_2(\bx^n)).
\end{align}
\end{definition}
To define the performance criterion, define the following excess-distortion event:
\begin{align}
\calE_n&:=\{\bx^n\in\bbR^{2\times n}:d(x^n,\hatx^n)>D_1,\text{ or }d(y^n,\haty^n)>D_2\}, .
\end{align}
Thus, $\calE_n^c=\bbR^{2\times n}\setminus\calE_n$ is the non-excess-distortion event. Define the joint excess-distortion probability as
\begin{align}
\veps_n(D_1,D_2)&:=P_{XY}^n(\calE_n)=1-P_{XY}^n(\calE_n^c).\label{eq:Pe}
\end{align}
For ease of notation, we use $\veps_n$ to denote $\veps_n(D_1,D_2)$. The first-order asymptotics, a.k.a., the rate-distortion region, is defined as follows.
\begin{definition}
\label{def:first-order}
A rate tuple $(R_0,R_1,R_2)\in\bbR_{+,0}^3$ is said to be first-order achievable if there exists a sequence of $(n,M_0,M_1,M_2)$ codes such that for each $j\in[0:2]$,
\begin{align}
\limsup_{n\to\infty}\frac{\log M_{j,n}}n&\leq R_j,\qquad\lim_{n\to\infty}\veps_n=0.
\end{align}
The convex closure of all first-order achievable rate tuples is the first-order rate region, denoted by $\calR(D_1,D_2|P_{XY})$.
\end{definition}

In existing studies including the original paper of Gray and Wyner~\cite{graywyner1974}, the performance criterion is the following expected distortion criteria instead of the vanishing excess-distortion probability:

\begin{align}
\limsup_{n\to\infty}\bbE_{P_{XY}^n}[d(X^n,\hatX^n)]&\leq D_1,
\\
\limsup_{n\to\infty}\bbE_{P_{XY}^n}[d(Y^n,\hatY^n)]&\leq D_2.
\end{align}

Denote the corresponding first-order rate region under the expected distortion criterion as $\calR_{\rmE}(D_1,D_2|P_{XY})$. The following lemma shows that the two distortion criteria lead to the same first-order rate region for bivariate Gaussian source.
\begin{lemma}
\label{lem:first-order-criteria}
It follows that
\begin{align}
\calR(D_1,D_2|P_{XY})&=\calR_{\rmE}(D_1,D_2|P_{XY})=\calR^*(D_1,D_2|P_{XY}),
\end{align}
where
\begin{align}
\nn&\calR^*(D_1,D_2|P_{XY})\\*
&:=\bigcup_{\substack{P_{W|XY},\,P_{\hatX|XW},\,P_{\hatY|YW}:\\
\bbE[(X-\hatX)^2]\leq D_1,\\ \bbE[(Y-\hatY)^2]\leq D_2}}
\Big\{(R_0,R_1,R_2)\in\bbR_{+,0}^3: R_0\geq I(X,Y;W),\ R_1\geq I(X;\hatX|W),\ R_2\geq I(Y;\hatY|W)\Big\}.
\end{align}
\end{lemma}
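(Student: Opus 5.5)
We will establish the two equalities by proving the three inclusions $\calR^*\subseteq\calR$, $\calR\subseteq\calR_{\rmE}$ (or directly $\calR\subseteq\calR^*$), and $\calR_{\rmE}\subseteq\calR^*$. The last is the classical Gray--Wyner converse~\cite{graywyner1974}. Its proof is a standard single-letterization: set $W_i:=(S_0,X^{i-1},Y^{i-1})$, bound $\log M_0\ge I(X^n,Y^n;S_0)=\sum_i I(X_i,Y_i;W_i)$ and $\log M_1\ge I(X^n;\hatX^n|S_0)\ge\sum_i I(X_i;\hatX_i|W_i)$, and similarly for $M_2$. We then time-share with a uniform index $Q$ absorbed into $W$ and use convexity of the region in the distortion pair. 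That argument is valid for continuous alphabets with finite second moments. The only delicate point is that the expected distortions satisfy $\limsup\le D_j$ instead of $\le D_j$ exactly. We handle this by proving the converse with $D_j+\delta$ and then using continuity of $\calR^*$ in $(D_1,D_2)$; this continuity follows from convexity of the region in $(D_1,D_2)$ together with the standing assumption $D_j<\sigma^2$, since a convex function is continuous in the interior of its domain.

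For $\calR\subseteq\calR_{\rmE}$ we modify an excess-distortion code so its expected distortion is controlled. On the excess event $\calE_n$, which has probability $\veps_n\to0$, we spend $O(\log n)$ extra bits, a negligible rate. These bits either send the all-zero reconstruction, or quantize $\norm{x^n}^2/n$ and $\norm{y^n}^2/n$ to a fine grid and send a scaled reconstruction. The zero reconstruction gives distortion $\norm{x^n}^2/n$ on $\calE_n$, whose contribution to the expectation is $\bbE[\frac1n\norm{X^n}^2\bbI_{\calE_n}]$. Since $\frac1n\norm{X^n}^2$ is uniformly integrable (its second moment is bounded by $3\sigma_X^4$), this contribution tends to $0$ as $P(\calE_n)\to0$. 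Hence $\limsup\bbE[d]\le D_j$ without any rate change. The same device works for $Y$.

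For $\calR^*\subseteq\calR$ we need achievability under the vanishing excess-distortion criterion, which is stronger than what Gray--Wyner prove. We first reduce to finite (or Gaussian) test channels. Following Yu~\cite{yu2023}, the extreme points of $\calR^*$ are attained by jointly Gaussian $(X,Y,W,\hatX,\hatY)$, but it is simpler to discretize. Given any feasible $P_{W|XY}$, $P_{\hatX|XW}$, $P_{\hatY|YW}$ with distortions strictly below $(D_1,D_2)$, reached by a small perturbation and the continuity argument above, we quantize $X,Y,W,\hatX,\hatY$ finely. The mutual informations change by at most $\delta$ and the distortions by at most $\delta$, which is justified by lower semicontinuity and data processing. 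We then run the standard Gray--Wyner random coding scheme: a codebook of $W^n$ at rate $I(X,Y;W)+\delta$ and superposed conditional codebooks of $\hatX^n$, $\hatY^n$, with joint typicality encoding. The covering lemma gives vanishing failure probability. On success, joint typicality gives $d(x^n,\hatx^n)\le\bbE d+\delta<D_1$, since $d$ is bounded on the quantized alphabet after truncation and the truncation tails are handled by the law of large numbers. Thus $P(\calE_n)\to0$. Finally, taking the convex closure matches the definition of $\calR$.

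The main obstacle is this achievability step with unbounded Gaussian alphabets: typicality arguments need finite alphabets. We will handle it by clipping $|X|,|Y|\le A$ and quantizing. The per-letter distortion contributed by the clipped tails is controlled by the weak law of large numbers applied to $\frac1n\sum (X_i^2+\hatX_i^2)\bbI\{|X_i|>A\}$, whose mean tends to $0$ as $A\to\infty$. This keeps the excess-distortion probability vanishing.
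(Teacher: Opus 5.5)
Your architecture is the paper's. The paper's whole proof is the cycle $\calR^*\subseteq\calR\subseteq\calR_{\rmE}\subseteq\calR^*$, with all three inclusions attributed to Gray--Wyner. You supply actual arguments instead, and two of them add real content:
\begin{enumerate}
\item The uniform-integrability argument for $\calR\subseteq\calR_{\rmE}$: add one extra common index flagging $\calE_n$, output $\bzero_n$ there, and use Cauchy--Schwarz to get $\bbE[\frac1n\norm{X^n}^2\bbI_{\calE_n}]\le\sqrt3\,\sigma_X^2\sqrt{\veps_n}\to0$.
\item The clipping/quantization scheme with a law-of-large-numbers tail bound. It gives vanishing \emph{excess}-distortion probability for an unbounded Gaussian source directly, rather than leaning on an expected-distortion achievability result.
\end{enumerate}

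However, one step in your converse is false as written. With $W_i=(S_0,X^{i-1},Y^{i-1})$, the inequality $I(X^n;\hatX^n|S_0)\ge\sum_i I(X_i;\hatX_i|W_i)$ does not hold in general. The chain rule on $X^n$ alone conditions only on $(S_0,X^{i-1})$, and additionally conditioning on $Y^{i-1}$ can increase the mutual information. For example, take $n=2$, $M_0=1$, $M_1=2$, $\hatX_1=0$ and $\hatX_2=\mathrm{sign}(X_2)\,\mathrm{sign}(Y_1)$. The right-hand side equals $I(X_2;\hatX_2|X_1,Y_1)=\log 2$. The left-hand side equals $I(X_1,X_2;\hatX_2)=\log2-H(\mathrm{sign}(Y_1)|X_1)<\log 2$, because $\rho_{XY}<1$.

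The repair is to keep $Y^n$ in the chain:
$\log M_1\ge H(S_1|S_0)\ge I(X^n,Y^n;S_1|S_0)=\sum_i I(X_i,Y_i;S_1|W_i)\ge\sum_i I(X_i;\hatX_i|W_i)$.
The last step uses that $S_0$ is a component of $W_i$ and that $\hatX_i$ is a function of $(S_0,S_1)$. The same argument handles $M_2$.

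Two smaller points in the converse also need attention.
\begin{enumerate}
\item Given $(X_i,W_i)$, $\hatX_i$ still depends on $Y_i$ in general. To land in the union defining $\calR^*$, replace the joint law by $P_{X_iY_iW_i}P_{\hatX_i|X_iW_i}P_{\hatY_i|Y_iW_i}$. Nothing changes, since every constraint involves only the $(X,W,\hatX)$ or $(Y,W,\hatY)$ marginal.
\item The fact that a convex function is continuous in the interior of its domain gives continuity in $(D_1,D_2)$ at fixed private rates. Absorbing your rate slack, however, also needs lower semicontinuity in $(R_1,R_2)$ at $R_j=0$. That point lies on the boundary of the domain and is exactly regimes (c) and (d) of Lemma~\ref{lem:boundary-classification}. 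There you can use the continuity of Yu's closed form $\Upsilon_\rho$, including on the axes. Alternatively, first move the private slack onto the common link, $(R_0+\delta,R_1+\delta,R_2+\delta)\to(R_0+3\delta,R_1,R_2)$, so that only continuity in $(D_1,D_2)$ at fixed private rates is needed.
\end{enumerate}
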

\begin{proof}
Gray and Wyner~\cite{graywyner1974} showed that 1) $\calR^*(D_1,D_2|P_{XY})\subseteq\calR(D_1,D_2|P_{XY})$; 2) $\calR(D_1,D_2|P_{XY})\subseteq\calR_{\rmE}(D_1,D_2|P_{XY})$; 3) $\calR_{\rmE}(D_1,D_2|P_{XY})\subseteq\calR^*(D_1,D_2|P_{XY})$. Thus, these three sets are identical.
\end{proof}
Given private rates $(R_1,R_2)\in\bbR_{+,0}^2$, define the minimal common rate as
\begin{align}
\label{def:mincommon}
\rvR_0(R_1,R_2,D_1,D_2|P_{XY})
&:=\inf\big\{r_0\in\bbR_{+,0}:(r_0,R_1,R_2)\in\calR(D_1,D_2|P_{XY})\big\}.
\end{align}
In the next subsection, we introduce the explicit expressions of $\rvR_0(R_1,R_2,D_1,D_2|P_{XY})$ and $\calR(D_1,D_2|P_{XY})$ for the Gaussian Gray-Wyner problem.

\subsection{Existing First-Order Asymptotics}
\label{sec:cov-first}

Fix a covariance matrix $\bSigma=[\Sigma_{jk}]_{(j,k)\in[2]^2}\in\bbS_{++}^2$ with $\Sigma_{jj}>D_j$ and $\Sigma_{12}>0$. Let $Q_{XY}^{\bSigma}:=\calN(\bzero_2,\bSigma)$ be the two-dimensional Gaussian distribution and let $(Q_X^{\bSigma},Q_Y^{\bSigma})$ be the marginal distributions. The Gaussian rate-distortion functions~\cite[Theorem~10.3.2, Eq.~(10.24)]{coverthomas2006} are
\begin{align}
R_X(Q_X^{\bSigma},D_1)&=\frac12\log\frac{\Sigma_{11}}{D_1},\label{eq:RG-S}\\
R_Y(Q_Y^{\bSigma},D_2)&=\frac12\log\frac{\Sigma_{22}}{D_2}.\label{eq:RY-S}
\end{align}
Furthermore, define the correlation coefficient: 
\begin{align}
\rho(\bSigma)&:=\frac{\Sigma_{12}}{\sqrt{\Sigma_{11}\Sigma_{22}}},\label{eq:rho-S} 
\end{align}
Fix two non-negative real numbers $(r_1,r_2)\in\bbR_{+,0}^2$. Define 
\begin{align}
\alpha(\bSigma;r_1,D_1)&:=\big[R_X(Q_X^{\bSigma},D_1)-r_1\big]^+,\label{eq:alpha-S}\\
\beta(\bSigma;r_2,D_2)&:=\big[R_Y(Q_Y^{\bSigma},D_2)-r_2\big]^+.\label{eq:beta-S}
\end{align} 
Given a non-negative real number $t\in\bbR_{+,0}$, define
\begin{align}
\nu_t&:=\sqrt{1-e^{-2t}}.\label{eq:nu-def}
\end{align}
Given any $(\rho,\alpha,\beta)\in(0,1)\times\bbR_{+,0}^2$, define
\begin{align}
r_{\alpha,\beta}&:=(\rho-\nu_\alpha\nu_\beta)e^{\alpha+\beta}.\label{eq:rho-ab}
\end{align}
Finally, define the following four sets:
\begin{align}
\calP_1&:=\{(\alpha,\beta)\in\bbR_+^2:\nu_\alpha \nu_\beta\leq\rho\leq\min\{\nu_\beta/\nu_\alpha,\nu_\alpha/\nu_\beta\}\},\label{eq:D1}\\
\calP_2&:=\{(\alpha,\beta)\in\bbR_+^2:\rho\leq \nu_\alpha \nu_\beta\},
\\
\calP_3&:=\{(\alpha,\beta)\in\bbR_+^2:\rho>\nu_\beta/\nu_\alpha\},
\\
\calP_4&:=\{(\alpha,\beta)\in\bbR_+^2:\rho>\nu_\alpha/\nu_\beta\},\label{eq:D4}
\end{align}
and the following functions:
\begin{align}
\label{eq:Upsilon}
\Upsilon_\rho(\alpha,\beta)&:=
\begin{cases}
\alpha+\beta-\dfrac12\log\dfrac{1-r_{\alpha,\beta}^2}{1-\rho^2},&\text{if }(\alpha,\beta)\in\calP_1,\\
\alpha+\beta+\dfrac12\log(1-\rho^2),&\text{if }(\alpha,\beta)\in\calP_2,\\
\alpha,&\text{if }(\alpha,\beta)\in\calP_3,\\
\beta,&\text{if }(\alpha,\beta)\in\calP_4
\end{cases}.
\end{align}
When $(\alpha,\beta)\in\bbR_{+,0}^2$ satisfies $\alpha\beta=0$, define
\begin{align}
\label{eq:Upsilon-axes}
\Upsilon_\rho(\alpha,\beta)&:=\max\{\alpha,\beta\}.
\end{align}
The following lemma is obtained~\cite[Corollary~2, Theorem~3, Eq.~(8)]{yu2023} by rescaling the source parameters to unit variance. Recall that $\rho_{XY}$ is the correlation parameter for the source distribution $P_{XY}$ in \eqref{eq:source-cov}.
\begin{lemma}
\label{prop:first-order-selection}
For any $(R_1,R_2)\in\bbR_{+,0}^2$, the minimum common rate satisfies
\begin{align}
\label{eq:R0-yu}
\rvR_0(R_1,R_2,D_1,D_2|P_{XY})&=\Upsilon_{\rho_{XY}}\big([R_X(P_X,D_1)-R_1]^+,[R_Y(P_Y,D_2)-R_2]^+\big).
\end{align}
\end{lemma}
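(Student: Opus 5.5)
The plan is to reduce the statement to the unit-variance case treated by Yu~\cite[Corollary~2, Theorem~3, Eq.~(8)]{yu2023} through a per-letter rescaling of the sources, and then to use Lemma~\ref{lem:first-order-criteria} to move between the excess-distortion region $\calR(D_1,D_2|P_{XY})$ and the single-letter region $\calR^*(D_1,D_2|P_{XY})$, which is the object Yu characterizes.

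\textbf{Step 1: rescaling.} Set $\tilde X:=X/\sigma_X$ and $\tilde Y:=Y/\sigma_Y$. Then $(\tilde X,\tilde Y)$ is a centered bivariate Gaussian with unit variances and correlation $\rho_{XY}$. The map $\hatX\mapsto\hatX/\sigma_X$ is a bijection between reconstructions, and it satisfies $\bbE[(X-\hatX)^2]=\sigma_X^2\,\bbE[(\tilde X-\hatX/\sigma_X)^2]$; the same holds for $Y$. Mutual informations are invariant under invertible scalings of the variables. Hence the rescaling maps $\calR^*(D_1,D_2|P_{XY})$ bijectively onto $\calR^*(D_1/\sigma_X^2,D_2/\sigma_Y^2|P_{\tilde X\tilde Y})$, and in fact the two sets are equal. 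Combining this with Lemma~\ref{lem:first-order-criteria} gives the same equality for $\calR$, and therefore also for $\rvR_0$ in \eqref{def:mincommon}. The code-level scaling argument is equally direct if one prefers to avoid the single-letter form.

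\textbf{Step 2: invoke Yu's formula.} For a unit-variance pair with correlation $\rho$ and distortions $(d_1,d_2)\in(0,1)^2$, Yu expresses the minimal common rate at private rates $(R_1,R_2)$ as $\Upsilon_\rho(\alpha,\beta)$. Here
\[
\alpha=\Big[\tfrac12\log\tfrac1{d_1}-R_1\Big]^+,\qquad \beta=\Big[\tfrac12\log\tfrac1{d_2}-R_2\Big]^+ .
\]
The quantities $\alpha$ and $\beta$ are the rates that the common message must supply for the $X$ and $Y$ parts. The function $\Upsilon_\rho$ is given piecewise over the regions $\calP_1,\ldots,\calP_4$, with the axis convention \eqref{eq:Upsilon-axes}. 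Substituting $d_1=D_1/\sigma_X^2$ and $d_2=D_2/\sigma_Y^2$ gives $\tfrac12\log(1/d_1)=R_X(P_X,D_1)$ by \eqref{eq:RG-S}, and likewise $\tfrac12\log(1/d_2)=R_Y(P_Y,D_2)$. This is exactly \eqref{eq:R0-yu}.

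\textbf{Main obstacle: matching conventions.} The main work is checking that Yu's parametrization matches the one used here, since his statement is written in terms of other auxiliary quantities (for example, Wyner-type common information expressions). Two checks are needed:
\begin{itemize}
\item The quantity $\nu_t=\sqrt{1-e^{-2t}}$ must be the correlation between a unit-variance Gaussian and the part of it conveyed at rate $t$. The quantity $r_{\alpha,\beta}$ must be the resulting residual conditional correlation, scaled by $e^{\alpha+\beta}$.
\item The case boundaries $\calP_1$ through $\calP_4$ must agree with Yu's. In particular, in $\calP_3$ and $\calP_4$ one source's description alone suffices for the other (giving $\Upsilon_\rho=\alpha$ or $\beta$), and in $\calP_2$ the answer is $\alpha+\beta+\tfrac12\log(1-\rho^2)$, that is, $\alpha+\beta$ minus the mutual information $-\tfrac12\log(1-\rho^2)$ of the unit-variance pair.
\end{itemize}
To confirm the match, I would check continuity of $\Upsilon_\rho$ across the region boundaries. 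At $\rho=\nu_\alpha\nu_\beta$ we have $r_{\alpha,\beta}=0$, and the $\calP_1$ and $\calP_2$ expressions coincide. At $\rho=\nu_\beta/\nu_\alpha$ the $\calP_1$ expression reduces to $\alpha$. Finally, I would verify that the degenerate cases $\alpha\beta=0$ give $\max\{\alpha,\beta\}$: the common message then only needs to serve one decoder, and that case falls back to the point-to-point rate-distortion function.
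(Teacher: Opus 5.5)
Your proposal is correct and follows the paper's own argument: the paper only cites \cite[Corollary~2, Theorem~3]{yu2023} after rescaling $(X,Y)$ to unit variances, so that $(d_1,d_2)=(D_1/\sigma_X^2,D_2/\sigma_Y^2)$ and $\tfrac12\log(1/d_1)=R_X(P_X,D_1)$, with Lemma~\ref{lem:first-order-criteria} implicitly carrying Yu's single-letter characterization over to the excess-distortion region, just as you do. Your interface-continuity checks are useful sanity checks but do not by themselves prove that the conventions agree; that match comes from reading Yu's statement directly, which the paper also takes on citation.
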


It follows from Lemma~\ref{prop:first-order-selection} that the first-order rate region in Definition~\ref{def:first-order} is
\begin{align}
\label{eq:first-order-epigraph}
\calR(D_1,D_2|P_{XY})
&=\big\{(R_0,R_1,R_2)\in\bbR_{+,0}^3:
R_0\geq\Upsilon_{\rho_{XY}}\big([R_X(P_X,D_1)-R_1]^+,[R_Y(P_Y,D_2)-R_2]^+\big)\big\}.
\end{align}
For each fixed private rate pair $(R_1,R_2) \in \bbR_{+,0}^2$, equality in~\eqref{eq:first-order-epigraph} determines the corresponding minimum common rate.

\subsection{Definitions of Second-Order Asymptotics}
\label{sec:second-order-definitions}
\begin{figure}[t]
\centering
\includegraphics{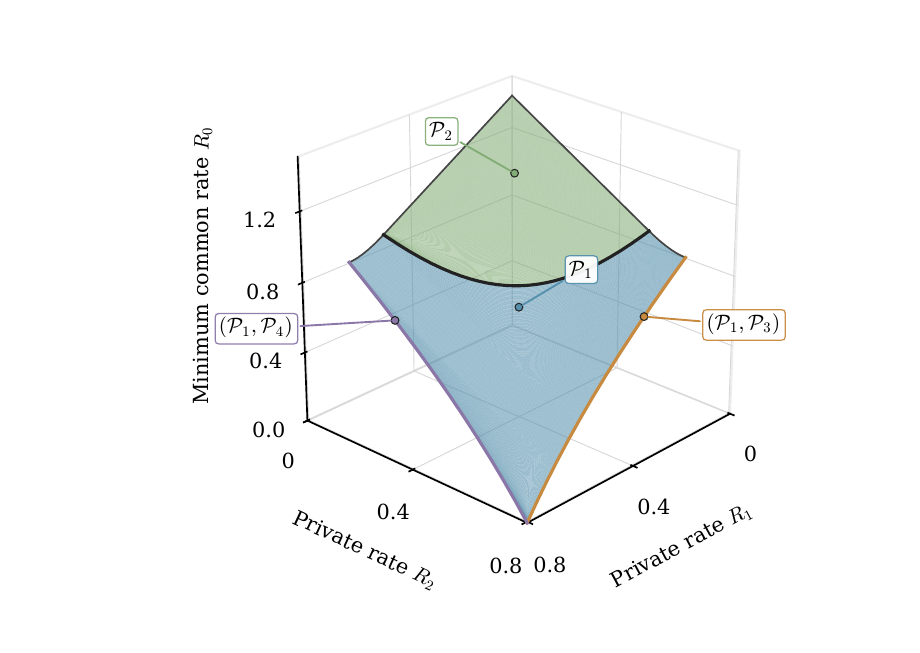}
\caption{Boundary rate tuples with positive common rate when $\sigma_X^2=\sigma_Y^2=1$, $D_1=D_2=0.2$, and $\rho_{XY}=0.6$. The blue and green patches correspond to $(\calP_1,\calP_2)$ and the boundary of the two regions are the black curve. The orange and purple curves mark the $(\calP_1,\calP_3)$ and $(\calP_1,\calP_4)$ interfaces, respectively.}
\label{fig:first-order-regimes}
\end{figure}

To define second-order asymptotics, we need to first specify the boundary rate tuples of the first-order rate-region.
\begin{definition}
\label{def:boundary}
A first-order rate tuple $\br^*=(R_0^*,R_1^*,R_2^*)\in\calR(D_1,D_2|P_{XY})\subset\bbR_{+,0}^3$ is a boundary rate tuple if it is Pareto-minimal in the sense that reducing any one of the three rates would make the rate tuple not in the first-order rate region anymore. Specifically, for each $j\in[0:2]$ and any $0<t\leq R_j^*$,
\begin{align}
\br^*-t\be_{j+1}&\notin\calR(D_1,D_2|P_{XY}).
\end{align}
\end{definition}
Recall the source covariance matrix $\bSigma_{XY}$ in~\eqref{eq:source-cov}, whose diagonal entries are $[\bSigma_{XY}]_{11}=\sigma_X^2$ and $[\bSigma_{XY}]_{22}=\sigma_Y^2$. For a first-order rate tuple $\br^*=(R_0^*,R_1^*,R_2^*)\in\calR(D_1,D_2|P_{XY})\subset\bbR_{+,0}^3$, define
\begin{align}
(\alpha^*,\beta^*)&:=\big(R_X(P_X,D_1)-R_1^*,R_Y(P_Y,D_2)-R_2^*\big).\label{eq:alpha-star}
\end{align}
The following lemma gives exact conditions for a rate tuple to be a boundary rate tuple.
\begin{lemma}
\label{lem:boundary-classification}
A first-order rate tuple $\br^*=(R_0^*,R_1^*,R_2^*)$ with $R_0^*>0$ is a boundary rate tuple if and only if $R_0^*=\Upsilon_{\rho_{XY}}(\alpha^*,\beta^*)$ and $(\alpha^*,\beta^*)\in\bbR_+^2$ satisfies one of the following four conditions:
\begin{enumerate}[(a)]
\item {$(\alpha^*,\beta^*)\in\calP_1$, which implies}
\begin{align}
R_0^*+R_1^*+R_2^*&=R_X(P_X,D_1)+R_Y(P_Y,D_2)-\frac12\log\frac{1-r_{\alpha^*,\beta^*}^2}{1-\rho_{XY}^2}.
\end{align}
\item {$(\alpha^*,\beta^*)\in\calP_2$, which implies}
\begin{align}
R_0^*+R_1^*+R_2^*&=R_X(P_X,D_1)+R_Y(P_Y,D_2)+\frac12\log(1-\rho_{XY}^2).
\end{align}
\item {$(\alpha^*,\beta^*)\in\calP_3$ and $R_2^*=0$, which implies $R_0^*+R_1^*=R_X(P_X,D_1)$.}
\item {$(\alpha^*,\beta^*)\in\calP_4$ and $R_1^*=0$, which implies $R_0^*+R_2^*=R_Y(P_Y,D_2)$.}
\end{enumerate}
\end{lemma}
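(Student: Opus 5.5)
Since the first-order region \eqref{eq:first-order-epigraph} is the epigraph of $(R_1,R_2)\mapsto\Upsilon_{\rho_{XY}}([R_X(P_X,D_1)-R_1]^+,[R_Y(P_Y,D_2)-R_2]^+)$, the proof reduces Pareto-minimality to monotonicity properties of $\Upsilon_\rho$, using Lemma~\ref{prop:first-order-selection} throughout.

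\emph{Common rate.} Decreasing $R_0^*$ leaves the region if and only if $R_0^*$ equals the minimum common rate. This gives the condition $R_0^*=\Upsilon_{\rho_{XY}}(\alpha^*,\beta^*)$, at least when $\alpha^*,\beta^*\ge 0$.

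\emph{Private rates are not wasted.} If $\alpha^*<0$, then $R_1^*>R_X(P_X,D_1)$. Lowering $R_1^*$ slightly keeps $[\cdot]^+=0$, so the minimum common rate is unchanged and $\br^*$ is not Pareto-minimal. Hence $\alpha^*,\beta^*\ge0$. If $\alpha^*=0$, then \eqref{eq:Upsilon-axes} gives $\Upsilon=\beta^*$. Lowering $R_1^*$ by $t$ makes $\Upsilon(t,\beta^*)$, which for small $t$ lies in $\calP_4$ (because $\nu_t/\nu_{\beta^*}\to0<\rho$) and hence equals $\beta^*$. So $R_1^*$ is reducible and $\br^*$ is not boundary, unless $R_1^*=0$. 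This degenerate sub-case has to be folded into (d), or (c) by symmetry, by checking the definitions; I expect it to be handled by noting that $R_1^*=0$ forces $\alpha^*=R_X>0$. So we may take $(\alpha^*,\beta^*)\in\bbR_+^2$, and the four sets $\calP_1,\dots,\calP_4$ cover this quadrant.

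\emph{Region-by-region strict monotonicity.} A reduction of $R_1^*$ by $t$ is feasible at the same $R_0^*$ if and only if $\Upsilon(\alpha^*+t,\beta^*)\le\Upsilon(\alpha^*,\beta^*)$. So $\br^*$ is boundary if and only if $\Upsilon$ is strictly increasing in $\alpha$ at $(\alpha^*,\beta^*)$ to the right (whenever $R_1^*>0$), and likewise in $\beta$.

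In $\calP_3$, $\Upsilon=\alpha$ is strictly increasing in $\alpha$. It is constant in $\beta$ as long as we stay in $\calP_3$, and increasing $\beta$ keeps $\nu_\beta/\nu_\alpha<\rho$ only for small increments, so it stays in $\calP_3$ locally. Hence $R_2^*$ can be reduced unless $R_2^*=0$, which gives (c); then $R_0^*+R_1^*=\alpha^*+R_1^*=R_X$. Case (d) is symmetric.

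In $\calP_2$, $\Upsilon=\alpha+\beta+\tfrac12\log(1-\rho^2)$, which is strictly increasing in both arguments, and continuity across region interfaces handles the boundary points.

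In $\calP_1$ I must show $\partial_\alpha\Upsilon>0$ and $\partial_\beta\Upsilon>0$. Compute $\partial_\alpha r_{\alpha,\beta}=e^{\alpha+\beta}\bigl(\rho-\nu_\alpha\nu_\beta-\nu_\beta e^{-2\alpha}/\nu_\alpha\bigr)$, using $\nu_\alpha'=e^{-2\alpha}/\nu_\alpha$. Then
\[
\partial_\alpha\Upsilon=1+\frac{r\,\partial_\alpha r}{1-r^2}.
\]
Positivity reduces to an algebraic inequality in $(\rho,\nu_\alpha,\nu_\beta)$ on the constraints $\nu_\alpha\nu_\beta\le\rho\le\min\{\nu_\beta/\nu_\alpha,\nu_\alpha/\nu_\beta\}$. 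I expect to show it is equivalent to $\rho\nu_\alpha\le\nu_\beta$, or a variant of it. This calculus step, together with gluing the derivatives across the interfaces (where $\Upsilon$ is continuous and piecewise smooth), is the main obstacle. Since $R_1^*,R_2^*$ may be positive in $\calP_1$ and $\calP_2$, no extra condition appears in those cases.

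The sum-rate identities in (a) and (b) follow by substituting $R_0^*=\Upsilon$ and \eqref{eq:alpha-star}.

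The converse direction follows from the same monotonicity: in each case no coordinate can be lowered. In cases (c) and (d) the private rate that equals zero cannot be lowered at all, and the other rates are blocked by strict monotonicity.
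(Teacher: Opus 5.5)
Your overall route is the paper's. You reduce Pareto-minimality to strict increase of $\Upsilon_{\rho_{XY}}$ in each argument, rule out the axes using local constancy of $\Upsilon$ in the ``other'' variable on $\calP_3$ and $\calP_4$, force $R_2^*=0$ on strict $\calP_3$ (and $R_1^*=0$ on $\calP_4$), and obtain the sum-rate identities by substitution. Your sub-case $\alpha^*=0$, $R_1^*=0$ is simply empty, because $\alpha^*=0$ means $R_1^*=R_X(P_X,D_1)>0$; nothing needs to be folded into (d). The case $\alpha^*=\beta^*=0$ is excluded by $R_0^*>0$.

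\textbf{The gap: the $\calP_1$ step.} This step carries the ``if'' direction of (a). You leave it open, and the target you set for it is false. Completing your computation gives $\partial_\alpha r_{\alpha,\beta}=r-u$ with $u=(\nu_\beta/\nu_\alpha)e^{\beta-\alpha}$, so $\partial_\alpha\Upsilon_\rho=(1-ru)/(1-r^2)$. Moreover, $ru<1$ holds if and only if $\rho\nu_\beta<\nu_\alpha$. The condition $\rho\nu_\alpha\le\nu_\beta$ that you guessed instead governs $\partial_\beta\Upsilon_\rho=(1-r/u)/(1-r^2)$.

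The inequalities defining $\calP_1$ are non-strict, so $\calP_1$ contains two interfaces where a derivative vanishes:
\begin{itemize}
\item the interface $\rho=\nu_\alpha/\nu_\beta$ with $\calP_4$, where $\partial_\alpha\Upsilon_\rho=0$;
\item the interface $\rho=\nu_\beta/\nu_\alpha$ with $\calP_3$, where $\partial_\beta\Upsilon_\rho=0$.
\end{itemize}
Points on these interfaces can have both private rates positive, and they are genuine boundary tuples that (a) must cover. So ``$\partial_\alpha\Upsilon>0$ on $\calP_1$'' cannot be proved, and a pointwise sign check of the derivative does not give Pareto-minimality there.

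\textbf{How to repair it.} You need strict increase, not a positive derivative. At an interface point, increasing $\alpha$ immediately enters $\{\rho\nu_\beta<\nu_\alpha\}$, where the derivative is positive. If the path later leaves $\calP_1$, it enters $\calP_2$ or $\calP_3$, where the slope is $1$ and the formulas agree at the boundary. Hence $\Upsilon(\alpha^*+t,\beta^*)>\Upsilon(\alpha^*,\beta^*)$ for every $t>0$.

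You also need $1-r^2>0$ on $\calP_1$. This follows from $0\le r\le\min\{u,u^{-1}\}$; when $u=1$ one has $\alpha=\beta$ and $r=1-(1-\rho)e^{2\alpha}<1$.

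The paper avoids the calculus altogether. On $\calP_1$ it rewrites
$\Upsilon_\rho(\alpha,\beta)=\beta-\frac12\log\bigl(1-\frac{(\nu_\alpha-\rho\nu_\beta)^2}{(1-\nu_\beta^2)(1-\rho^2)}\bigr)$.
Since $\nu_\alpha-\rho\nu_\beta\ge0$ on $\calP_1$ and is strictly increasing in $\alpha$, its square strictly increases even when it starts at zero. This handles the interior and the interfaces at once, and the same argument applies symmetrically in $\beta$.
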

The proof of Lemma~\ref{lem:boundary-classification} is provided in Appendix~\ref{app:boundary-classification}. Lemma~\ref{lem:boundary-classification} implies that every boundary rate tuple with positive common rate belongs to one of the above four regimes. In Fig.~\ref{fig:first-order-regimes}, we illustrate the boundary rate tuples for a symmetric source with equal distortion levels.

We can now define the second-order asymptotics, which will be characterized in this paper.
\begin{definition}
\label{def:second}
Fix any boundary rate tuple $\br^*=(R_0^*,R_1^*,R_2^*)$ with $R_0^*>0$ and $\veps\in(0,1)$. A tuple $(L_0,L_1,L_2)\in\bbR^3$ is said to be second-order $(\br^*,D_1,D_2,\veps)$-achievable if there exists a sequence of $(n,M_0,M_1,M_2)$-codes such that for each $j\in[0:2]$,
\begin{align}
\limsup_{n\to\infty}\frac{\log M_{j,n}-nR_j^*}{\sqrt n}&\leq L_j,\label{eq:second-rates}
\end{align}
and
\begin{align}
\limsup_{n\to\infty}\veps_n&\leq\veps.\label{eq:second-error}
\end{align}
The closure of all second-order $(\br^*,D_1,D_2,\veps)$-achievable tuples is called the $(\br^*,D_1,D_2,\veps)$-achievable second-order region and denoted by $\calL(\br^*,D_1,D_2,\veps)$.
\end{definition}

Finally, given a boundary rate tuple $\br^*=(R_0^*,R_1^*,R_2^*)$, define the index set for zero private rates as follows:
{\begin{align}
\label{eq:zero-index-set}
\calZ(\br^*)&:=\{j\in[2]:R_j^*=0\}.
\end{align}}
For $j\in\calZ(\br^*)$, we would require $L_j\ge 0$ to guarantee that the number of each private message is strictly positive.

\section{Second-Order Asymptotics}
\label{sec:main}
\subsection{Preliminaries}
\label{sec:critical-preliminaries}

Fix a boundary rate tuple $\br^*=(R_0^*,R_1^*,R_2^*)$ with $R_0^*>0$. Recall the definitions of $\rvR_0(R_1,R_2,D_1,D_2|P_{XY})$ in \eqref{def:mincommon}, $\Upsilon_\rho(\alpha,\beta)$ in \eqref{eq:Upsilon} and $(\alpha^*,\beta^*)$ in~\eqref{eq:alpha-star}. The following lemma proves the differentiability of the minimal common rate.
\begin{proposition}
\label{ass:regularity}\label{prop:regularity}
The following results hold.
\begin{enumerate}
\item The function $(\rho,\alpha,\beta) \mapsto \Upsilon_\rho(\alpha,\beta)$ is continuously differentiable on $(0,1)\times \bbR_{+}^2$.
\item For every boundary rate tuple $\br^*$ with $R_0^*>0$, the function $(\textcolor{black}{\bSigma},r_1,r_2)\mapsto \rvR_0(r_1,r_2,D_1,D_2|Q_{XY}^{\textcolor{black}{\bSigma}})$ is continuously differentiable in the neighborhood of $(\textcolor{black}{\bSigma_{XY}},R_1^*,R_2^*)$ in $\bbS_{++}^2\times\bbR^2$. When either private rate equals zero, the derivative is defined from right (above zero).
\end{enumerate}
\end{proposition}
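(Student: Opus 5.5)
The plan is to prove the two items in order: item~1 by a piecewise computation that glues the four formulas in \eqref{eq:Upsilon} together, and item~2 by composing $\Upsilon$ with smooth maps via Lemma~\ref{prop:first-order-selection}.

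\textbf{Item 1, smooth pieces.} On $(0,1)\times\bbR_+^2$ we have $\nu_\alpha,\nu_\beta\in(0,1)$, and these are smooth in $(\alpha,\beta)$. I will first check that $\calP_1,\dots,\calP_4$ cover $\bbR_+^2$ for each fixed $\rho$. I will then describe their relative positions.
\begin{itemize}
\item Since $\nu_\alpha\nu_\beta<\min\{\nu_\beta/\nu_\alpha,\nu_\alpha/\nu_\beta\}$, the closures of $\calP_2$ and $\calP_3\cup\calP_4$ are disjoint.
\item $\calP_3$ and $\calP_4$ cannot meet, because $\nu_\beta/\nu_\alpha$ and $\nu_\alpha/\nu_\beta$ cannot both be below $\rho<1$.
\end{itemize}
Hence the only interfaces are the surfaces $\rho=\nu_\alpha\nu_\beta$ (between $\calP_1$ and $\calP_2$), $\rho=\nu_\beta/\nu_\alpha$ (between $\calP_1$ and $\calP_3$), and $\rho=\nu_\alpha/\nu_\beta$ (between $\calP_1$ and $\calP_4$). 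In the interior of each piece the formula is clearly $C^1$ in $(\rho,\alpha,\beta)$. On $\calP_1$ this uses $|r_{\alpha,\beta}|<1$, which I will verify from the defining inequalities. So it suffices to show that the values and the full gradients in $(\rho,\alpha,\beta)$ agree across each interface.

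\textbf{Item 1, interfaces.}
\begin{itemize}
\item \emph{Interface $\calP_1/\calP_2$.} Here $r_{\alpha,\beta}=0$, so the two formulas coincide. The gradient of $-\tfrac12\log(1-r^2)$ equals $\frac{r}{1-r^2}\nabla r$, which vanishes at $r=0$. Therefore the gradients also match.
\item \emph{Interface $\calP_1/\calP_3$.} Substituting $\rho=\nu_\beta/\nu_\alpha$ gives $r_{\alpha,\beta}=\nu_\beta e^{\beta-\alpha}/\nu_\alpha$, using $1-\nu_\alpha^2=e^{-2\alpha}$. I will verify that $\beta=\tfrac12\log\frac{1-r^2}{1-\rho^2}$ there, so the value is $\alpha$. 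For the gradients, I will compute the three partial derivatives of the $\calP_1$ formula and show they reduce to $(0,1,0)$ in $(\rho,\alpha,\beta)$ on this surface.
\item \emph{Interface $\calP_1/\calP_4$.} This follows from the $\calP_1/\calP_3$ case by the symmetry $\alpha\leftrightarrow\beta$.
\end{itemize}

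\textbf{Main obstacle.} The hardest step is the gradient matching on the $\calP_1/\calP_3$ interface, since brute-force differentiation is messy. I would first try a structural shortcut: in Yu's derivation, the $\calP_1$ expression is the value of an optimization over Gaussian auxiliaries $W$. On the interface, the constraint $I(Y;\hat Y|W)$-slack ($\beta$) becomes just inactive. An envelope-theorem argument, with the Lagrange multiplier of that constraint tending to $0$, would then give the derivative in $\beta$ equal to $0$ and the derivatives in $\alpha$ and $\rho$ equal to those of $\alpha$. If this is awkward to make rigorous, I will fall back on direct differentiation. In that computation I will use the interface relation to simplify $1-r^2$ and $1-\rho^2$ in terms of $e^{-2\alpha}$ and $e^{2\beta}$.

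\textbf{Item 2.} Fix a boundary $\br^*$ with $R_0^*>0$. By Lemma~\ref{lem:boundary-classification}, $(\alpha^*,\beta^*)\in\bbR_+^2$, and $\bSigma_{XY}$ satisfies $\Sigma_{12}>0$ and $\Sigma_{jj}>D_j$. By continuity, there is a neighborhood of $(\bSigma_{XY},R_1^*,R_2^*)$ on which:
\begin{itemize}
\item $\Sigma_{12}>0$ and $\Sigma_{jj}>D_j$;
\item $\rho(\bSigma)\in(0,1)$;
\item $a(\bSigma,r_1):=\tfrac12\log(\Sigma_{11}/D_1)-r_1>0$ and $b(\bSigma,r_2):=\tfrac12\log(\Sigma_{22}/D_2)-r_2>0$.
\end{itemize}
On this neighborhood the $[\cdot]^+$ operations in \eqref{eq:alpha-S}--\eqref{eq:beta-S} are inactive. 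Lemma~\ref{prop:first-order-selection}, applied to the source $Q^{\bSigma}_{XY}$, then gives
\[
\rvR_0(r_1,r_2,D_1,D_2|Q_{XY}^{\bSigma})=\Upsilon_{\rho(\bSigma)}\big(a(\bSigma,r_1),b(\bSigma,r_2)\big).
\]
The maps $\bSigma\mapsto\rho(\bSigma)$, $a$ and $b$ are smooth, so item~1 and the chain rule give continuous differentiability. When $R_j^*=0$, only $r_j\ge0$ is admissible. The formula above is nevertheless a $C^1$ function of $r_j$ on a two-sided interval around $0$, so the right derivative exists and equals the restriction of this derivative.
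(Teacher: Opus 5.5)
Your proposal is correct and follows essentially the same route as the paper. The paper shows $1-r_{\alpha,\beta}^2>0$ on $\calP_1$ and then matches values and gradients across the three interfaces. With $u=(\nu_\beta/\nu_\alpha)e^{\beta-\alpha}$, it uses $r_{\alpha,\beta}=0$ on the $\calP_1/\calP_2$ interface, and $r_{\alpha,\beta}=u$ with $1-r_{\alpha,\beta}^2=(1-\rho^2)e^{2\beta}$ on the $\calP_1/\calP_3$ interface. It then composes with $\bigl(\rho(\bSigma),\tfrac12\log(\Sigma_{11}/D_1)-r_1,\tfrac12\log(\Sigma_{22}/D_2)-r_2\bigr)$ on a neighborhood where the positive parts are inactive, extending to small negative $r_j$ to obtain the one-sided derivative.

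The envelope-theorem shortcut is unnecessary and, as stated, heuristic. The direct differentiation you keep as a fallback is exactly what the paper does and settles the step cleanly, via $\partial_\alpha r_{\alpha,\beta}=r_{\alpha,\beta}-u$, $\partial_\beta r_{\alpha,\beta}=r_{\alpha,\beta}-u^{-1}$ and $\partial_\rho r_{\alpha,\beta}=e^{\alpha+\beta}$.
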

The proof of Proposition~\ref{prop:regularity} is provided in Appendix~\ref{app:regularity}.

Now, let the derivative of $\Upsilon_\rho(\alpha,\beta)$ be
\begin{equation}\label{eq:Psi-gradient}
(\eta,\lambda_1^*,\lambda_2^*)=\left.\bigl(\partial_\rho\Upsilon_\rho(\alpha,\beta),\partial_\alpha\Upsilon_\rho(\alpha,\beta),\partial_\beta\Upsilon_\rho(\alpha,\beta)\bigr)\right|_{(\rho,\alpha,\beta)=(\rho_{XY},\alpha^*,\beta^*)}.
\end{equation}
It follows from Proposition~\ref{prop:regularity} and Lemma \ref{lem:boundary-classification} that
\begin{align}
\bg&:=\left.\nabla_{\textcolor{black}{\bSigma}}\rvR_0(r_1,r_2,D_1,D_2|Q_{XY}^{\textcolor{black}{\bSigma}})\right|_{(\textcolor{black}{\bSigma},r_1,r_2)=(\textcolor{black}{\bSigma_{XY}},R_1^*,R_2^*)}\label{def:bg}\\
&=\frac12\begin{bmatrix}
(\lambda_1^*-\rho_{XY}\eta)/\sigma_X^2&\eta/(\sigma_X\sigma_Y)\\
\eta/(\sigma_X\sigma_Y)&(\lambda_2^*-\rho_{XY}\eta)/\sigma_Y^2
\end{bmatrix}.\label{eq:G-gradient}
\end{align}
\begin{equation}
g_i:=\left.\partial_{r_i}\rvR_0(r_1,r_2,D_1,D_2|Q_{XY}^{\textcolor{black}{\bSigma}})\right|_{(\textcolor{black}{\bSigma},r_1,r_2)=(\textcolor{black}{\bSigma_{XY}},R_1^*,R_2^*)}=-\lambda_i^*,\qquad i\in[2].\label{eq:private-rate-gradient}
\end{equation}
Let $\blambda^*:=(\lambda_1^*,\lambda_2^*)\in[0,1]^2$. For any $(x,y)\in\bbR^2$, let $\bx:=(x,y)^{\rvT}$ and define the following information density:
\begin{align}
\label{eq:Gaussian-influence-parallel}
\jmath(x,y|R_1^*,R_2^*,D_1,D_2,P_{XY})
&:=R_0^*+\bx^{\rvT}\bg\bx-\tr(\bg\bSigma_{XY}).
\end{align}
We remark that $\jmath(x,y|R_1^*,R_2^*,D_1,D_2,P_{XY})$ is the distortion-tilted information density for the Gaussian Gray-Wyner problem, analogous to the discrete case in \cite[Definition~4, Eqs.~(24)--(26)]{zhoutanmotani2017gw}. Furthermore, define the dispersion, the variance of the distortion-tilted information density, as follows:
\begin{align}
\label{eq:V-def}
V(R_1^*,R_2^*,D_1,D_2|P_{XY})
&:=\var_{P_{XY}}[\jmath(X,Y|R_1^*,R_2^*,D_1,D_2,P_{XY})].
\end{align}
For ease of notation, we use $(\jmath(x,y),V)$ to denote $(\jmath(x,y|R_1^*,R_2^*,D_1,D_2,P_{XY}),V(R_1^*,R_2^*,D_1,D_2|P_{XY}))$. Recall the definitions of $\nu_\cdot$ in \eqref{eq:nu-def}. Define
\begin{align}
u&:=\frac{\nu_{\beta^*}}{\nu_{\alpha^*}}\exp\bigl(\beta^*-\alpha^*\bigr)=\sqrt{\frac{1-e^{-2\beta^*}}{1-e^{-2\alpha^*}}}\exp\bigl(\beta^*-\alpha^*\bigr).\label{eq:u-ratio}
\end{align}
The following lemma presents properties of the distortion-tilted information density and the dispersion function.
\begin{lemma}
\label{lem:critical-quantities}
For the fixed boundary tuple $\br^*$ with $R_0^*>0$, the following claims hold.
\begin{enumerate}
\item The distortion-tilted information density satisfies
\begin{align}
\bbE_{P_{XY}}[\jmath(X,Y)]&=R_0^*.
\end{align}
\item The dispersion function satisfies
\begin{align}
\label{eq:V-slopes}
V&=2\tr[(\bg\bSigma_{XY})^2]
=\frac{1+(\lambda_1^*+\lambda_2^*-1)^2}{2}.
\end{align}
In particular, the dispersion function has the following explicit form:
\begin{align}
\label{eq:V-branches}
V&=
\begin{cases}
0.5\left[1+\left(\dfrac{(u-r)(1-ru)}{u(1-r^2)}\right)^2\right],
&\text{if }(\alpha^*,\beta^*)\in\calP_1,\\
1,&\text{if }(\alpha^*,\beta^*)\in\calP_2,\\
0.5,&\text{if }(\alpha^*,\beta^*)\in\calP_3,\\
0.5,&\text{if }(\alpha^*,\beta^*)\in\calP_4
\end{cases}.
\end{align}
\end{enumerate}
\end{lemma}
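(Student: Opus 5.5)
Claim 1 follows from linearity of expectation. Since $\bbE_{P_{XY}}[\bX\bX^{\rvT}]=\bSigma_{XY}$, we have
\begin{align}
\bbE_{P_{XY}}[\bX^{\rvT}\bg\bX]=\tr(\bg\bSigma_{XY}),
\end{align}
and hence $\bbE[\jmath(X,Y)]=R_0^*$ directly from~\eqref{eq:Gaussian-influence-parallel}.

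For the first equality in~\eqref{eq:V-slopes}, I would use the standard variance formula for a Gaussian quadratic form. Write $\bX=\bSigma_{XY}^{1/2}\bZ$ with $\bZ\sim\calN(\bzero_2,\bone_2)$. Then
\begin{align}
\bX^{\rvT}\bg\bX=\bZ^{\rvT}\bA\bZ,\qquad \bA:=\bSigma_{XY}^{1/2}\bg\bSigma_{XY}^{1/2}.
\end{align}
Diagonalizing $\bA$ orthogonally reduces this to a weighted sum of independent $\chi^2_1$ variables, each of variance $2$. This gives
\begin{align}
\var[\bZ^{\rvT}\bA\bZ]=2\tr(\bA^2)=2\tr[(\bg\bSigma_{XY})^2].
\end{align}

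For the second equality, I would substitute the explicit gradient~\eqref{eq:G-gradient}. It is convenient to factor $\bSigma_{XY}=\bD\bR\bD$ with $\bD=\diag(\sigma_X,\sigma_Y)$ and $\bR$ the correlation matrix with off-diagonal entry $\rho:=\rho_{XY}$. Then $\bg=\frac12\bD^{-1}\bG\bD^{-1}$, where
\begin{align}
\bG=\begin{bmatrix}\lambda_1^*-\rho\eta&\eta\\\eta&\lambda_2^*-\rho\eta\end{bmatrix},
\end{align}
and by similarity $\tr[(\bg\bSigma_{XY})^2]=\frac14\tr[(\bG\bR)^2]$. A direct $2\times2$ computation gives
\begin{align}
\bG\bR=\begin{bmatrix}\lambda_1^*&\rho\lambda_1^*+(1-\rho^2)\eta\\ \rho\lambda_2^*+(1-\rho^2)\eta&\lambda_2^*\end{bmatrix}.
\end{align}
Therefore
\begin{align}
V=\tfrac12\bigl[\lambda_1^{*2}+\lambda_2^{*2}+2(\rho\lambda_1^*+(1-\rho^2)\eta)(\rho\lambda_2^*+(1-\rho^2)\eta)\bigr].
\end{align}
Comparing with $\frac12[1+(\lambda_1^*+\lambda_2^*-1)^2]$, the claimed formula is equivalent to the identity
\begin{align}
(\rho\lambda_1^*+(1-\rho^2)\eta)(\rho\lambda_2^*+(1-\rho^2)\eta)=1-\lambda_1^*-\lambda_2^*+\lambda_1^*\lambda_2^*=(1-\lambda_1^*)(1-\lambda_2^*).
\end{align}
This identity is the \textbf{main obstacle}. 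It must be verified regime by regime from the explicit form of $\Upsilon_\rho$ in~\eqref{eq:Upsilon}.

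In $\calP_2$: $\lambda_1^*=\lambda_2^*=1$ and $\eta=-\rho/(1-\rho^2)$, so each factor vanishes and $V=1$.

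In $\calP_3$: $\lambda_1^*=1$, $\lambda_2^*=0$, $\eta=0$. The left side is $\rho\cdot0=0$ and the right side is $0\cdot1=0$, giving $V=\frac12$. Regime $\calP_4$ is symmetric.

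In $\calP_1$: set $r:=r_{\alpha^*,\beta^*}$ and differentiate $\Upsilon=\alpha+\beta-\frac12\log(1-r^2)+\frac12\log(1-\rho^2)$. Using $\partial_\rho r=e^{\alpha+\beta}$ and $\partial_\alpha r=r-\rho\,\nu_\beta\nu_\alpha^{-1}e^{-2\alpha}e^{\alpha+\beta}$ (recall $\nu_\alpha'=e^{-2\alpha}/\nu_\alpha$), each of $\eta,\lambda_1^*,\lambda_2^*$ becomes an expression in $r$, $\nu_\alpha$, $\nu_\beta$, $e^{\alpha}$, $e^{\beta}$.

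I would first try to rewrite everything in terms of $r$ and $u$ from~\eqref{eq:u-ratio}. The expected forms are $1-\lambda_1^*\propto(u-r)$ and $1-\lambda_2^*\propto(1-ru)$, with matching expressions for the two factors on the left, so the identity reduces to an algebraic check. This substitution would also yield
\begin{align}
\lambda_1^*+\lambda_2^*-1=\frac{(u-r)(1-ru)}{u(1-r^2)},
\end{align}
which gives~\eqref{eq:V-branches}. The algebra in $\calP_1$ is the only nontrivial part. If the direct route is messy, I would fall back on the geometric parametrization of $\calP_1$ used by Yu (the optimal $W$ giving conditional correlation $r$) to express $\lambda_i^*$ compactly before checking the identity.
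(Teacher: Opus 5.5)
Claim 1 and the first equality in \eqref{eq:V-slopes} (the Gaussian quadratic-form variance) are exactly as in the paper. For the second equality you take a different route. The paper never computes $\eta$. From the KKT conditions of the determinant program it obtains $\bg=\frac12\bSigma_{XY}^{-1}-\blambda_{\mathrm{cov}}$ with $\blambda_{\mathrm{cov}}\succeq0$ annihilating the nonzero common covariance $\bSigma_{XY}-\bSigma_{\rmN}^{*}$; this is where $R_0^*>0$ enters. Hence $\rank(\blambda_{\mathrm{cov}})\leq1$, so $\bomega=\bSigma_{XY}^{1/2}\bg\bSigma_{XY}^{1/2}$ has eigenvalue $1/2$, and the other eigenvalue follows from $\tr(\bomega)=(\lambda_1^*+\lambda_2^*)/2$.

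Your identity $(\rho\lambda_1^*+(1-\rho^2)\eta)(\rho\lambda_2^*+(1-\rho^2)\eta)=(1-\lambda_1^*)(1-\lambda_2^*)$, with $\rho:=\rho_{XY}$, is exactly $\det(\bG\mathbf{R}-\bone_2)=0$ for your matrices $\bG$ and $\mathbf{R}$. So it is the same eigenvalue-$1/2$ fact, checked by brute force from the closed form \eqref{eq:Upsilon}. The paper's argument is regime-free and explains where the $1/2$ comes from, but it needs convex duality: existence of multipliers, the touching bound, and differentiability from Proposition~\ref{prop:regularity}. Yours is elementary and self-contained, at the cost of computing $\eta$ and doing the $\calP_1$ algebra.

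That algebra is the only substantive step, and your proposal does not carry it out. Your stated derivative also contains a slip. Since $\nu_t'=e^{-2t}/\nu_t$, the correct derivative is $\partial_\alpha r=r-\nu_\beta\nu_\alpha^{-1}e^{\beta-\alpha}=r-u$, with no factor $\rho$. With your version you would get $\lambda_1^*=(1-\rho ru)/(1-r^2)$, and the identity would fail.

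With the correct derivatives you have $\lambda_1^*=(1-ru)/(1-r^2)$, $\lambda_2^*=(1-r/u)/(1-r^2)$ and $\eta=rE/(1-r^2)-\rho/(1-\rho^2)$, where $E:=e^{\alpha^*+\beta^*}$. Write $a:=\nu_{\alpha^*}$ and $b:=\nu_{\beta^*}$, and use $e^{-2\alpha^*}=1-a^2$ and $e^{-2\beta^*}=1-b^2$. Then $u-r=E(b-\rho a)/a$ and $(1-ru)/u=E(a-\rho b)/b$, which give
\[
\rho\lambda_1^*+(1-\rho^2)\eta=\frac{rE(a-\rho b)}{(1-r^2)a}=\frac{b}{a}(1-\lambda_2^*),\qquad
\rho\lambda_2^*+(1-\rho^2)\eta=\frac{rE(b-\rho a)}{(1-r^2)b}=\frac{a}{b}(1-\lambda_1^*).
\]
So the product identity holds on $\calP_1$, including $r=0$, and you need no fallback to Yu's geometric parametrization. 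Interfaces are covered because $\calP_1$ is closed and the derivatives are continuous.
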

The proof of Lemma~\ref{lem:critical-quantities} is provided in Appendix~\ref{app:information-quantity}.

\subsection{Main Results and Discussions}
\label{sec:main-theorem}

Recall that the source distribution is $P_{XY}=\calN(\bzero_2,\bSigma_{XY})$. Furthermore, recall the definitions of the second-order rate region $\calL(\br^*,D_1,D_2,\veps)$ in Definition~\ref{def:second} and the zero private rate index set $\calZ(\br^*)$ in~\eqref{eq:zero-index-set}. Fix any boundary rate tuple $\br^*=(R_0^*,R_1^*,R_2^*)$ with $R_0^*>0$.
\begin{theorem}
\label{thm:main}
Given any $\veps\in(0,1)$, the second-order rate region satisfies
\begin{align}
\label{eq:smooth-region}
\calL(\br^*,D_1,D_2,\veps)&=\Big\{(L_0,L_1,L_2)\in\bbR^3:
L_0+\lambda_1^*L_1+\lambda_2^*L_2\geq\sqrt{V}\,\rmQ^{-1}(\veps),\quad
L_j\geq0\ \forall j\in\calZ(\br^*)\Big\}.
\end{align}
\end{theorem}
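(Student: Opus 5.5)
The proof splits into achievability (the inclusion $\supseteq$) and converse ($\subseteq$). The unifying object is the empirical covariance matrix $\hat\bSigma(\bx^n):=\frac1n\sum_{i}\bx_i\bx_i^{\rvT}$, and the statistic $\ip{\bg}{\hat\bSigma}$ in particular. By Lemma~\ref{lem:critical-quantities}, $\frac1n\sum_i\jmath(x_i,y_i)=R_0^*+\ip{\bg}{\hat\bSigma-\bSigma_{XY}}$. This is a sum of i.i.d.\ terms with mean $R_0^*$ and variance $V$, so the Berry--Esseen theorem~\cite{bentkus2005} governs its fluctuations at scale $1/\sqrt n$.

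\textbf{Achievability.} The main step is a covariance shell covering lemma. Fix a shell $\calT_n(\bSigma)$ of sequences whose empirical covariance lies within $O(1/n)$, in Frobenius norm, of some $\bSigma$ in a shrinking neighborhood of $\bSigma_{XY}$. The lemma should show that every $\bx^n$ in the shell can be covered with distortions at most $(D_1,D_2)$ provided
\[
\tfrac1n\log M_0\geq \rvR_0(r_1,r_2,D_1,D_2|Q_{XY}^{\bSigma})+O(\tfrac{\log n}{n})
\]
and $\tfrac1n\log M_j\geq r_j+O(\tfrac{\log n}n)$ for $j\in[2]$.
\begin{enumerate}[(i)]
\item Take the optimal jointly Gaussian auxiliary $W$ for $Q^{\bSigma}_{XY}$, which by Yu~\cite{yu2023} is scalar Gaussian in regime $\calP_1$ and degenerate in the other regimes.
\item Generate common codewords $w^n$ uniformly on a sphere of the appropriate power. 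Show via the spherical-cap measure that, for each fixed shell sequence, the probability that some codeword has the prescribed empirical correlations with $(x^n,y^n)$ is at least $1-e^{-\text{poly}(n)}$. Since the shell is rotation invariant, a union bound over a fine net of the shell is unnecessary: a random rotation argument applies.
\item Given $w^n$, cover the residuals $x^n-a w^n$ and $y^n-b w^n$ with Euclidean balls, using the standard sphere-covering bound (Rogers/Verger-Gaugry style), at rates $\frac12\log(\text{residual power}/D_j)$.
\end{enumerate}
Then use finitely many shells, polynomial in $n$, together with an $O(\log n)$-bit shell index.

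\textbf{Choice of rates and error analysis.} Set $\log M_j=nR_j^*+\sqrt n L_j$. Declare an error whenever
\[
\rvR_0(R_1^*+L_1/\sqrt n,\,R_2^*+L_2/\sqrt n\,|\,Q^{\hat\bSigma})>R_0^*+L_0/\sqrt n-O(\tfrac{\log n}n).
\]
A first-order Taylor expansion using Proposition~\ref{prop:regularity} together with \eqref{eq:G-gradient}--\eqref{eq:private-rate-gradient} turns this event into
\[
\ip{\bg}{\hat\bSigma-\bSigma_{XY}}-(\lambda_1^*L_1+\lambda_2^*L_2)/\sqrt n>L_0/\sqrt n+o(1/\sqrt n),
\]
up to the event $\norm{\hat\bSigma-\bSigma_{XY}}>n^{-1/3}$, which has vanishing probability. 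Berry--Esseen then yields $\limsup\veps_n\leq\rmQ\big((L_0+\lambda_1^*L_1+\lambda_2^*L_2)/\sqrt V\big)$. When $j\in\calZ(\br^*)$, the one-sided derivative makes $L_j\geq0$ necessary and also sufficient in this construction.

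\textbf{Converse.} For $\lambda_j^*>0$, specialize Liu et al.~\cite[Theorem~17]{liu2020}. This requires identifying their entropy-support quantity as $\rvR_0$ viewed as a function of $\bSigma$, and its gradient as $\bg$, which together give exactly the bound $L_0+\lambda_1^*L_1+\lambda_2^*L_2\geq\sqrt V\rmQ^{-1}(\veps)$. In the regimes $\calP_3$ and $\calP_4$ one weight vanishes, and there the bound follows from the P2P Gaussian converse~\cite[Theorem~40]{kostinaverdu2012} applied to the combined message $(S_0,S_1)$ or $(S_0,S_2)$, which has dispersion $1/2$. The constraint $L_j\geq0$ for $j\in\calZ(\br^*)$ holds by definition.

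\textbf{Main obstacle.} The hardest part is the uniform covering lemma in regime $\calP_1$. The common codebook must deliver, simultaneously for all sequences in the shell, a $w^n$ whose correlations with both $x^n$ and $y^n$ match the optimal test channel, with only $O(\log n)$ rate loss. I would first try a conditional spherical-cap computation: project onto the two-dimensional span of $(x^n,y^n)$ and show that a uniform codeword lands in the required cone with probability $e^{-nI(XY;W)+O(\log n)}$.
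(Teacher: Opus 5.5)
\textbf{The gap: the scalar common codeword cannot reach $\rvR_0$ in $\calP_2$.} Your overall architecture coincides with the paper's: covariance shells with polynomially many cells, a per-shell covering statement, a first-order expansion of $\rvR_0$ plus Berry--Esseen, and Liu et al.\ for positive weights with Kostina--Verd\'u for a zero weight. However, step (i) is false in $\calP_2$, and the scheme in (ii)--(iii) fails there.

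After normalizing by $\diag(\sqrt{\Sigma_{11}},\sqrt{\Sigma_{22}})$, the unique optimal residual in $\calP_2$ is $\diag(v_1,v_2)$ with $v_1=e^{-2\alpha}$ and $v_2=e^{-2\beta}$. The common covariance therefore has diagonal $(\nu_\alpha^2,\nu_\beta^2)$, off-diagonal $\rho$, and determinant $\nu_\alpha^2\nu_\beta^2-\rho^2>0$ in the interior. So $W$ is genuinely two-dimensional. Your description is right only for $\calP_1$, and for $\calP_3,\calP_4$, where $W$ is rank one and quantizes a single source.

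A scalar codeword with residuals $x^n-aw^n$ and $y^n-bw^n$ emulates only a rank-one common covariance $\bb\bb^{\rvT}$. The diagonal constraints force $|b_1|\geq\nu_\alpha$ and $|b_2|\geq\nu_\beta$. Hence $|\rho-b_1b_2|\geq\nu_\alpha\nu_\beta-\rho$, and the normalized residual determinant is at most $v_1v_2-(\nu_\alpha\nu_\beta-\rho)^2<v_1v_2$. Your common rate thus exceeds $\Upsilon_\rho(\alpha,\beta)$ by a first-order amount on every shell interior to $\calP_2$.

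Consider boundary tuples in the interior of $\calP_2$ (case (b) of Lemma~\ref{lem:boundary-classification}, where $V=1$ and the claimed region is $L_0+L_1+L_2\geq\rmQ^{-1}(\veps)$). There $\hat{\bSigma}$ stays in this regime with high probability, so your achievability fails outright. Near the $\calP_1$--$\calP_2$ interface the excess is only quadratic in the distance to the interface, so scalar codes would still suffice at second order there.

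The missing ingredient is a common codebook of two-row codewords whose covariance $\bSigma_q-\bSigma_{\mathrm N,q}$ may have either rank, such as the paper's $\bw_q^n=(\bSigma_q-\bSigma_{\mathrm N,q})^{1/2}\bu_q^n$. It must come with a covering-probability estimate for a $2\times2$ cross-Gram matrix, not a one-dimensional spherical cap. The paper gets this from the change of measure $e^{-J}$ to the product Gaussian posterior and the small-ball bound of Lemma~\ref{lem:small-ball}. That argument treats rank one and rank two uniformly and never inverts $\bb_q$.

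\textbf{Smaller points.}
\begin{itemize}
\item Dropping the net is fine in substance, but not because of rotation invariance. The shell is a union of many orbits of the orthogonal group, one per Gram matrix, and rotation invariance cannot turn a per-point bound into simultaneous covering of every point. What suffices is a per-point failure bound uniform over the shell, followed by averaging over the source (Fubini) and derandomization. That is a lighter route than the paper's, which covers every shell point deterministically via a $1/\log n$-net and a union bound, at the price of $s_n=\sqrt n/\log n$ overhead rather than your $O(\log n)$.
\item With only the $C^1$ regularity of Proposition~\ref{prop:regularity}, the remainder on $\{\norm{\hat{\bSigma}-\bSigma_{XY}}\leq n^{-1/3}\}$ is $o(\norm{\hat{\bSigma}-\bSigma_{XY}})$. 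This is $o(1/\sqrt n)$ only in probability, using tightness of $\sqrt n\norm{\hat{\bSigma}-\bSigma_{XY}}$, not deterministically as you write. The paper makes this argument in Appendix~\ref{app:taylor-remainder}.
\item In the converse, the object to match with Liu et al.\ is the weighted support $\underline F_{\blambda^*}(\bSigma)$, not $\rvR_0$ itself. Its value at $\bSigma_{XY}$ is $R_0^*+\lambda_1^*R_1^*+\lambda_2^*R_2^*$ and its gradient there is $\bg$.
\item Zero weights also occur on the $\calP_1$ interfaces with $\calP_3$ and $\calP_4$, not only inside those regimes. The P2P bound still applies there, since $R_0^*+R_1^*=R_X(P_X,D_1)$ (respectively $R_0^*+R_2^*=R_Y(P_Y,D_2)$).
\end{itemize}
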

The achievability and converse proofs of Theorem~\ref{thm:main} are provided in Sections~\ref{sec:ach} and~\ref{sec:converse}, respectively. We provide a proof sketch here. In the achievability part, we group correlated source sequences into power shell types according to their empirical covariance and derive a covering lemma that bounds the common rate required to cover each power shell type with desired private rates. Specifically, the covariance shell covering lemma enables us to apply the Berry-Esseen theorem to obtain the desired second-order asymptotics. In the converse part, we use the existing result from \cite[Theorem 17]{liu2020} when $(\lambda_1^*,\lambda_2^*)$ are both positive and use the converse result of the P2P case~\cite[Theorem 40]{kostinaverdu2012} when either $\lambda_1^*$ or $\lambda_2^*$ equals zero. However, the application of \cite[Theorem 17]{liu2020} requires one to relate the entropy support expression in \cite[Theorem 17]{liu2020} with the distortion-tilted information density and the dispersion function.

Theorem \ref{thm:main} shows that the second-order rate region for the Gaussian Gray-Wyner problem has exactly the same form as the discrete counterpart~\cite[Theorem~4]{zhoutanmotani2017gw}. In the following corollary, we explicitly calculate the second-order rate region for the Gaussian Gray-Wyner problem. Recall the definitions of $r_\cdot$ in \eqref{eq:rho-ab}, $(\alpha^*,\beta^*)$ in~\eqref{eq:alpha-star}, the four sets $(\calP_1,\ldots,\calP_4)$ in~\eqref{eq:D1}--\eqref{eq:D4}, and $u$ in~\eqref{eq:u-ratio}. For ease of notation, we use $r$ to denote $r_{\alpha^*,\beta^*}$. 
\begin{corollary}
\label{cor:branchwise-region}
The second-order rate region consists of rate tuples $(L_0,L_1,L_2)$ such that $L_j\geq0$ for $j\in\calZ(\br^*)$ and has the following explicit form depending on parameters $(\alpha^*,\beta^*)$ that are defined via the private rates $(R_1^*,R_2^*)$ of the boundary first-order rate tuple $r^*=(R_0^*,R_1^*,R_2^*)$.
\begin{enumerate}
\item If $(\alpha^*,\beta^*)\in\calP_1$,
\begin{align}
\label{eq:branchwise-curved}
L_0+\frac{1-ru}{1-r^2}L_1+\frac{1-r/u}{1-r^2}L_2
&\geq\sqrt V\,\rmQ^{-1}(\veps).
\end{align}
\item If $(\alpha^*,\beta^*)\in\calP_2$,
\begin{align}
\label{eq:branchwise-sum}
L_0+L_1+L_2&\geq\rmQ^{-1}(\veps).
\end{align}
\item If $(\alpha^*,\beta^*)\in\calP_3$,
\begin{align}
\label{eq:branchwise-one-source-X}
L_0+L_1&\geq\frac{\rmQ^{-1}(\veps)}{\sqrt2}.
\end{align}
\item If $(\alpha^*,\beta^*)\in\calP_4$,
\begin{align}
\label{eq:branchwise-one-source-Y}
L_0+L_2&\geq\frac{\rmQ^{-1}(\veps)}{\sqrt2}.
\end{align}
\end{enumerate}
The neighboring conditions agree at every finite interface. In particular, the one-source expressions also hold at their adjoining $\calP_1$ interfaces.
\end{corollary}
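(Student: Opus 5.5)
The plan is to derive the corollary directly from Theorem~\ref{thm:main}. The region in \eqref{eq:smooth-region} depends only on the supporting weights $(\lambda_1^*,\lambda_2^*)$ and on the dispersion $V$. Lemma~\ref{lem:critical-quantities} already gives $V$ branch by branch in \eqref{eq:V-branches}, and also the identity $V=\tfrac12[1+(\lambda_1^*+\lambda_2^*-1)^2]$. So the only real work is to compute $\lambda_1^*=\partial_\alpha\Upsilon_\rho$ and $\lambda_2^*=\partial_\beta\Upsilon_\rho$ at $(\rho_{XY},\alpha^*,\beta^*)$ on each piece of \eqref{eq:Upsilon}. Proposition~\ref{prop:regularity} guarantees these derivatives exist and are continuous. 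The constraints $L_j\ge0$ for $j\in\calZ(\br^*)$ are inherited unchanged from Theorem~\ref{thm:main}.

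\emph{Regime $\calP_1$.} First, $\frac{d}{d\alpha}\nu_\alpha=e^{-2\alpha}/\nu_\alpha$. Differentiating $r_{\alpha,\beta}=(\rho-\nu_\alpha\nu_\beta)e^{\alpha+\beta}$ then gives
\begin{align}
\partial_\alpha r_{\alpha,\beta}=r_{\alpha,\beta}-\frac{\nu_\beta}{\nu_\alpha}e^{\beta-\alpha},
\end{align}
which at $(\alpha^*,\beta^*)$ equals $r-u$ by \eqref{eq:u-ratio}. By symmetry, $\partial_\beta r=r-\nu_\alpha e^{\alpha-\beta}/\nu_\beta=r-1/u$. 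Substituting into $\Upsilon_\rho=\alpha+\beta-\tfrac12\log(1-r^2)+\tfrac12\log(1-\rho^2)$ gives
\begin{align}
\lambda_1^*&=1+\frac{r(r-u)}{1-r^2}=\frac{1-ru}{1-r^2},
&\lambda_2^*&=1+\frac{r(r-1/u)}{1-r^2}=\frac{1-r/u}{1-r^2}.
\end{align}
This is \eqref{eq:branchwise-curved}, with $V$ taken from the first line of \eqref{eq:V-branches}. As a consistency check, $\lambda_1^*+\lambda_2^*-1=(u-r)(1-ru)/(u(1-r^2))$, which matches that line.

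\emph{Remaining regimes.} On $\calP_2$, $\Upsilon_\rho-\alpha-\beta$ does not depend on $(\alpha,\beta)$, so $\lambda^*=(1,1)$ and $V=1$, which gives \eqref{eq:branchwise-sum}. On $\calP_3$, $\Upsilon_\rho=\alpha$, so $\lambda^*=(1,0)$ and $V=1/2$, hence $\sqrt V\,\rmQ^{-1}(\veps)=\rmQ^{-1}(\veps)/\sqrt2$, which gives \eqref{eq:branchwise-one-source-X}. The regime $\calP_4$ is symmetric. Here $R_2^*=0$ (resp.\ $R_1^*=0$) by Lemma~\ref{lem:boundary-classification}, so the absent variable is still constrained by $L_2\ge0$ (resp.\ $L_1\ge0$) through $\calZ(\br^*)$.

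\emph{Interface agreement.} This is the only step needing care. At the $\calP_1/\calP_2$ interface $\rho=\nu_\alpha\nu_\beta$, we have $r=0$, so the $\calP_1$ weights reduce to $(1,1)$. At the $\calP_1/\calP_3$ interface $\rho=\nu_\beta/\nu_\alpha$, we compute $r=(\nu_\beta/\nu_\alpha)(1-\nu_\alpha^2)e^{\alpha+\beta}=\nu_\beta e^{\beta-\alpha}/\nu_\alpha=u$. Hence $\lambda_1^*=(1-u^2)/(1-u^2)=1$ and $\lambda_2^*=(1-r/u)/(1-r^2)=0$, and then $V=1/2$. The $\calP_1/\calP_4$ interface is handled the same way and gives $r=1/u$. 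The only potential obstacle is the degenerate case $r=u=1$, where $1-r^2$ vanishes. I would rule it out by noting that $r<1$ on the closure of $\calP_1$ with $\alpha^*,\beta^*>0$, since $1-r^2>0$ is needed for $\Upsilon_\rho$ to be finite. This shows the one-source expressions also hold on their adjoining $\calP_1$ interfaces.
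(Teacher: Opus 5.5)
Your proposal follows essentially the same route as the paper's proof in Appendix~\ref{app:branchwise-region}. You read off $(\lambda_1^*,\lambda_2^*)$ from the branch derivatives of $\Upsilon_\rho$ (the paper's \eqref{eq:pf-curved-slope}--\eqref{eq:branch-derivatives}), take $V$ from \eqref{eq:V-branches}, substitute into Theorem~\ref{thm:main}, and check that the $\calP_1$ weights reduce to the neighboring ones at $r=0,u,u^{-1}$.

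One small point: your justification of $r<1$ via ``finiteness of $\Upsilon_\rho$'' is somewhat circular. The clean reason, used in Appendix~\ref{app:regularity}, is that the $\calP_1$ inequalities are equivalent to $0\le r\le\min\{u,u^{-1}\}$. Moreover, at the $\calP_1/\calP_3$ interface one has $u=\rho e^{\beta^*-\alpha^*}<1$, because $\nu_{\beta^*}=\rho\nu_{\alpha^*}<\nu_{\alpha^*}$ forces $\beta^*<\alpha^*$. So the degenerate case $r=u=1$ never arises there.
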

The proof of Corollary~\ref{cor:branchwise-region} is provided in Appendix~\ref{app:branchwise-region}.

\section{Achievability Proof}
\label{sec:ach}
This section presents the achievability proof of Theorem \ref{thm:main}. In particular, we derive a covariance shell covering lemma that gives bounds on common and private rates of a deterministic code that covers the correlated source sequences within desired distortion levels at two decoders. This lemma is of independent interest for other multiterminal lossy source coding problems concerning correlated Gaussian sources. 

\subsection{Preliminary Definitions}
\label{sec:preliminaries}
Fix any integer $n\in\bbN$ and $\bx^n=(x^n,y^n)\in(\bbR^n)^2$. Define the following empirical second-moment matrix:
\begin{align}
\label{eq:emp-cov}
\hat{\bSigma}(\bx^n)&:=\frac1n\bx^n(\bx^n)^{\rvT}.
\end{align}
We remark that the diagonal elements of $\hat{\bSigma}(\bx^n)$ are empirical powers $\norm{x^n}^2/n$ and $\norm{y^n}^2/n$. In what follows, we will use $\hat{\bSigma}$ to denote the random covariance matrix $\hat{\bSigma}(\bX^n)$ when $\bX^n=(X^n,Y^n)$ denotes the random source sequences. Recall from \eqref{eq:source-cov} that the covariance matrix is $\bSigma_{XY}$, the standard deviations are $(\sigma_X,\sigma_Y)$ and $\rho_{XY}$ is the correlation parameter of the source distribution. Let $\lambda_{\min}(\bSigma_{XY})$ be its smallest eigenvalue. Define the following two constants
\begin{align}
c_{\mathrm{cor}}&:=\sigma_X^{-2}+\sigma_Y^{-2}+\frac{2}{\sigma_X\sigma_Y},\\
\eta_{\mathrm{cov}}&:=\frac18\min\left\{\lambda_{\min}(\bSigma_{XY}),\sigma_X^2-D_1,\sigma_Y^2-D_2,\frac{\rho_{XY}}{c_{\mathrm{cor}}},\frac{1-\rho_{XY}}{c_{\mathrm{cor}}}\right\}.\label{eq:explicit-covariance-radius}
\end{align}
Furthermore, define the two sets:
\begin{align}
\calK_0&:=\{\bSigma\in\bbS^2:\norm{\bSigma-\bSigma_{XY}}\leq\eta_{\mathrm{cov}}\},\label{eq:inner-covariance-neighborhood}
\end{align}
\begin{align}
\calK&:=\{\bSigma\in\bbS^2:\norm{\bSigma-\bSigma_{XY}}\leq2\eta_{\mathrm{cov}}\}.\label{eq:covariance-neighborhoods}
\end{align}
Given a real number $t\in\bbR$, define the quantization function $\mathfrak{r}_n(t):=\lfloor nt+1/2\rfloor/n$. Given any matrix $\bSigma\in\bbS^2$, define its quantized version by
\begin{align}
\label{eq:covariance-quantizer}
\mathfrak{q}_n(\bSigma)&:=
\begin{bmatrix}
\mathfrak{r}_n(\Sigma_{11})&\mathfrak{r}_n(\Sigma_{12})\\
\mathfrak{r}_n(\Sigma_{12})&\mathfrak{r}_n(\Sigma_{22}).
\end{bmatrix}
\end{align}
Let
\begin{align}
\label{def:quansetK0}
\mathfrak{q}_n(\calK_0)
:=\{\mathfrak{q}_n(\bSigma):~\Sigma\in\calK_0\}
\end{align}
be the set of distinct quantized versions for the matrices in the set $\calK_0$ and let $N_n:=|\mathfrak{q}_n(\calK_0)|$ denote its size. We also write $\mathfrak{q}_n(\calK_0)$ as $(\Sigma_1,\ldots,\Sigma_{N_n})$. For each $\Sigma\in\calK_0$,
\begin{align}
\norm{\mathfrak{q}_n(\bSigma)-\bSigma}^2
&=|\mathfrak{r}_n(\Sigma_{11})-\Sigma_{11}|^2+2|\mathfrak{r}_n(\Sigma_{12})-\Sigma_{12}|^2+|\mathfrak{r}_n(\Sigma_{22})-\Sigma_{22}|^2,
\\
&\leq\frac{1+2+1}{4n^2}=\frac1{n^2},\label{eq:quantizer-error}
\end{align}
where each scalar rounding error is at most $\frac1{2n}$. Thus, $\norm{\mathfrak{q}_n(\bSigma)-\bSigma}\leq1/n$. The triangle inequality implies $\mathfrak{q}_n(\calK_0)\subset\calK$ when $\frac{1}{n}\leq\eta_{\mathrm{cov}}$. Since each entry of $\mathfrak{q}_n(\bSigma)$ and $\mathfrak{q}_n(\bSigma)$ differs by at most $\frac{1}{n}$ for any $\Sigma\in\calK_0$, there are at most $\lfloor2n\eta_{\mathrm{cov}}\rfloor+3$ different quantization values in the range of $2\eta_{\mathrm{cov}}$. As a result,
\begin{equation}\label{eq:quantized-center-count}
N_n\leq(\lfloor2n\eta_{\mathrm{cov}}\rfloor+3)^3\leq(2\eta_{\mathrm{cov}}+3)^3n^3.
\end{equation}

Recall that $\bx^n=(x^n,y^n)\in(\bbR^n)^2$. Given a constant $C_0\geq1$ and a matrix $\bSigma\in\calK$, define the following set of source sequences:
\begin{align}
\label{eq:shell}
\calT_n(\bSigma)&:=\bigg\{\bx^n\in\bbR^{2\times n}:\norm{\hat{\bSigma}(\bx^n)-\bSigma}\leq \frac{C_0}{n}\bigg\}.
\end{align}
For each $q\in[N_n]$, define 
\begin{align}
\calT_{n,q}&:=\big\{\bx^n\in\bbR^{2\times n}:\hat{\bSigma}(\bx^n)\in\calK_0,\ \mathfrak{q}_n(\hat{\bSigma}(\bx^n))=\bSigma_q\big\}.
\end{align}
It follows from~\eqref{eq:quantizer-error} that $\calT_{n,q}\subseteq\calT_n(\bSigma_q)$. We remark that $\calT_n(\bSigma)$ is analogous to the typical set while $\calT_{n,q}$ is analogous to the type class. Recall the definition of the minimum common rate in~\eqref{eq:R0-yu}.

Fix non-negative real numbers $(r_0,r_1,r_2)\in\bbR_{+,0}^3$, define the index set
\begin{align}
\label{eq:coded-index-set}
\calJ_n&:=\{q\in[N_n]:~\rvR_0(r_1,r_2,D_1,D_2|Q_{XY}^{\bSigma_q})\leq r_0\}.
\end{align}
For each $q\in[N_n]$, define the following set of covariance matrices:
\begin{align}
\label{def:Fq}
\calF_q&:=\big\{\bSigma_{\rmN}\in\bbS_+^2:\bSigma_{\rmN}\preceq\bSigma_q,\ \Sigma_{\mathrm N,jj}\leq D_je^{2r_j}~\forall~j\in[2]\big\}.
\end{align}
Furthermore, let
\begin{align}
\label{eq:cell-residual-choice}
\bSigma_{\mathrm N,q}&:=\underset{\bSigma_{\rmN}\in\calF_q}{\operatorname{arg\,max}}\ |\bSigma_{\rmN}|
\end{align}
be the covariance matrix in $\calF_q$ that has maximum determinant.
Finally, when $n\geq3$, define 
\begin{align}
\label{eq:sn-definition}
s_n&:=\frac{\sqrt n}{\log n},
\end{align}
and give real number $\kappa\in(0,1)$, define a parameter
\begin{align}
\label{eq:private-radius-constant}
\chi_\kappa&:=2+2\sqrt{\kappa^{-1}+1}.
\end{align}
The definitions above will be used to present our coding scheme in the next subsection.

\subsection{Coding Scheme}
\label{sec:code-construction}
\subsubsection{Codebook Generation}
For each $q\in\calJ_n$, we generate one common codebook and two private codebooks. The encoder and both decoders know these codebooks and the quantized centers $\mathfrak{q}_n(\calK_0)=(\Sigma_1,\ldots,\Sigma_{N_n})$ before transmission. 

Fix $q\in\calJ_n$. Define the following constant
\begin{align}
\label{eq:M0-random}
M_{0,q}&:=\left\lceil\exp\left\{\frac n2\log\frac{|\bSigma_q|}{|\bSigma_{\mathrm N,q}|}+\frac{s_n}{8}\right\}\right\rceil.
\end{align}
For each $m_0\in[M_{0,q}]$, generate $\bu_q^n(m_0)\in\bbR^{2\times n}$ i.i.d. from the two-dimensional Gaussian distribution $\calN(\bzero_2,\bone_{2\times 2})$ columns and let
\begin{align}
\bw_q^n(m_0)&:=(\bSigma_q-\bSigma_{\mathrm N,q})^{1/2} \bu_q^n(m_0).
\end{align}
Note that each column of $\bw_q^n(m_0)$ has distribution $\calN(\bzero_2,\bSigma_q-\bSigma_{\mathrm N,q})$. For each $j\in[2]$, define the following constant:
\begin{align}
\xi_{j,q,n}&:=\sqrt{n[\bSigma_{\mathrm N,q}]_{jj}+\chi_\kappa s_n}.
\end{align}
The next lemma shows that there exists deterministic private codebooks that cover two source sequences within desired distortion levels, which is proved in Appendix~\ref{app:euclidean-uniform}.
\begin{lemma}
\label{lem:euclidean-uniform}
Given any $\kappa\in(0,1)$, there exists a constant $C_{\mathrm{cov}}\geq1$, depending only on $(\kappa,D_1,D_2)$, such that the following result holds. For each $n\geq3$, and $q\in\calJ_n$ satisfying $\textcolor{black}{\bSigma_{\mathrm N,q}}\preceq\kappa^{-1}\bone_2$, and each $j\in[2]$, there exists a deterministic private codebook $\{v_{j,q}^n(m_j):m_j\in[M_{j,q}]\}\subset\bbR^n$ such that
\begin{align}
\label{eq:private-codebook-cover}
M_{j,q}&\leq C_{\mathrm{cov}}n^4\max\left\{1,\left(\frac{\xi_{j,q,n}^2}{nD_j}\right)^{n/2}\right\},\\
\calB_n(\bzero_n,\xi_{j,q,n})&\subseteq\bigcup_{m_j\in[M_{j,q}]}\calB_n(v_{j,q}^n(m_j),\sqrt{nD_j}).
\end{align}
\end{lemma}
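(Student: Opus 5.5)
The plan is to reduce the claim to a classical sphere-covering bound in $\bbR^n$: a ball of radius $R$ can be covered by balls of radius $\sqrt{nD_j}$ using at most a polynomial factor (in $n$) times $(R^2/(nD_j))^{n/2}$ balls. Write $R:=\xi_{j,q,n}$ and $a:=\sqrt{nD_j}$, and split into two cases.

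\emph{Case $R\le a$.} A single codeword $v_{j,q}^n(1)=\bzero_n$ works. Then $M_{j,q}=1\le C_{\mathrm{cov}}n^4$.

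\emph{Case $R>a$.} I would invoke a covering result valid uniformly in the ratio $R/a$. One option is Rogers' covering-density bound: $\bbR^n$ can be covered by translates of a ball of radius $a$ with density at most $n\log n+n\log\log n+5n$. Restricting a periodic such covering to the translates meeting $\calB_n(\bzero_n,R)$ gives a cover of $\calB_n(\bzero_n,R)$ by at most
\begin{align}
(n\log n+n\log\log n+5n)\left(\frac{R+a}{a}\right)^n
\end{align}
balls. This count is a volume ratio: the translates meeting the big ball lie inside $\calB_n(\bzero_n,R+a)$. Alternatively, one can use the Verger-Gaugry/Böröczky--Wintsche bound, which covers a sphere or ball of radius $R$ by balls of radius $a$ with at most $C\,n^{3/2}\log n\,(R/a)^n$ balls. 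The second option gives the ratio $(R/a)^n$ directly, without the $(1+a/R)^n$ loss, so I would prefer it.

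\emph{The main obstacle} is the factor $(1+a/R)^n$ if only Rogers is available, since it can be exponential when $R\approx a$. To handle it, I would use the hypotheses. Since $\bSigma_{\mathrm N,q}\preceq\kappa^{-1}\bone_2$, we have $R^2\le n\kappa^{-1}+\chi_\kappa s_n$. Moreover, in the relevant regime $R^2\ge nD_j$, so $R/a$ is bounded above by a constant depending only on $(\kappa,D_j)$.

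Then I would shrink instead of enlarge: cover $\calB_n(\bzero_n,R)$ by balls of radius $a$ using a maximal $(a-\delta)$-separated set in the ball with $\delta=a/n$. A maximal packing argument gives at most $(R/(a-\delta)+1/2\ldots)^n$ points. Equivalently, I would use the standard bound $N\le (R/a)^n\cdot\mathrm{poly}(n)$ for $R\ge a$, as in Rogers (1963) and Dumer's ``covering spheres and balls with smaller balls'' (2007). Dumer's result gives at most $\tfrac12 n\log n(1+o(1))(R/a)^n$ uniformly for $R>a$.

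Taking this as the cited ingredient, the count is at most $C n^2(R/a)^n\le C_{\mathrm{cov}}n^4(R^2/(nD_j))^{n/2}$. The constant $C_{\mathrm{cov}}$ absorbs the finitely many small-$n$ cases and the $\kappa,D_j$ dependence from the bounded ratio. Combining the two cases yields the stated max.
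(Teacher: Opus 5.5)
Your proof is correct in outline but takes a different route from the paper. You reduce the whole lemma to an external sharp ball-covering theorem (Verger-Gaugry, Dumer). The paper instead proves the needed bound from scratch, for $R\le Tr$ with $T$ depending only on $(\kappa,D_1,D_2)$:
\begin{enumerate}[(1)]
\item It slices $\calB_n(\bzero_n,R)$ into spheres of radii $R_k=\min\{r+kh,R\}$ with $h=r/(4n)$.
\item On each sphere it takes a maximal $h$-net of size at most $(1+8Tn)^n$, which enters only through its logarithm.
\item It draws random centers on the inner sphere of radius $\sqrt{R_k^2-r_-^2}$, where $r_-=r-2h$, and lower-bounds the cap probability by $\frac{1}{2(n-1)}(r_-/R_k)^{n-1}$.
\item A union bound over the net finishes each shell. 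The margin $(r/r_-)^{n-1}\le e^{1/2}$, the $m\le 4nT+1$ shells, and the union-bound factor $n\log(1+8Tn)$ together give $C_Tn^4(R/r)^n$.
\end{enumerate}
This is the ``shrink, don't enlarge'' mechanism you correctly identify as necessary to avoid the $(1+a/R)^n$ loss of a restricted Rogers space covering. The paper implements it with elementary random coding rather than a cited cap-covering theorem.

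\textbf{Trade-off.} Your route buys a sharper polynomial prefactor. If the cited bound is uniform in $R/a$, it also removes the reliance on $R/a\le T$ for large $n$. The paper's route buys self-containedness and explicit constants. Since the prefactor affects $\log M_{j,q}$ only by $O(\log n)$, nothing is lost for Theorem~\ref{thm:main}.

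\textbf{Corrections.}
\begin{enumerate}[(i)]
\item Your aside about a maximal $(a-\delta)$-separated set is wrong, and it is not ``equivalent'' to the sharp bound. The packing argument gives only $(1+2R/(a-\delta))^n=2^n\bigl(R/(a-\delta)+1/2\bigr)^n$ points. That is an exponential loss precisely when $R\approx a$, which can occur here, e.g., when $R_j^*=0$. Drop this sentence.
\item The bound $R/a\le\sqrt{(\kappa^{-1}+\chi_\kappa)/D_j}$ comes from $\bSigma_{\mathrm N,q}\preceq\kappa^{-1}\bone_2$ and $s_n\le n$. It does not follow from $R^2\ge nD_j$.
\end{enumerate}
Because the citation carries all the content, commit to one precise statement and check that it holds uniformly as $R/a\downarrow1$. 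Handle the finitely many small $n$ with the crude packing bound $(1+2T)^n$, as you indicate.
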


\subsubsection{Encoding and Decoding}
We now specify the encoders and decoders using codebooks $\{\bw_q^n(m_0):~m_0\in[M_{0,q}],~q\in\calJ_n\}$, $\{v_{1,q}^n(m_1):~m_1\in[M_{1,q}]\}$ and $\{v_{2,q}^n(m_2):~m_2\in[M_{2,q}]\}$. Recall that $\bx^n=(x^n,y^n)$, $(\bx^n)_1=x^n$ and $(\bx^n)_2=y^n$. For each $q\in\calJ_n$, $m_0\in[M_{0,q}]$, and $j\in[2]$, define the following sets
\begin{align}
\calA_{0,n,q}(\bx^n)&:=\Big\{m_0\in[M_{0,q}]:\norm{(\bx^n)_j-\bw_q^n(m_0)_j}^2\leq\xi_{j,q,n}^2,~\forall~j\in[2]\Big\},
\\
\calA_{j,n,q}(\bx^n,m_0)&:=\Big\{m_j\in[M_{j,q}]:\norm{(\bx^n)_j-\bw_q^n(m_0)_j-v_{j,q}^n(m_j)}^2\leq nD_j\Big\}.
\end{align}
The set $\calA_{0,n,q}(\bx^n)$ collects the indices of common codewords such that the distortion between each source sequence and the common codeword is within distortion level that can be covered by private codewords in Lemma \ref{lem:euclidean-uniform} while the set $\calA_{j,n,q}(\bx^n,m_0)$ collects indices of private codewords that cover the $j$-th source sequence within target distortion levels.

Let $\calH_{n,q}:=\{\bx^n:\calA_{0,n,q}(\bx^n)\ne\emptyset\}$. Let $0$ be the fallback common symbol and define the following alphabet:
\begin{align}
\calM_{0,n}&:=\{0\}\cup\Big(\cup_{q\in\calJ_n}(\{q\}\times[M_{0,q}])\Big),
\end{align}
whose size is
\begin{align}
\label{eq:common-global-size}
M_0&:=1+\sum_{q\in\calJ_n}M_{0,q}.
\end{align}
Furthermore, for each $j\in[2]$, define
\begin{equation}
\label{eq:private-global-size}
M_j:=
\begin{cases}
\displaystyle\max_{q\in\calJ_n}M_{j,q}, & \text{if }\calJ_n\neq\varnothing,\\
1, & \text{if }\calJ_n=\varnothing.
\end{cases}
\end{equation}

Let $\iota_n:\calM_{0,n}\to[M_0]$ be a fixed bijection with $\iota_n(0)=1$, known to the encoder and both decoders. This function enables us to assign a single common message for each pair $(q,m_0)\in\calJ_n\times[M_{0,q}]$. For each $q\in\calJ_n$, define an encoder $f_q=(f_{0,q},f_{1,q},f_{2,q})$ such that if $\bx^n\in\calH_{n,q}$,  $f_{0,q}(\bx^n)$ is the smallest index in $\calA_{0,n,q}(\bx^n)$ and, for each $j\in[2]$, $f_{j,q}(\bx^n)$ is the smallest index in $\calA_{j,n,q}(\bx^n,f_{0,q}(\bx^n))$; otherwise, $f_q(\bx^n)=(1,1,1)$. Now, define the encoder $f=(f_0,f_1,f_2)$ as follows:
\begin{align}
\label{def:ourencoder}
f(\bx^n)&:=
\begin{cases}
\big(\iota_n(q,f_{0,q}(\bx^n)),f_{1,q}(\bx^n),f_{2,q}(\bx^n)\big),&\text{if }\bx^n\in\calT_{n,q}\cap\calH_{n,q},\mathrm{~for~some~} q\in\calJ_n,\\
\big(\iota_n(0),1,1\big),&\text{otherwise}
\end{cases}.
\end{align}
We next specify the decoder operation. Fix any $j\in[2]$. For each $q\in\calJ_n$, define
\begin{align}
\label{eq:shell-decoder-rule}
\phi_{j,q}(m_0,m_j)&:=\bw_q^n(m_0)_j+v_{j,q}^n(m_j).
\end{align}
Given a common message other than $\iota_n(0)$, decoder $\phi_j$ uses the inverse function $\iota_n^{-1}$ to obtain $(q,m_0)$ use compresses messages $(m_0,m_j)$ to generate the source sequence estimate as follows:
\begin{align}
\phi_j(\iota_n(q,m_0),m_j)&:=\phi_{j,q}(m_0,m_j).
\end{align}
If the common message equals $\iota_n(0)$, each decoder outputs source sequence estimate as $\bzero_n$.

\subsection{Covariance Covering}
\label{sec:assembly}

Recall from \eqref{def:quansetK0} that  $\mathfrak{q}_n(\calK_0)=(\Sigma_1,\ldots,\Sigma_{N_n})$ and the definitions of $\calF_q$ in \eqref{def:Fq} and the optimizer $\bSigma_{\mathrm N,q}$ in~\eqref{eq:cell-residual-choice}. The following lemma bounds the message sizes required to ensure that the reproduction of our coding scheme satisfies the distortion constraints for both decoders, which named covariance shell covering lemma.
\begin{lemma}
\label{lem:shell-cover}
There exist constants $\kappa\in(0,1)$ and $C\in\bbR_+$, and an integer $n_0\in\bbN$ such that for every $n\geq n_0$ and each $q\in\calJ_n$, $\bx^n\in\calT_n(\bSigma_q)$ and $j\in[2]$,
\begin{align}
\label{eq:shell-distortion-guarantee}
d\big((\bx^n)_j,\phi_{j,q}(f_{0,q}(\bx^n),f_{j,q}(\bx^n))\big)&\leq D_j,\\
\label{eq:shell-M0}
\log M_{0,q}&\leq\frac n2\log\frac{|\bSigma_q|}{|\bSigma_{\mathrm N,q}|}+Cs_n,\\
\label{eq:shell-Mj}
\log M_{j,q}&\leq\frac n2\left[\log\frac{[\bSigma_{\mathrm N,q}]_{jj}}{D_j}\right]^++Cs_n+C\log n.
\end{align}
\end{lemma}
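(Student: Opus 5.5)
The plan is to prove the three claims in reverse order of difficulty. First comes the choice of $\kappa$, then the two rate bounds, which are essentially bookkeeping, and finally the distortion guarantee, which requires a random-coding argument made uniform over the whole shell $\calT_n(\bSigma_q)$.

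\textbf{Choice of $\kappa$.} Since $\bSigma_{\mathrm N,q}\preceq\bSigma_q$ and $\bSigma_q\in\calK$ for $n\geq1/\eta_{\mathrm{cov}}$, every eigenvalue of $\bSigma_{\mathrm N,q}$ is at most $\lambda_{\max}(\bSigma_{XY})+2\eta_{\mathrm{cov}}$. Taking $\kappa^{-1}$ at least this number ensures $\bSigma_{\mathrm N,q}\preceq\kappa^{-1}\bone_2$ for all $q$, so Lemma~\ref{lem:euclidean-uniform} applies with a constant $C_{\mathrm{cov}}$ depending only on $(\kappa,D_1,D_2)$.

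\textbf{The bound \eqref{eq:shell-M0}.} This is immediate from \eqref{eq:M0-random} with any $C\geq1$, since $\lceil e^a\rceil\leq e^{a+1}$ and $1\leq s_n$.

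\textbf{The bound \eqref{eq:shell-Mj}.} Write $a=[\bSigma_{\mathrm N,q}]_{jj}$ and $\delta=\chi_\kappa s_n/n$, so that $\xi_{j,q,n}^2/(nD_j)=(a+\delta)/D_j$. Then split into two cases.
\begin{itemize}
\item If $a\geq D_j$, use $\log(a+\delta)\leq\log a+\delta/D_j$. This gives $\frac n2\log\frac{a+\delta}{D_j}\leq\frac n2\log\frac a{D_j}+\frac{\chi_\kappa s_n}{2D_j}$.
\item If $a<D_j$, then $(a+\delta)/D_j\leq1+\delta/D_j$. The maximum in Lemma~\ref{lem:euclidean-uniform} is then at most $\exp(\chi_\kappa s_n/(2D_j))$.
\end{itemize}
In both cases, taking logarithms of the bound in Lemma~\ref{lem:euclidean-uniform} yields \eqref{eq:shell-Mj}. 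The constant is $C\geq\max_j\chi_\kappa/(2D_j)+4+\log C_{\mathrm{cov}}$.

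\textbf{Reduction of the distortion guarantee to common covering.} Once some $m_0\in\calA_{0,n,q}(\bx^n)$ exists, the residual $(\bx^n)_j-\bw_q^n(m_0)_j$ lies in $\calB_n(\bzero_n,\xi_{j,q,n})$. Lemma~\ref{lem:euclidean-uniform} then provides $m_j$ with squared error at most $nD_j$. Hence \eqref{eq:shell-distortion-guarantee} follows once we show that, for all large $n$, some realization of the common codebook has $\calA_{0,n,q}(\bx^n)\neq\emptyset$ for every $\bx^n\in\calT_n(\bSigma_q)$. If the lemma is to hold for a fixed realization, the codebook must be fixed by this existence argument.

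\textbf{Covering probability for a fixed $\bx^n$.} Consider the test channel $\bX=\bW+\bN$ with $\bW\sim\calN(\bzero_2,\bSigma_q-\bSigma_{\mathrm N,q})$ independent of $\bN\sim\calN(\bzero_2,\bSigma_{\mathrm N,q})$. Given $\bx^n$, the conditional law of $\bW^n$ is Gaussian with i.i.d.\ columns. The log-likelihood ratio between this conditional law and the codeword law $\calN(\bzero_2,\bSigma_q-\bSigma_{\mathrm N,q})^{\otimes n}$ is a quadratic form $\imath(\bx^n;\bw^n)$.

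The key structural step uses the KKT conditions of the log-det maximization \eqref{eq:cell-residual-choice}. With multipliers $\mu_j\geq0$ for the diagonal constraints, and a possibly active constraint $\bSigma_{\mathrm N}\preceq\bSigma_q$ handled by restricting $\bW$ to the range of $\bSigma_q-\bSigma_{\mathrm N,q}$, these conditions should give the decomposition
\begin{align}
\imath(\bx^n;\bw^n)&=\frac n2\log\frac{|\bSigma_q|}{|\bSigma_{\mathrm N,q}|}+\sum_{j\in[2]}\mu_j\big(\norm{(\bx^n)_j-(\bw^n)_j}^2-n[\bSigma_{\mathrm N,q}]_{jj}\big)+O\big(n\norm{\hat\bSigma(\bx^n)-\bSigma_q}\big).
\end{align}
The last term is $O(C_0)$ on the shell. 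On the covering event $\calA$ of a single codeword, therefore, $\imath\leq\frac n2\log\frac{|\bSigma_q|}{|\bSigma_{\mathrm N,q}|}+(\mu_1+\mu_2)\chi_\kappa s_n+O(1)$.

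Under the conditional law, the two distortions are sums of $n$ independent terms with means $n[\bSigma_{\mathrm N,q}]_{jj}+O(1)$ and standard deviation $\Theta(\sqrt n)$. Since $s_n\to\infty$, the Berry--Esseen theorem gives conditional probability of $\calA$ at least a constant $c>0$, uniformly in $q$ and in $\bx^n$ on the shell. A change of measure then gives
\begin{align}
\Pr[\text{one codeword covers }\bx^n]&\geq c\exp\Big(-\frac n2\log\frac{|\bSigma_q|}{|\bSigma_{\mathrm N,q}|}-(\mu_1+\mu_2)\chi_\kappa s_n-O(1)\Big).
\end{align}
Consequently, the probability that none of the $M_{0,q}$ codewords covers $\bx^n$ is at most $\exp\{-c'e^{s_n/8-(\mu_1+\mu_2)\chi_\kappa s_n}\}$. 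This is doubly exponentially small provided the exponent $1/8-(\mu_1+\mu_2)\chi_\kappa$ is positive.

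\textbf{Main obstacle: balancing the slacks.} The covering slack inside $\xi_{j,q,n}$ must be large enough for the Berry--Esseen step yet small relative to the $s_n/8$ excess in $M_{0,q}$. The multipliers $\mu_j$ are bounded uniformly over $\calK$ by continuity, so this balance should be achievable by splitting the slack. A constant fraction of $\chi_\kappa s_n$ would serve the conditional-probability step, and only a $o(s_n)$ part of the overshoot would be charged in the change of measure, by intersecting $\calA$ with the event that the distortions exceed $n[\bSigma_{\mathrm N,q}]_{jj}+\chi_\kappa s_n-o(s_n)$. That intersected event still has probability $\Omega(s_n/\sqrt n)$, which costs only $O(\log n)$ in the exponent. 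Getting this bookkeeping right is the part I expect to need the most care.

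\textbf{Uniformity over the shell.} The shell $\calT_n(\bSigma_q)$ is a compact subset of a sphere-like set of radius $O(\sqrt n)$. Take a Euclidean net of mesh $n^{-2}$, which has $\exp(O(n\log n))$ points. Perturbing $\bx^n$ by $n^{-2}$ changes each $\norm{(\bx^n)_j-\bw_j}^2$ by $O(n^{-1})$ whenever $\norm{\bw_j}=O(\sqrt n)$, and codewords violating this bound can be discarded at doubly exponentially small cost. The margin $\chi_\kappa s_n$ absorbs this perturbation if the net points are covered with a slightly smaller radius. A union bound over the net, with doubly exponential failure probabilities against a $\exp(O(n\log n))$ net, then shows that a deterministic common codebook covering all of $\calT_n(\bSigma_q)$ exists for $n\geq n_0$.
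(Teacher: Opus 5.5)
Your overall architecture matches the paper's. You use a Gaussian common codebook drawn from the test channel, a change of measure through the information density, a fine net of the shell with a union bound against doubly exponentially small failure probabilities, and deterministic ball covering of the residuals via Lemma~\ref{lem:euclidean-uniform}. The rate bookkeeping, including your case split for \eqref{eq:shell-Mj}, is fine.

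The gap is in the single-point covering estimate, and it starts from a sign error. Write $\Delta_j:=\norm{(\bx^n)_j-(\bw^n)_j}^2$. From \eqref{eq:J-matrix} and stationarity $\tfrac12\bSigma_{\mathrm N,q}^{-1}=\diag(\mu_1,\mu_2)+\blambda_{\mathrm{cov}}$ with $\blambda_{\mathrm{cov}}(\bSigma_q-\bSigma_{\mathrm N,q})=\bzero_{2\times2}$, one gets
\[
\imath(\bx^n;\bw^n)=\frac n2\log\frac{|\bSigma_q|}{|\bSigma_{\mathrm N,q}|}-\sum_{j\in[2]}\mu_j\big(\Delta_j-n[\bSigma_{\mathrm N,q}]_{jj}\big)+O(C_0).
\]
Your treatment of the $\blambda_{\mathrm{cov}}$ part is right: it is deterministic and $O(C_0)$ on the shell, because $\blambda_{\mathrm{cov}}$ annihilates the range of $\bSigma_q-\bSigma_{\mathrm N,q}$. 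But the information density \emph{decreases} with the residual distortion. So the covering event $\calA$ bounds $\imath$ from below, not from above. Codewords whose residuals are $\Theta(\sqrt n)$ shorter than typical lie in $\calA$ and have $\imath-nI_q=\Theta(\sqrt n)$. Hence a bound $Q(\calA\mid\bx^n)\geq c$ does not yield a hit probability of order $e^{-nI_q-O(s_n)}$.

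Your slack-balancing discussion is therefore aimed at the wrong obstacle. This covers the condition $(\mu_1+\mu_2)\chi_\kappa<1/8$ and the idea of intersecting with large distortions, which under your own sign would increase $\imath$. With the correct sign, the slack $\chi_\kappa s_n$ only helps.

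The real obstacle is that the change of measure must be restricted to $\calA\cap\{\sum_j\mu_j(\Delta_j-n[\bSigma_{\mathrm N,q}]_{jj})\geq-s_n/16\}$. This confines $\sum_j\mu_j\Delta_j$, and with $\calA$ the pair $(\Delta_1,\Delta_2)$ when both $\mu_j>0$, to a window of width $O(s_n)=o(\sqrt n)$ around its posterior mean. That probability generally tends to zero, so no constant-probability Berry--Esseen statement suffices.

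What you need is a polylogarithmic lower bound for a two-dimensional window, not the one-dimensional $\Omega(s_n/\sqrt n)$ you invoke. It must hold uniformly over cells, design rates and shell points. The covariance of $(\Delta_1,\Delta_2)$ is not uniformly nondegenerate over that range, for example when $\bSigma_q-\bSigma_{\mathrm N,q}$ or one of its diagonal entries is small. So a covariance-normalized multivariate Berry--Esseen bound does not give uniform constants directly.

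This is exactly what the paper's Lemma~\ref{lem:small-ball} supplies. It rotates $\bx^n$ onto two coordinates and splits $\bZ^n=[\bA\ \bB]$. It then requires $\norm{\bA}\leq a/\log n$, a Gaussian small ball of probability of order $(\log n)^{-4}$. It also requires $\norm{\bB\bB^{\rvT}-(n-2)\bone_2}\leq as_n$, obtained from a three-dimensional Berry--Esseen bound for i.i.d.\ vectors with fixed covariance $\diag(2,1,2)$, with probability of order $(\log n)^{-3}$. Together these control the whole residual Gram fluctuation, hence keep $(J,E_1,E_2)$ within $s_n/64$ of its mean with probability $c(\log n)^{-7}$, at a cost of only $O(\log\log n)$ in the exponent.

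Relatedly, bounding the multipliers $\mu_j$ and $\bSigma_{\mathrm N,q}^{-1}$ uniformly needs a lower bound $\bSigma_{\mathrm N,q}\succeq\kappa\bone_2$, not only the upper bound you chose $\kappa$ for. The paper obtains this lower bound from the feasibility of $\delta\bone_2$.
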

The proof of Lemma~\ref{lem:shell-cover} is provided in Appendix~\ref{app:shell-cover}.

Fix three non-negative real numbers $(r_0,r_1,r_2)\in\bbR_{+,0}^3$. Recall the definitions of the joint excess-distortion probability $\veps_n$ in~\eqref{eq:Pe}, the empirical covariance $\Sigma(\cdot)$ in \eqref{eq:emp-cov} and $s_n$ in~\eqref{eq:sn-definition}. Let $C$ be the constant in Lemma \ref{lem:shell-cover} and define $h_n:=C(s_n+\log n)$. For each for $i\in[n]$, let $\bX_i:=(X_i,Y_i)^{\rvT}$ and let $\bX^n:=[\bX_1,\ldots,\bX_n]$, and $\hat{\bSigma}_n:=\hat{\bSigma}(\bX^n)$. The following lemma upper bounds the excess-distortion probability of our coding scheme.
\begin{lemma}
\label{lem:operational}
With the constant $C$ in $h_n$ chosen sufficiently large, there exists an integer $n_0$ such that the following holds for every $n\geq n_0$, uniformly over $(r_0,r_1,r_2)\in\bbR_{+,0}^3$. Both $C$ and $n_0$ are independent of the design rates. There exists an $(n,M_0,M_1,M_2)$-code per Def. \ref{def:code} such that for each $j\in[0:2]$,
\begin{align}
\label{eq:assembled-sizes}
\log M_j&\leq nr_j+h_n,
\end{align}
and 
\begin{align}
\label{eq:operational-bound}
\veps_n&\leq P_{XY}^n\big\{\hat{\bSigma}_n\notin\calK_0\big\}
+P_{XY}^n\big\{\hat{\bSigma}_n\in\calK_0,\ \rvR_0(r_1,r_2,D_1,D_2|Q_{XY}^{\mathfrak{q}_n(\hat{\bSigma}_n)})>r_0\big\}.
\end{align}
\end{lemma}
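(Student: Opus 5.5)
We take the code to be the one built in the coding scheme above, with the given design rates $(r_0,r_1,r_2)$. Lemma~\ref{lem:shell-cover} then supplies the per-shell message sizes and distortion guarantees, and the proof has two parts: bounding the message sizes and bounding the error probability.

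\emph{Message sizes.} Fix $q\in\calJ_n$. We first show that the first-order quantities attached to $\bSigma_{\mathrm N,q}$ are controlled by the design rates.
\begin{itemize}
\item[(i)] Every $\bSigma_{\rmN}\in\calF_q$ satisfies $\Sigma_{\mathrm N,jj}\le D_je^{2r_j}$. Hence $\frac12[\log([\bSigma_{\mathrm N,q}]_{jj}/D_j)]^+\le r_j$, using $r_j\ge0$.
\item[(ii)] We need $\frac12\log(|\bSigma_q|/|\bSigma_{\mathrm N,q}|)\le \rvR_0(r_1,r_2,D_1,D_2|Q^{\bSigma_q}_{XY})\le r_0$, where the second inequality is just the definition of $\calJ_n$.
\end{itemize}
For the first inequality in (ii), we would invoke the Gaussian Gray--Wyner characterization from Lemma~\ref{prop:first-order-selection} and \cite{yu2023}. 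There the optimal common rate for a Gaussian source with covariance $\bSigma_q$ equals $\min_{\bSigma_{\rmN}\in\calF_q}\frac12\log(|\bSigma_q|/|\bSigma_{\rmN}|)$. The optimum is attained by a jointly Gaussian $W$ with $X-W$ and $Y-W$ having covariance $\bSigma_{\rmN}$, and with private rates $\frac12\log(\Sigma_{\mathrm N,jj}/D_j)$. This identification (max-determinant residual equals the minimal common rate) is the step I expect to be the main obstacle. It requires checking that Yu's optimizer is Gaussian and feasible in $\calF_q$ on all four regimes $\calP_1$--$\calP_4$ as well as on the axes. If only the inequality ``$\le$'' is needed, it suffices to exhibit a Gaussian test channel achieving $\rvR_0$. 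The converse bound of \cite{yu2023} applied to the Gaussian source $Q_{XY}^{\bSigma_q}$ gives exactly that Gaussian $W$ is optimal.

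We also need the hypothesis $\bSigma_{\mathrm N,q}\preceq\kappa^{-1}\bone_2$ of Lemma~\ref{lem:euclidean-uniform}, which is behind Lemma~\ref{lem:shell-cover}. It holds because $\bSigma_{\mathrm N,q}\preceq\bSigma_q\in\calK$ and $\calK$ is bounded, so we choose $\kappa^{-1}\ge\|\bSigma_{XY}\|+2\eta_{\mathrm{cov}}$.

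With (i) and (ii), \eqref{eq:shell-M0}--\eqref{eq:shell-Mj} give $\log M_{j,q}\le nr_j+Cs_n+C\log n$ for $j\in[2]$ and $\log M_{0,q}\le nr_0+Cs_n$. Taking the maximum over $q$ gives the private bound \eqref{eq:assembled-sizes}. For the common message, \eqref{eq:common-global-size} and \eqref{eq:quantized-center-count} give
\begin{align}
\log M_0\le \log\big(1+N_n e^{nr_0+Cs_n}\big)\le nr_0+Cs_n+3\log n+O(1).
\end{align}
Enlarging $C$ absorbs this into $h_n$. None of these constants depends on $(r_0,r_1,r_2)$, since Lemma~\ref{lem:shell-cover} is uniform in $q$.

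\emph{Error probability.} Take $\bx^n$ with $\hat{\bSigma}(\bx^n)\in\calK_0$, and let $q$ be such that $\mathfrak{q}_n(\hat{\bSigma}(\bx^n))=\bSigma_q$.
\begin{itemize}
\item If $q\notin\calJ_n$, then $\rvR_0(r_1,r_2,D_1,D_2|Q^{\mathfrak{q}_n(\hat{\bSigma}_n)}_{XY})>r_0$. These sequences are counted in the second term of \eqref{eq:operational-bound}.
\item If $q\in\calJ_n$, then $\bx^n\in\calT_{n,q}\subseteq\calT_n(\bSigma_q)$. Lemma~\ref{lem:shell-cover} guarantees the distortion bound \eqref{eq:shell-distortion-guarantee}, which needs the encoder's sets to be nonempty. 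That is, $\bx^n\in\calH_{n,q}$ and $\calA_{j,n,q}\neq\emptyset$, which is what the lemma's distortion guarantee asserts. So the encoder uses the branch $q$, and both distortions are at most $D_j$.
\end{itemize}
Hence $\calE_n\subseteq\{\hat{\bSigma}_n\notin\calK_0\}\cup\{\hat{\bSigma}_n\in\calK_0,\ q\notin\calJ_n\}$, and the union bound yields \eqref{eq:operational-bound}.

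The codebook randomness inside Lemma~\ref{lem:shell-cover} must be fixed to a deterministic realization. We would make sure the lemma is read as existence of common codebooks covering every shell simultaneously, so that a single deterministic code works for all $q$ at once. Finally, $n_0$ is taken as the maximum of the $n_0$ from Lemma~\ref{lem:shell-cover} and the condition $1/n\le\eta_{\mathrm{cov}}$.
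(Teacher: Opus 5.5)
Your proposal is correct and follows the paper's proof of Lemma~\ref{lem:operational} essentially step for step. The error event is contained in the atypical-covariance event together with the cells $q\notin\calJ_n$, and on every retained cell Lemma~\ref{lem:shell-cover} guarantees both distortions. The sizes follow from \eqref{eq:shell-M0}--\eqref{eq:shell-Mj}, together with $\frac12\log(|\bSigma_q|/|\bSigma_{\mathrm N,q}|)\le\rvR_0\le r_0$, the private constraint built into $\calF_q$, and $N_n=O(n^3)$.

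On the step you flagged, only the direction you state is needed. That is, a Gaussian test channel whose residual covariance lies in $\calF_q$ and attains $\rvR_0$. This is an achievability fact, not Yu's converse. The paper supplies it from Yu's characterization plus the explicit regime-by-regime construction in Appendix~\ref{app:shell-cover}, which also checks that Yu's formula applies uniformly on $\calK$.
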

\begin{proof}
We first bound the excess-distortion probability.
It follows from our coding scheme and Lemma \ref{lem:shell-cover} that the joint excess-distortion probability satisfies
\begin{align}
\veps_n&\leq P_{XY}^n\{\hat{\bSigma}_n\notin\calK_0\}
+\sum_{q\in[N_n]\setminus\calJ_n}P_{XY}^n\{\hat{\bSigma}_n\in\calK_0,\mathfrak{q}_n(\hat{\bSigma}_n)=\bSigma_q\},\label{eq:pf-assembly-error-cells}\\
&=P_{XY}^n\{\hat{\bSigma}_n\notin\calK_0\}
+P_{XY}^n\{\hat{\bSigma}_n\in\calK_0,\rvR_0(r_1,r_2,D_1,D_2|Q_{XY}^{\mathfrak{q}_n(\hat{\bSigma}_n)})>r_0\},\label{eq:pf-assembly-error-event}
\end{align}
where \eqref{eq:pf-assembly-error-event} follows from the definition of $\calJ_n$ in \eqref{eq:coded-index-set}, which implies that if $q\notin\calJ_n$, $\rvR_0(r_1,r_2,D_1,D_2|Q_{XY}^{\Sigma_q})>r_0$.

We next show that the rate constraints are satisfied. For each $q\in\calJ_n$, it follows from the Gaussian first-order characterization
in~\cite[Theorem~3 and Corollary~2]{yu2023} and the covariance representation verified in Appendix~\ref{app:shell-cover} that
\begin{align}
\rvR_0(r_1,r_2,D_1,D_2|Q_{XY}^{\bSigma_q})
&=\min_{\substack{\bSigma_{\rmN}\in\calF_q\\
\bSigma_{\rmN}\succ0}}
\frac12\log\frac{|\bSigma_q|}{|\bSigma_{\rmN}|}\nonumber\\
&=\frac12\log\frac{|\bSigma_q|}{|\bSigma_{\mathrm N,q}|}
\leq r_0,\label{eq:cell-common-budget}
\end{align}
where the second equality follows from~\eqref{eq:cell-residual-choice},
and the inequality follows since $q\in\calJ_n$ (cf. \eqref{eq:coded-index-set}). Furthermore, it follows from the definition of the optimizer $\bSigma_{\mathrm N,q}$ in~\eqref{eq:cell-residual-choice}, in particular, the constraint $\bSigma_{\mathrm N,q}\in\calF_q$ (cf. \eqref{def:Fq}) that for each $j\in[2]$,
\begin{align}
\label{eq:cell-private-budget}
\frac12\left[\log\frac{[\bSigma_{\mathrm N,q}]_{jj}}{D_j}\right]^+&\leq r_j.
\end{align}
Recall from our coding scheme that if $\calJ_n=\varnothing$, we have $M_0=M_1=M_2=1$, and rate constraints are satisfied. In the following, we consider the case $\calJ_n\neq\varnothing$. It follows from Lemma~\ref{lem:shell-cover} that
\begin{align}
M_0&\leq1+N_ne^{nr_0+Cs_n}\leq \exp\bigl(nr_0+Cs_n+C\log n\bigr),
\end{align}
where the first inequality follows from the results in~\eqref{eq:shell-M0} and \eqref{eq:cell-common-budget}, and the second inequality follows from \eqref{eq:quantized-center-count} which implies that $N_n\leq Cn^3$. Furthermore, for each $j\in[2]$, it follows from \eqref{eq:cell-private-budget} and~\eqref{eq:shell-Mj} that
\begin{align}
M_j&\leq \exp\bigl(nr_j+Cs_n+C\log n\bigr).
\end{align}
Choosing the uniform constant $C$ in $h_n$ large enough to absorb the cell-index term verifies the rate constraints. 
\end{proof}

\subsection{Final Steps}
\label{sec:smooth-direct}
Fix a boundary rate tuple $\br^*=(R_0^*,R_1^*,R_2^*)$ with $R_0^*>0$, and fix $(L_0,L_1,L_2)$ satisfying the constraints on the right-hand side of~\eqref{eq:smooth-region}. Define $b_n:=(\log n)^{-1/4}$ and 
\begin{align}
\label{eq:target-common}
r_{0,n}&:=R_0^*+\frac{L_0+b_n}{\sqrt n},\\
\label{eq:rjn-design}
r_{j,n}&:=R_j^*+\frac{L_j}{\sqrt n},~j\in[2].
\end{align}
The design rates are nonnegative for all sufficiently large $n$. Indeed, if $R_j^*>0$,
$r_{j,n}>0$ eventually; if $R_j^*=0$, the constraint $L_j\geq0$ in
Theorem~\ref{thm:main} gives $r_{j,n}\geq0$. For all sufficiently large $n$
and each $j\in[0:2]$, Lemma~\ref{lem:operational} with
$(r_0,r_1,r_2)=(r_{0,n},r_{1,n},r_{2,n})$ implies
\begin{align}
\frac{\log M_{j,n}-nR_j^*}{\sqrt n}&\leq L_j+b_n+\frac{h_n}{\sqrt n}.
\end{align}
Since $b_n\to0$ and $h_n=o(\sqrt n)$, it follows that for each $j\in[0:2]$,
\begin{align}
\label{eq:overhead-vanishes}
\limsup_{n\to\infty}\frac{\log M_{j,n}-nR_j^*}{\sqrt n}&\leq L_j.
\end{align}
Furthermore, substituting~\eqref{eq:target-common} and~\eqref{eq:rjn-design} into~\eqref{eq:operational-bound} leads to
\begin{align}
\label{eq:direct-start}
\veps_n&\leq P_{XY}^n\{\hat{\bSigma}_n\notin\calK_0\}+P_{XY}^n\left\{\hat{\bSigma}_n\in\calK_0,
\sqrt n[\rvR_0(r_{1,n},r_{2,n},D_1,D_2|Q_{XY}^{\mathfrak{q}_n(\hat{\bSigma}_n)})-R_0^*]>L_0+b_n\right\}.
\end{align}
In the remaining proof, it suffices to show that $\limsup_{n\to\infty}\veps_n\leq\veps$.

We now bound the first term in \eqref{eq:direct-start}. Recall that $\hat{\bSigma}_n=\hat{\bSigma}(\bX^n)$. Since the source is Gaussian and memoryless,
\begin{equation}
\bbE_{P_{XY}^n}\Big[\norm{\textcolor{black}{\hat{\bSigma}_n}-\textcolor{black}{\bSigma_{XY}}}^2\Big]
=\frac{(\tr\textcolor{black}{\bSigma_{XY}})^2+\tr(\textcolor{black}{\bSigma_{XY}^2})}{n}.
\end{equation}
Furthermore, it follows from the definition of $\calK_0$ in~\eqref{eq:inner-covariance-neighborhood} and Markov's inequality that
\begin{align}
\label{eq:atypical-covariance-vanishes}
P_{XY}^n\{\textcolor{black}{\hat{\bSigma}_n}\notin\calK_0\}
&\leq\frac{(\tr\textcolor{black}{\bSigma_{XY}})^2+\tr(\textcolor{black}{\bSigma_{XY}^2})}{n\eta_{\mathrm{cov}}^2}\\
&=\frac{2\sigma_X^4+2\sigma_Y^4+2\sigma_X^2\sigma_Y^2(1+\rho_{XY}^2)}{n\eta_{\mathrm{cov}}^2}
=O\bigg(\frac{1}{n}\bigg).
\end{align}

We next bound the second term of~\eqref{eq:direct-start}. Recall that
$\bX=(X,Y)^{\rvT}$. For $\bSigma\in\calK$, the definition
of the distortion-tilted information density in~\eqref{eq:Gaussian-influence-parallel}
gives
\begin{align}
\bbE_{Q_{XY}^{\bSigma}}[\jmath(X,Y)]
&=R_0^*+\bbE_{Q_{XY}^{\bSigma}}[\tr(\bg\bX\bX^{\rvT})]-\tr(\bg\bSigma_{XY})\nonumber\\
&=R_0^*+\tr(\bg\bSigma)-\tr(\bg\bSigma_{XY})\nonumber\\
&=R_0^*+\ip{\bg}{\bSigma-\bSigma_{XY}},
\end{align}
where we used $\bX^{\rvT}\bg\bX=\tr(\bg\bX\bX^{\rvT})$ and
$\bbE_{Q_{XY}^{\bSigma}}[\bX\bX^{\rvT}]=\bSigma$. By
Proposition~\ref{ass:regularity}, as $(\bSigma,r_1,r_2)$ approaches
$(\bSigma_{XY},R_1^*,R_2^*)$ through $\calK\times\bbR_{+,0}^2$,
\begin{align}
\label{eq:local-differential}
\rvR_0(r_1,r_2,D_1,D_2|Q_{XY}^{\bSigma})
&=R_0^*+\ip{\bg}{\bSigma-\bSigma_{XY}}
-\sum_{j\in[2]}\lambda_j^*(r_j-R_j^*)\nonumber\\
&\quad+o\Bigl(\norm{\bSigma-\bSigma_{XY}}+\sum_{j\in[2]}|r_j-R_j^*|\Bigr).
\end{align}
We will apply \eqref{eq:local-differential} to $\bSigma=\mathfrak{q}_n(\hat{\bSigma}_n)$. Define
\begin{align}
\rvA_n
&:=\sqrt n\ip{\bg}{\hat{\bSigma}_n-\bSigma_{XY}},
\\
&=\frac1{\sqrt n}\sum_{i\in[n]}
\big[\bX_i^{\rvT}\bg\bX_i-\tr(\bg\bSigma_{XY})\big],\label{eq:ach-centered-quadratics}\\
&=\frac1{\sqrt n}\sum_{i\in[n]}
\big[\jmath(X_i,Y_i)-R_0^*\big],\label{eq:ach-quadratic-sum}
\end{align}
where Eq.~\eqref{eq:ach-centered-quadratics} follows from~\eqref{eq:emp-cov}, and Eq.~\eqref{eq:ach-quadratic-sum} follows from~\eqref{eq:Gaussian-influence-parallel}.

Define the third absolute moment by $\beta_3:=\bbE_{P_{XY}}[|\jmath(X,Y)-R_0^*|^3]$. Since $\jmath(X,Y)$ in \eqref{eq:Gaussian-influence-parallel} is a quadratic function of a Gaussian vector,
\begin{align}
\beta_3&\leq4\norm{\bg}^3\bbE_{P_{XY}}\big[\norm{\bX}^6\big]+4|\tr(\bg\bSigma_{XY})|^3<\infty,
\end{align}
where the last inequality follows since Gaussian random variables have finite sixth moments. Invoking the Berry--Esseen theorem~\cite[Theorem~1.1]{bentkus2005}, we conclude that there exists a finite positive universal constant $C_{\mathrm{BE}}$ such that
\begin{align}
\label{eq:ach-scalar-clt}
\sup_{t\in\bbR}\left|P_{XY}^n\{\rvA_n\leq t\}-\Phi(t/\sqrt V)\right|
&\leq\frac{C_{\mathrm{BE}}\beta_3}{V^{3/2}\sqrt n}.
\end{align}

For any $\hat{\bSigma}_n\in\calK_0$, define
\begin{align}
E_n
&:=\sqrt n\big[\rvR_0(r_{1,n},r_{2,n},D_1,D_2
\mid Q_{XY}^{\mathfrak{q}_n(\hat{\bSigma}_n)})-R_0^*\big]-\rvA_n+\lambda_1^*L_1+\lambda_2^*L_2.
\label{eq:global-Taylor-remainder}
\end{align}
When $\hat{\bSigma}_n\notin\calK_0$, define $E_n:=0$. Furthermore, let
$\ell:=L_0+\lambda_1^*L_1+\lambda_2^*L_2$. It follows that
\begin{align}
\sqrt n\big[\rvR_0(r_{1,n},r_{2,n},D_1,D_2
\mid Q_{XY}^{\mathfrak{q}_n(\hat{\bSigma}_n)})-R_0^*\big]
&=\rvA_n-\lambda_1^*L_1-\lambda_2^*L_2+E_n.
\end{align}
Fix $\zeta>0$. For sufficiently large $n$, it follows from~\eqref{eq:direct-start} that
\begin{align}
\veps_n&\leq P_{XY}^n\{\rvA_n+E_n>\ell+b_n\}
+P_{XY}^n\{\hat{\bSigma}_n\notin\calK_0\},\label{eq:direct-error-remainder}\\
&\leq P_{XY}^n\{\rvA_n>\ell-\zeta\}+P_{XY}^n\{|E_n|>\zeta\}
+P_{XY}^n\{\hat{\bSigma}_n\notin\calK_0\},\label{eq:direct-prob}\\
&\leq\rmQ\left(\frac{\ell-\zeta}{\sqrt V}\right)
+\frac{C_{\mathrm{BE}}\beta_3}{V^{3/2}\sqrt n}
+P_{XY}^n\{|E_n|>\zeta\}+P_{XY}^n\{\hat{\bSigma}_n\notin\calK_0\},\label{eq:direct-berry-bound}
\end{align}
where Eq.~\eqref{eq:direct-error-remainder} uses the exact remainder identity
on $\{\hat{\bSigma}_n\in\calK_0\}$, Eq.~\eqref{eq:direct-prob}
follows from the union bound and $b_n\geq0$, and
Eq.~\eqref{eq:direct-berry-bound} follows from~\eqref{eq:ach-scalar-clt}.
The last term in~\eqref{eq:direct-berry-bound} vanishes
by~\eqref{eq:atypical-covariance-vanishes}. Furthermore,
Appendix~\ref{app:taylor-remainder} proves that for every $\zeta>0$,
\begin{align}
\label{eq:En-prob}
P_{XY}^n\{|E_n|>\zeta\}&\longrightarrow0.
\end{align}
Combining the above analyses, we have
\begin{align}
\label{eq:smooth-ach-fixed-zeta}
\limsup_{n\to\infty}\veps_n&\leq\rmQ\left(\frac{\ell-\zeta}{\sqrt V}\right).
\end{align}
Finally, letting $\zeta\to 0$ and using continuity of $\rmQ$ gives
\begin{align}
\limsup_{n\to\infty}\veps_n&\leq\rmQ\left(\frac{L_0+\lambda_1^*L_1+\lambda_2^*L_2}{\sqrt V}\right)\label{ach:finalstep}.
\end{align}
The achievability proof of Theorem \ref{thm:main} is completed by using the requirement that $\limsup\varepsilon\leq \varepsilon$ in \eqref{eq:second-error} 
and the result in \eqref{ach:finalstep}.

\section{Converse Proof}
\label{sec:converse}
This section presents the converse proof of Theorem~\ref{thm:main}. When $\lambda_1^*>0$ and $\lambda_2^*>0$, we use the converse bound of~\cite[Theorem~17]{liu2020} and specify how the entropy support formula of \cite{liu2020} leads to the desired converse bound. When either $\lambda_1^*=0$ or $\lambda_2^*=0$, we show how the problem reduces to the P2P case and uses the converse bound~\cite[Theorem~40]{kostinaverdu2012}.

Fix any $\gamma>0$, boundary rate tuple $\br^*=(R_0^*,R_1^*,R_2^*)$ and second-order rates $(L_0,L_1,L_2)$ in Theorem \ref{thm:main}. For all sufficiently large $n$, we consider any $(n,M_{0,n},M_{1,n},M_{2,n})$-code per Def. \ref{def:code} such that
\begin{equation}\label{eq:rate-envelope-converse}
\log M_{j,n}\leq nR_j^*+\sqrt n(L_j+\gamma),\qquad j\in[0:2].
\end{equation}

\subsection{Positive Supporting Weights}
\label{sec:smooth-converse}
We first consider the case $\blambda^*=(\lambda_1^*,\lambda_2^*)\in(0,1]^2$. Given a covariance matrix $\bSigma\in\calK$ (cf. \eqref{eq:covariance-neighborhoods}), it follows from the determinant characterization in Appendix~\ref{app:shell-cover} and~\eqref{eq:pf-explicit-rate-optimal} that for any $(r_1,r_2)\in\bbR_{+,0}^2$,
\begin{equation}\label{eq:converse-residual-minimum}
\rvR_0(r_1,r_2,D_1,D_2\mid Q_{XY}^{\bSigma})
=\inf_{\substack{\bzero_{2\times2}\prec\bSigma_{\rmN}\preceq\bSigma\\
[\bSigma_{\rmN}]_{jj}\leq D_je^{2r_j},\ j\in[2]}}
\frac12\log\frac{|\bSigma|}{|\bSigma_{\rmN}|}.
\end{equation}
Thus, we have
\begin{align}
\inf_{\br\in\calR(D_1,D_2\mid Q_{XY}^{\bSigma})}
(r_0+\lambda_1^*r_1+\lambda_2^*r_2)
&=\inf_{(r_1,r_2)\in\bbR_{+,0}^2}\big[\rvR_0(r_1,r_2,D_1,D_2\mid Q_{XY}^{\bSigma})+\lambda_1^*r_1+\lambda_2^*r_2\big]\\
&=\inf_{\bzero_{2\times2}\prec\bSigma_{\rmN}\preceq\bSigma}
\left\{\frac12\log\frac{|\bSigma|}{|\bSigma_{\rmN}|}
+\sum_{j\in[2]}\frac{\lambda_j^*}{2}\left[\log\frac{[\bSigma_{\rmN}]_{jj}}{D_j}\right]^+\right\}.
\label{eq:operational-entropy-support}
\end{align}

Recall the definition of the source covariance matrix $\bSigma_{XY}$ in \eqref{eq:source-cov}. Given a matrix $\bSigma\in\bbS_{++}^2$, define
\begin{align}\label{eq:entropy-support}
\underline F_{\blambda^*}(\bSigma)
&:=\inf_{\bzero_{2\times2}\prec\bSigma_{\rmN}\preceq\bSigma}
\left\{\frac12\log\frac{|\bSigma|}{|\bSigma_{\rmN}|}
+\sum_{j\in[2]}\frac{\lambda_j^*}{2}\log\frac{[\bSigma_{\rmN}]_{jj}}{D_j}\right\},\\
\label{eq:converse-covariance-functional}
\varphi_{\blambda^*,\bSigma_{XY}}(\bSigma)
&:=-\underline F_{\blambda^*}(\bSigma)+\frac12\log|\bSigma|
-\frac12\tr(\bSigma_{XY}^{-1}\bSigma).
\end{align}
It follows from~\cite[Proposition~5, Eqs.~(64)--(74)]{liu2020} that $\varphi_{\blambda^*,\bSigma_{XY}}(\bSigma)$ is concave. In \cite{liu2020}, a matrix $\bg_{\mathrm{ent},\blambda^*}\in\bbS^2$ is called a supergradient at $\bSigma_{XY}$ if for all $\bSigma\in\bbS_{++}^2$,
\begin{equation}\label{eq:supergradient-definition}
\varphi_{\blambda^*,\bSigma_{XY}}(\bSigma)
\leq\varphi_{\blambda^*,\bSigma_{XY}}(\bSigma_{XY})+
\ip{\bg_{\mathrm{ent},\blambda^*}}{\bSigma-\bSigma_{XY}}.
\end{equation}
Fix any $\bx=(x,y)^{\rvT}\in\bbR^2$. Define
\begin{equation}\label{eq:converse-fluctuation-definition}
g_{\blambda^*}(\bx):=\bx^{\rvT}\bg_{\mathrm{ent},\blambda^*}\bx
-\tr(\bg_{\mathrm{ent},\blambda^*}\bSigma_{XY}).
\end{equation}
Recall the definition of $\bg$ in \eqref{def:bg}. The following lemma presents the properties of $\underline F_{\blambda^*}(\bSigma)$ and $g_{\blambda^*}(\bx)$.
\begin{lemma}\label{lem:entropy-support}
When $(\lambda_1^*,\lambda_2^*)$ are both positive,  the following claims hold.
\begin{enumerate}
\item The value of the entropy support function satisfies
\begin{equation}\label{eq:entropy-support-value}
\underline F_{\blambda^*}(\bSigma_{XY})=R_0^*+\lambda_1^*R_1^*+\lambda_2^*R_2^*.
\end{equation}
\item The entropy support is differentiable at the source covariance and 
\begin{equation}\label{eq:entropy-support-gradient}
\left.\nabla_{\bSigma}\underline F_{\blambda^*}(\bSigma)\right|_{\bSigma=\bSigma_{XY}}=\bg.
\end{equation}
\item The unique supergradient in~\eqref{eq:supergradient-definition} is $\bg_{\mathrm{ent},\blambda^*}=-\bg$, and 
\begin{align}
g_{\blambda^*}(\bx)&=-[\jmath(x,y)-R_0^*]\label{eq:converse-fluctuation-variance},\\
\var_{P_{XY}}[g_{\blambda^*}(\bX)]&=V.
\end{align}
\end{enumerate}
\end{lemma}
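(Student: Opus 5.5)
We plan to prove the three claims in order, working throughout with the determinant representation \eqref{eq:converse-residual-minimum} and its Lagrangian form \eqref{eq:operational-entropy-support}. The difference between \eqref{eq:entropy-support} and \eqref{eq:operational-entropy-support} is that the positive parts $[\cdot]^+$ have been removed.

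\textbf{Claim 1.} We will show that near $\bSigma_{XY}$ the infimum in \eqref{eq:entropy-support} is attained at a feasible $\bSigma_{\rmN}^*$ with $[\bSigma^*_{\rmN}]_{jj}\geq D_j$. At such a point the positive parts are inactive, so $\underline F_{\blambda^*}(\bSigma_{XY})$ coincides with the value in \eqref{eq:operational-entropy-support}. That value is a Lagrangian lower bound on the supporting line of the region $\calR(D_1,D_2|P_{XY})$. The reason is that, by \eqref{eq:private-rate-gradient}, $(1,\lambda_1^*,\lambda_2^*)$ is the normal of the surface $r_0=\rvR_0(r_1,r_2,\cdot)$ at $\br^*$. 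We will also use that $\rvR_0$ is convex in $(r_1,r_2)$, since the region is convex. Together these give that $\br^*$ minimizes $r_0+\lambda_1^*r_1+\lambda_2^*r_2$ over the region, which yields \eqref{eq:entropy-support-value}.

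The optimizer is $\bSigma_{\mathrm N}$ with $[\bSigma_{\mathrm N}]_{jj}=D_je^{2R_j^*}$. We must verify that this optimizer is also a minimizer of the unconstrained objective in \eqref{eq:entropy-support}. Positivity of $\lambda_j^*$ matters here. If $[\bSigma_{\rmN}]_{jj}<D_j$, increasing that entry toward $D_j$ decreases $-\log|\bSigma_{\rmN}|$. This is the step where the $[\cdot]^+$ removal could fail, and we will check it with the KKT conditions of the log-det program: stationarity gives $\bSigma_{\rmN}^{-1}=\diag(\lambda_1^*/[\bSigma_{\rmN}]_{11},\lambda_2^*/[\bSigma_{\rmN}]_{22})+\bM$, where $\bM\succeq0$ is the multiplier of $\bSigma_{\rmN}\preceq\bSigma$ and satisfies complementary slackness.

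\textbf{Claim 2.} This is an envelope theorem argument. The objective $\frac12\log|\bSigma|-\frac12\log|\bSigma_{\rmN}|+\cdots$ depends on $\bSigma$ both explicitly and through the constraint $\bSigma_{\rmN}\preceq\bSigma$. The derivative is therefore $\frac12\bSigma^{-1}-\bM$, where $\bM$ is the multiplier above. Rather than computing this directly, we will use the identity $\underline F_{\blambda^*}(\bSigma)=\min_{r_1,r_2}[\rvR_0(r_1,r_2|Q^{\bSigma})+\lambda_1^*r_1+\lambda_2^*r_2]$. Proposition~\ref{prop:regularity} gives $C^1$ regularity of $\rvR_0$, and the inner minimizer is interior or right-differentiable at $R_j^*$. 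Danskin's theorem then gives $\nabla_{\bSigma}\underline F=\nabla_{\bSigma}\rvR_0=\bg$ by \eqref{def:bg}. For the boundary case $R_j^*=0$, we will argue via one-sided derivatives, using $\lambda_j^*>0$.

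\textbf{Claim 3.} The function $\varphi$ is concave by \cite[Proposition~5]{liu2020}. It is differentiable at $\bSigma_{XY}$ because $\underline F$ is (Claim 2) and $\log|\bSigma|$ is smooth. A concave function differentiable at an interior point has a unique supergradient equal to its gradient, namely $-\bg+\frac12\bSigma_{XY}^{-1}-\frac12\bSigma_{XY}^{-1}=-\bg$. Substituting into \eqref{eq:converse-fluctuation-definition} and comparing with \eqref{eq:Gaussian-influence-parallel} gives $g_{\blambda^*}(\bx)=-[\jmath(x,y)-R_0^*]$. Its variance equals $V$ by \eqref{eq:V-def}.

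The main obstacle is Claim 2 at boundary tuples where the inner minimizer sits on $r_j=0$, and the interface cases between $\calP_1$ and $\calP_3$ or $\calP_4$, where differentiability must be drawn from Proposition~\ref{prop:regularity}. We will first try to handle these cases by showing that the KKT multiplier $\bM$ is unique at $\bSigma_{XY}$.
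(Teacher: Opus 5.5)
You have a genuine gap at exactly the step you flagged in Claim~1: dropping the positive parts. Your planned KKT check cannot close it. The objective $-\frac12\log|\bSigma_{\rmN}|+\sum_{j\in[2]}\frac{\lambda_j^*}{2}\log[\bSigma_{\rmN}]_{jj}$ is not convex in $\bSigma_{\rmN}$, because the diagonal logarithms are concave. So stationarity of $\bSigma_{\rmN}^*$ does not certify a global minimum. Moreover, $D_j$ enters this program only as an additive constant, so its stationarity system cannot be what forces $[\bSigma_{\rmN}]_{jj}\geq D_j$.

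Concretely, in $\calP_2$ you have $\blambda^*=(1,1)$, and the objective is a constant minus $\frac12\log(1-\rho_{\rmN}^2)$, where $\rho_{\rmN}$ is the correlation of $\bSigma_{\rmN}$. Every diagonal $\bSigma_{\rmN}\preceq\bSigma_{XY}$ is then a global minimizer, including those with entries far below $D_j$. Your one-coordinate monotonicity heuristic also ignores that raising a single diagonal entry can violate $\bSigma_{\rmN}\preceq\bSigma_{XY}$.

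What you need is a lower bound for \emph{every} feasible $\bSigma_{\rmN}$, including those with $[\bSigma_{\rmN}]_{11}<D_1$. These correspond to $\alpha>R_X(P_X,D_1)$, i.e., to negative private rates. Your convexity of $\rvR_0$ on $\bbR_{+,0}^2$ only covers $(\alpha,\beta)\in[0,R_X(P_X,D_1)]\times[0,R_Y(P_Y,D_2)]$, so it never reaches these points.

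The missing idea is to work with $\Upsilon_{\rho_{XY}}$ on the whole quadrant. Yu's representation $\Upsilon_{\rho_{XY}}(\alpha,\beta)=\inf\{I(X,Y;W):I(X;W)\geq\alpha,\ I(Y;W)\geq\beta\}$ holds for all $(\alpha,\beta)\in\bbR_{+,0}^2$. Time-sharing makes it convex, and differentiability at $(\alpha^*,\beta^*)$ then gives $\Upsilon_{\rho_{XY}}(\alpha,\beta)\geq R_0^*+\lambda_1^*(\alpha-\alpha^*)+\lambda_2^*(\beta-\beta^*)$ on all of $\bbR_{+,0}^2$. Apply this to the Gaussian split $(X,Y)^{\rvT}=\bW+\bN$ with residual covariance $\bSigma_{\rmN}$, taking $\alpha=\frac12\log(\sigma_X^2/[\bSigma_{\rmN}]_{11})$ and $\beta=\frac12\log(\sigma_Y^2/[\bSigma_{\rmN}]_{22})$. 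Rearranging gives the desired bound for all feasible $\bSigma_{\rmN}$ at once, with equality at the constrained determinant optimizer. No localization of the minimizer is needed.

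The same gap propagates to your Claim~2. The Danskin identity $\underline F_{\blambda^*}(\bSigma)=\min_{r_1,r_2}[\cdots]$ requires dropping the positive parts at every $\bSigma$ near $\bSigma_{XY}$. You would also have to handle non-unique inner minimizers, which in $\calP_2$ fill a whole region. You can avoid all of this with the trivial bound $\underline F_{\blambda^*}(\bSigma)\leq\rvR_0(R_1^*,R_2^*,D_1,D_2\mid Q_{XY}^{\bSigma})+\lambda_1^*R_1^*+\lambda_2^*R_2^*$. It follows by evaluating at the constrained optimizer for $\bSigma$, and it is tight at $\bSigma_{XY}$ by Claim~1. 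This gives a differentiable minorant of the concave $\varphi_{\blambda^*,\bSigma_{XY}}$ that touches it at $\bSigma_{XY}$ with gradient $-\bg$. Squeezing it against the supergradient inequality forces the supergradient to be $-\bg$ and $\varphi_{\blambda^*,\bSigma_{XY}}$ to be differentiable there, which yields Claims~2 and~3 together. So invoke the concavity from Liu et al.\ already in Claim~2, not only afterwards.
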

The proof of Lemma \ref{eq:entropy-support} is provided in Appendix~\ref{app:entropy-identification}.

We now apply the positive-weight Gaussian converse of bound Liu et al.~\cite[Theorem~17]{liu2020}. Using the notation of Liu et al., take $m=2$, $Q_{Y^m}=P_{XY}$, $c_0=1$, $c_j=\lambda_j^*$ for each $j\in[2]$, and Lebesgue reference measures. It follows from~\cite[Theorem~17]{liu2020} that 
\begin{equation}\label{eq:liu-entropy-value}
d^*:=\sup_{P_{U\mid XY}}\{-I(X,Y;U)-\lambda_1^*h(X\mid U)-\lambda_2^*h(Y\mid U)\},
\end{equation}
where the supremum is over the conditional distribution $P_{U\mid XY}$ for which the mutual information $I(X,Y;U)$ and conditional entropies $(h(X\mid U),h(Y\mid U))$ are finite. It follows from the Gaussian optimality argument in~\cite[Proposition~5, Proposition~16, and Eq.~(190)]{liu2020} and~\eqref{eq:entropy-support} that
\begin{equation}\label{eq:liu-support-normalization}
d^*+\sum_{j\in[2]}\frac{\lambda_j^*}{2}\log(2\pi eD_j)
=-\underline F_{\blambda^*}(\bSigma_{XY}).
\end{equation}
Indeed, for a Gaussian common auxiliary random variable with conditional covariance $\bSigma_{\rmN}$, the conditional marginal entropies are $\tfrac12\log(2\pi e[\bSigma_{\rmN}]_{jj})$ for each $j\in[2]$. Substituting these entropies into~\eqref{eq:liu-entropy-value} gives~\eqref{eq:liu-support-normalization}. 

For each $j\in[2]$, define
\begin{equation}
v_{j,n}:=\operatorname{vol}\bigl(\calB_n(\bzero_n,\sqrt{nD_j})\bigr),
\end{equation}
where $\operatorname{vol}$ denotes Lebesgue volume. It follows from~\cite[Theorem~17, Eq.~(193)]{liu2020} that for any $\theta\in\bbR$, the joint excess-distortion probability is lower bounded as follows:
\begin{align}
\veps_n
&\geq P_{XY}^n\left\{nd^*+\sum_{i\in[n]}g_{\blambda^*}(\bX_i)\leq\theta\right\}-\exp\left\{\theta+\log M_{0,n}+\sum_{j\in[2]}\lambda_j^*\log M_{j,n}
+\sum_{j\in[2]}\lambda_j^*\log v_{j,n}\right\}.
\label{eq:liu-raw-threshold}
\end{align}
It follows from the ball-volume formula and Stirling's expansion (cf.~\cite[Eqs.~(195)--(196)]{liu2020}) that
\begin{align}
\log v_{j,n}
&=\log\frac{\pi^{n/2}(nD_j)^{n/2}}{\Gamma(n/2+1)}\\
&=\frac n2\log(2\pi eD_j)-\frac12\log(\pi n)+O(n^{-1}).
\label{eq:Gaussian-ball-volume}
\end{align}
Given any real number $a\in\bbR$, taking $\theta=nd^*-\sqrt n a$ in~\eqref{eq:liu-raw-threshold} and using~\eqref{eq:liu-support-normalization}--\eqref{eq:Gaussian-ball-volume}, we have that for sufficiently large $n$,
\begin{align}
\veps_n
&\geq P_{XY}^n\left\{\frac1{\sqrt n}\sum_{i\in[n]}g_{\blambda^*}(\bX_i)\leq-a\right\}-\exp\left\{\log M_{0,n}+\sum_{j\in[2]}\lambda_j^*\log M_{j,n}
-n\underline F_{\blambda^*}(\bSigma_{XY})-\sqrt n a+C\log n\right\},
\label{eq:liu-finite-threshold}
\end{align}
where $C$ is finite and depends only on the fixed distortions and weights. It follows from claim (iii) of Lemma~\ref{lem:entropy-support}, the sum $\sum_{i\in[n]}g_{\blambda^*}(\bX_i)$ equals $\sum_{i\in[n]}(R_0^*-\jmath(X_i,Y_i))$, which has variance $V$ and third absolute moment $\beta_3$ in~\eqref{eq:ach-scalar-clt}
Analogous to \eqref{eq:ach-scalar-clt}, it follows from the Berry--Esseen theorem that
\begin{equation}\label{eq:liu-berry-esseen}
P_{XY}^n\left\{\frac1{\sqrt n}\sum_{i\in[n]}g_{\blambda^*}(\bX_i)\leq-a\right\}
\geq\rmQ\Big(\frac a{\sqrt V}\Big)-\frac{C_{\mathrm{BE}}\beta_3}{V^{3/2}\sqrt n}.
\end{equation}

Recall that $\ell=L_0+\lambda_1^*L_1+\lambda_2^*L_2$. Fix $\gamma>0$ and set  $a_\gamma:=\ell+(2+\lambda_1^*+\lambda_2^*)\gamma$. It follows from \eqref{eq:rate-envelope-converse} and claim (i) of Lemma~\ref{lem:entropy-support} that for sufficiently large $n$,
\begin{align}
\log M_{0,n}+\sum_{j\in[2]}\lambda_j^*\log M_{j,n}
-n\underline F_{\blambda^*}(\bSigma_{XY})-\sqrt n a_\gamma
&\leq\sqrt n\big[\ell+(1+\lambda_1^*+\lambda_2^*)\gamma-a_\gamma\big]\\
&=-\gamma\sqrt n.
\label{eq:liu-envelope-remainder}
\end{align}
Combining~\eqref{eq:liu-finite-threshold}--\eqref{eq:liu-envelope-remainder} leads to
\begin{equation}\label{eq:liu-quantitative-specialization}
\veps_n\geq\rmQ\Big(\frac a{\sqrt V}\Big)
-\frac{C_{\mathrm{BE}}\beta_3}{V^{3/2}\sqrt n}
-\exp\{-\gamma\sqrt n+C\log n\}.
\end{equation}
Letting $\gamma\to 0$, we have
\begin{equation}\label{eq:liu-specialized-converse}
\liminf_{n\to\infty}\veps_n\geq\rmQ\left(\frac{L_0+\lambda_1^*L_1+\lambda_2^*L_2}{\sqrt V}\right).
\end{equation}
The converse proof of Theorem \ref{thm:main} with positive $(\lambda_1^*,\lambda_2^*)$ is complete by using the requirement that $\limsup\varepsilon\leq \varepsilon$ in \eqref{eq:second-error} and the result in \eqref{eq:liu-specialized-converse}, which leads to
\begin{equation}\label{eq:positive-weight-second-order}
L_0+\lambda_1^*L_1+\lambda_2^*L_2\geq\sqrt V\,\rmQ^{-1}(\veps).
\end{equation}

\subsection{A Zero Supporting Weight}
It follows from Lemma~\ref{lem:boundary-classification} and Appendix~\ref{app:regularity} that a zero supporting weight has either $(\lambda_1^*,\lambda_2^*)=(1,0)$ or $(0,1)$. In the first case,
\begin{equation}
R_0^*+R_1^*=R_X(P_X,D_1),~R_2^*=0,~\qquad V=1/2.
\end{equation}
In this case, we can simply ignore the second decoder and treat the encoder $f_0,f_1$ as a single encoder and the decoder $\phi_1$ as a single decoder. Such degradation leads to a P2P setting. Fix $0<\delta<1-\veps$. It follows from the P2P Gaussian converse bound in~\cite[Theorem~40]{kostinaverdu2012} that
\begin{equation}\label{eq:aggregate-scalar-converse}
\log(M_{0,n}M_{1,n})\geq nR_X(P_X,D_1)
+\sqrt{n/2}\,\rmQ^{-1}(\veps+\delta)+O(\log n).
\end{equation}
Furthermore, it follows from~\eqref{eq:rate-envelope-converse} that
\begin{align}
\frac{\log(M_{0,n}M_{1,n})-nR_X(P_X,D_1)}{\sqrt n}
&=\frac{\log M_{0,n}-nR_0^*+\log M_{1,n}-nR_1^*}{\sqrt n}\label{eq:aggregate-displacement}\\
&\leq L_0+L_1+2\gamma.
\label{eq:pf-aggregate-envelope}
\end{align}
Combining~\eqref{eq:aggregate-scalar-converse}--\eqref{eq:pf-aggregate-envelope}, letting $n\to\infty$, and then letting $\gamma\to 0$ and $\delta\to 0$ yields
\begin{equation}\label{eq:zero-weight-converse-final}
L_0+\lambda_1^*L_1+\lambda_2^*L_2
=L_0+L_1\geq\frac{\rmQ^{-1}(\veps)}{\sqrt2}
=\sqrt V\,\rmQ^{-1}(\veps).
\end{equation}
When $(\lambda_1^*,\lambda_2^*)=(0,1)$, we have the similar result. The converse proof of Theorem \ref{thm:main} with zero $\lambda_j^*$ with $j\in[2]$ is now completed.

\section{Conclusion}
\label{sec:conclusion}
In this paper, we derived exact second-order asymptotics for the lossy Gaussian Gray--Wyner problem at boundary rate tuples with a positive common rate under the quadratic distortion measure and a joint excess-distortion probability criterion. Our main contribution was an achievability bound that matches the existing converse bounds~\cite{liu2020,kostinaverdu2012}. We obtained this bound by deriving a covariance shell covering lemma analogous to the type covering lemma for discrete sources.

We next discuss future directions. Firstly, in this paper, we derived the second-order asymptotics. However, to accurately characterize the performance of optimal codes at small blocklengths, further refined asymptotics including the third-order asymptotics~\cite[Theorem~27]{kostinaverdu2012} or the refined moderate deviations~\cite{yavaskostinaeffros2024} are necessary.
Secondly, we considered boundary rate tuples with a positive common rate in this paper. It is possible that some boundary rate tuple has a zero common rate, which can still have a positive second-order rate. It would be worthwhile to study the corresponding second-order coding for the Gray-Wyner problem in this regime. Finally, one can generalize our second-order asymptotics  to other multiterminal lossy source coding problems concerning bivariate Gaussian sources, e.g., the Wyner--Ziv problem~\cite{wyner1978generalsources} or the Heegard--Berger problem~\cite{heegardberger1985}. The techniques in this paper, especially covariance shell covering could be helpful.

\section*{Declaration of AI Usage}
The authors formulated the Gaussian Gray-Wyner problem with the research gap and provided proof strategies for both achievability and converse parts. ChatGPT (Astra 6 Ultra mode) was used to test proof strategies, fill proof steps, prepare early drafts and check the correctness of proofs. The authors have reviewed AI-assisted derivations and assume full responsibility for the statements, proofs, and correctness of the mathematical content. In particular, the authors have rewritten each line of the main text and revised most contents in the appendix.

\appendices
\section{Proof of Lemma~\ref{lem:boundary-classification}}
\label{app:boundary-classification}
Recall from~\eqref{eq:alpha-star} that
\begin{equation*}
(\alpha^*,\beta^*)=\bigl(R_X(P_X,D_1)-R_1^*,R_Y(P_Y,D_2)-R_2^*\bigr).
\end{equation*}
Recall also $\nu_t=\sqrt{1-e^{-2t}}$ for $t\geq0$ from~\eqref{eq:nu-def}. A boundary tuple satisfies $R_1^*\leq R_X(P_X,D_1)$, $R_2^*\leq R_Y(P_Y,D_2)$, and $R_0^*=\Upsilon_{\rho_{XY}}(\alpha^*,\beta^*)$ by~\eqref{eq:R0-yu}. If $\alpha^*>0$ and $\beta^*=0$, then $R_2^*=R_Y(P_Y,D_2)>0$. Reducing $R_2^*$ by a small $t>0$ with $\nu_t<\rho_{XY}\nu_{\alpha^*}$ leaves the common rate equal to $\alpha^*$, contradicting Pareto minimality. Interchanging the sources excludes $\alpha^*=0$, $\beta^*>0$. If $\alpha^*=\beta^*=0$,~\eqref{eq:Upsilon-axes} implies $R_0^*=0$. Thus, $(\alpha^*,\beta^*)\in\bbR_+^2$.

On $\calP_2$, increasing $\alpha$ or $\beta$ while keeping the other fixed increases $\Upsilon_{\rho_{XY}}$ by the same amount. On $\calP_1$, we have $\nu_\alpha\geq\rho_{XY}\nu_\beta$ and $\nu_\beta\geq\rho_{XY}\nu_\alpha$ by~\eqref{eq:D1}. Substituting $1-\nu_t^2=e^{-2t}$ and~\eqref{eq:rho-ab} into~\eqref{eq:Upsilon} leads to
\begin{align*}
\Upsilon_{\rho_{XY}}(\alpha,\beta)
&=-\frac12\log\frac{(1-\nu_\alpha^2)(1-\nu_\beta^2)-(\rho_{XY}-\nu_\alpha\nu_\beta)^2}{1-\rho_{XY}^2}\\
&=\beta-\frac12\log\left(1-\frac{(\nu_\alpha-\rho_{XY}\nu_\beta)^2}{(1-\nu_\beta^2)(1-\rho_{XY}^2)}\right).
\end{align*}
Given $\beta>0$, increasing $\alpha$ strictly increases the nonnegative quantity $\nu_\alpha-\rho_{XY}\nu_\beta$ and its square, even when initially zero. Hence, $\Upsilon_{\rho_{XY}}$ strictly increases. If the pair leaves $\calP_1$, it enters $\calP_2$ or $\calP_3$, whose formulas agree at the boundary and increase with $\alpha$ at slope one. Interchanging $\alpha$ and $\beta$ proves the other strict increase. Thus, given $(\alpha^*,\beta^*)\in\calP_1\cup\calP_2$, reducing either positive private rate increases the required common rate. Reducing $R_0^*$ violates~\eqref{eq:first-order-epigraph}, and zero private rates cannot decrease, proving Pareto minimality.

The sum-rate formulas in (a) and (b) follow by substituting~\eqref{eq:Upsilon} into
\begin{equation*}
R_0^*+R_1^*+R_2^*=R_X(P_X,D_1)+R_Y(P_Y,D_2)+\Upsilon_{\rho_{XY}}(\alpha^*,\beta^*)-\alpha^*-\beta^*.
\end{equation*}

In the strict $\calP_3$ regime, $\Upsilon_{\rho_{XY}}(\alpha^*,\beta^*)=\alpha^*$. If $R_2^*>0$, a small reduction of $R_2^*$ stays in $\calP_3$ without changing the common rate, contradicting Pareto minimality. If $R_2^*=0$, reducing a positive $R_1^*$ preserves $\nu_{\beta^*}<\rho_{XY}\nu_{\alpha^*}$ and increases the required common rate by the same amount. Hence, the boundary condition is $R_2^*=0$ and $R_0^*+R_1^*=R_X(P_X,D_1)$. Interchanging the sources proves (d). The interfaces lie in $\calP_1\cup\calP_2$, completing the proof.

\section{Proof of Proposition~\ref{ass:regularity}}
\label{app:regularity}
\emph{Proof of Statement (i).} Given $(\alpha,\beta)\in\bbR_+^2$, let $r=r_{\alpha,\beta}$ and $u=(\nu_\beta/\nu_\alpha)e^{\beta-\alpha}$. Since $\nu_t'=e^{-2t}/\nu_t$, differentiation of~\eqref{eq:rho-ab} implies $\partial_\alpha r=r-u$, $\partial_\beta r=r-u^{-1}$, and $\partial_\rho r=e^{\alpha+\beta}$. Let $F_k$ denote expression $k$ of~\eqref{eq:Upsilon}. The curved expression has derivatives
\begin{align}
\partial_\alpha F_1
&=1+\frac{r(r-u)}{1-r^2}=\frac{1-ru}{1-r^2},\label{eq:pf-curved-slope}\\
\partial_\beta F_1
&=1+\frac{r(r-u^{-1})}{1-r^2}=\frac{1-r/u}{1-r^2},\label{eq:pf-curved-slope-beta}\\
\partial_\rho F_1
&=\frac{re^{\alpha+\beta}}{1-r^2}-\frac{\rho}{1-\rho^2}.\label{eq:r-rho}
\end{align}
On the other three regimes,
\begin{equation}\label{eq:branch-derivatives}
(\partial_\alpha F_k,\partial_\beta F_k,\partial_\rho F_k)=
\begin{cases}
(1,1,-\rho/(1-\rho^2)),&k=2,\\
(1,0,0),&k=3,\\
(0,1,0),&k=4.
\end{cases}
\end{equation}
The curved-regime constraints imply $0\leq r\leq\min\{u,u^{-1}\}<1$ except possibly when $u=1$. In that case $\alpha=\beta$, and $r=1-(1-\rho)e^{2\alpha}<1$.
Thus, $1-r^2>0$, and the derivatives of $F_1$ are continuous up to each boundary.

At the $(\calP_1,\calP_2)$ interface, $r=0$, so $F_1=F_2$ and~\eqref{eq:pf-curved-slope}--\eqref{eq:r-rho} imply $\nabla_{(\alpha,\beta,\rho)}F_1=(1,1,-\rho/(1-\rho^2))=\nabla_{(\alpha,\beta,\rho)}F_2$. At the $(\calP_1,\calP_3)$ interface, $\nu_\beta=\rho\nu_\alpha$, $r=u$, and $1-r^2=(1-\rho^2)e^{2\beta}$ imply $F_1=\alpha=F_3$ and $\nabla_{(\alpha,\beta,\rho)}F_1=(1,0,0)=\nabla_{(\alpha,\beta,\rho)}F_3$. Interchanging $\alpha$ and $\beta$ handles $(\calP_1,\calP_4)$. Therefore, the values and first derivatives agree across all boundaries, proving (i).

\emph{Proof of Statement (ii).} Lemma~\ref{lem:boundary-classification} implies $(\alpha^*,\beta^*)\in\bbR_+^2$. Hence, an open neighborhood $\calU$ of $(\bSigma_{XY},R_1^*,R_2^*)$ can be chosen where the covariance is positive definite, $\Sigma_{jj}>D_j$, $\Sigma_{12}>0$, and $\frac12\log(\Sigma_{jj}/D_j)-r_j>0$ for $j\in[2]$. 
Given $(\bSigma,r_1,r_2)\in\calU$, define
\begin{equation*}
\overline R_0(\bSigma,r_1,r_2)
:=\Upsilon_{\rho(\bSigma)}\left(\frac12\log\frac{\Sigma_{11}}{D_1}-r_1,\frac12\log\frac{\Sigma_{22}}{D_2}-r_2\right).
\end{equation*}
The correlation $\rho(\bSigma)=\Sigma_{12}/\sqrt{\Sigma_{11}\Sigma_{22}}$ and the logarithmic arguments have continuous first derivatives, so Statement (i) implies the same property for $\overline R_0$. Given $(r_1,r_2)\in\bbR_{+,0}^2$, Lemma~\ref{prop:first-order-selection} implies $\overline R_0(\bSigma,r_1,r_2)=\rvR_0(r_1,r_2,D_1,D_2\mid Q_{XY}^{\bSigma})$. Allowing small negative $r_j$ only defines the derivative at $r_j=0$.

Given a symmetric perturbation $\bh=[h_{jk}]$ and private-rate perturbations $(s_1,s_2)$, let $(\eta,\lambda_1,\lambda_2)$ denote $(\partial_\rho\Upsilon,\partial_\alpha\Upsilon,\partial_\beta\Upsilon)$ evaluated at $\bigl(\rho(\bSigma),\frac12\log(\Sigma_{11}/D_1)-r_1,\frac12\log(\Sigma_{22}/D_2)-r_2\bigr)$. Direct differentiation leads to
\begin{align}
\mathrm d\overline R_0[\bh,s_1,s_2]
&=\lambda_1\left(\frac{h_{11}}{2\Sigma_{11}}-s_1\right)
+\lambda_2\left(\frac{h_{22}}{2\Sigma_{22}}-s_2\right)\notag\\
&\quad+\eta\left(\frac{h_{12}}{\sqrt{\Sigma_{11}\Sigma_{22}}}
-\frac{\rho(\bSigma)h_{11}}{2\Sigma_{11}}
-\frac{\rho(\bSigma)h_{22}}{2\Sigma_{22}}\right)\notag\\
&=\frac{\lambda_1-\rho(\bSigma)\eta}{2\Sigma_{11}}h_{11}
+\frac{\eta}{\sqrt{\Sigma_{11}\Sigma_{22}}}h_{12}
+\frac{\lambda_2-\rho(\bSigma)\eta}{2\Sigma_{22}}h_{22}
-\lambda_1s_1-\lambda_2s_2.\label{eq:pf-covariance-inner-product}
\end{align}
At $(\bSigma,r_1,r_2)=(\bSigma_{XY},R_1^*,R_2^*)$, the identity $\ip{\bg}{\bh}=g_{11}h_{11}+2g_{12}h_{12}+g_{22}h_{22}$ leads to the matrix in~\eqref{eq:G-gradient} and the private-rate derivatives $-\lambda_j^*$ from~\eqref{eq:pf-covariance-inner-product}.

At $\nu_\beta=\rho\nu_\alpha$, the first $\beta$-derivative vanishes. A small increase of $\beta$ enters $\calP_1$, where differentiation of~\eqref{eq:pf-curved-slope-beta} leads to
\begin{equation*}
\partial_{\beta,+}^2\Upsilon_\rho(\alpha,\beta)
=u^{-2}+\frac{1}{\nu_\beta^2(1-u^2)}
=\frac{(1-\rho^2\nu_\alpha^2)^2}{\rho^2\nu_\alpha^2(1-\nu_\alpha^2)(1-\rho^2)}>0.
\end{equation*}
Indeed, $\partial_\beta r=r-u^{-1}$, $\partial_\beta\log u=\nu_\beta^{-2}$, and $u^2=\rho^2(1-\nu_\alpha^2)/(1-\rho^2\nu_\alpha^2)$ at this interface. Therefore, Taylor's formula leads to
\begin{equation}\label{eq:interface-quadratic}
\Upsilon_\rho(\alpha,\beta+h)=\alpha+
\frac{(1-\rho^2\nu_\alpha^2)^2}{2\rho^2\nu_\alpha^2(1-\nu_\alpha^2)(1-\rho^2)}h^2+o(h^2),\quad h\downarrow0.
\end{equation}
When $h<0$ is sufficiently small, the value remains $\alpha$ in $\calP_3$. The proof of Statement (ii) is completed.

\section{Proof of Lemma~\ref{lem:critical-quantities}}
\label{app:information-quantity}
Given $\bSigma\in\calK$ and $(X,Y)\sim Q_{XY}^{\bSigma}$, let $\bX=(X,Y)^{\rvT}$. By~\eqref{eq:Gaussian-influence-parallel} and $\bbE_{Q_{XY}^{\bSigma}}[\bX\bX^{\rvT}]=\bSigma$, we have $\bbE_{Q_{XY}^{\bSigma}}[\jmath(X,Y)]=R_0^*+\tr[\bg(\bSigma-\bSigma_{XY})]$. Taking $\bSigma=\bSigma_{XY}$ proves (i).

For the remainder of the proof, take $(X,Y)\sim P_{XY}$. Let $\bomega:=\bSigma_{XY}^{1/2}\bg\bSigma_{XY}^{1/2}$ have eigenvalues $\omega_1,\omega_2$. The coordinates $N_1,N_2$ of $\bSigma_{XY}^{-1/2}\bX\sim\calN(\bzero_2,\bone_2)$ along orthonormal eigenvectors of $\bomega$ are independent standard Gaussian variables. Therefore,
\begin{align}
\jmath(X,Y)-R_0^*&\overset{\mathrm d}=\omega_1(N_1^2-1)+\omega_2(N_2^2-1),\label{eq:pf-quadratic-law}\\
V&=2(\omega_1^2+\omega_2^2)=2\tr[(\bg\bSigma_{XY})^2].\label{eq:pf-quadratic-trace}
\end{align}
The variance identity uses $\var[N_j^2]=2$ and independence. The expression in~\eqref{eq:pf-quadratic-law} also implies $\bbE_{P_{XY}}[|\jmath(X,Y)-R_0^*|^3]<\infty$.

Let $\bSigma_{\rmN}^{*}$ be the optimizer in~\eqref{eq:cell-residual-choice} at $\bSigma_q=\bSigma_{XY}$ and $(r_1,r_2)=(R_1^*,R_2^*)$, and put $\tau_j=D_je^{2R_j^*}>0$. Equivalently, $\bSigma_{\rmN}^{*}$ minimizes $-\tfrac12\log|\bSigma_{\rmN}|$ under $\bSigma_{\rmN}\preceq\bSigma_{XY}$ and $[\bSigma_{\rmN}]_{jj}\leq\tau_j$. Choose $\delta>0$ sufficiently small that $\delta\bone_2\prec\bSigma_{XY}$ and $\delta<\tau_j$. Hence, the convex optimality conditions imply the existence of multipliers $\vartheta_j\geq0$ and $\blambda_{\mathrm{cov}}\succeq0$ satisfying
\begin{align}
-\frac12(\bSigma_{\rmN}^{*})^{-1}
+\diag(\vartheta_1,\vartheta_2)+\blambda_{\mathrm{cov}}
&=\bzero_{2\times2},\label{eq:spectrum-stationarity}\\
\ip{\blambda_{\mathrm{cov}}}{\bSigma_{XY}-\bSigma_{\rmN}^{*}}&=0,\label{eq:spectrum-complementarity}
\end{align}
together with $\vartheta_j([\bSigma_{\rmN}^{*}]_{jj}-\tau_j)=0$. 
Given any $\bSigma_{\rmN}\succ0$ with $\bSigma_{\rmN}\preceq\bSigma$ and $[\bSigma_{\rmN}]_{jj}\leq\tau_j$, convexity and~\eqref{eq:spectrum-stationarity} imply
\begin{align*}
-\frac12\log|\bSigma_{\rmN}|
&\geq-\frac12\log|\bSigma_{\rmN}^{*}|-\sum_{j\in[2]}\vartheta_j\bigl([\bSigma_{\rmN}]_{jj}-[\bSigma_{\rmN}^{*}]_{jj}\bigr)
-\ip{\blambda_{\mathrm{cov}}}{\bSigma_{\rmN}-\bSigma_{\rmN}^{*}}\\
&\geq-\frac12\log|\bSigma_{\rmN}^{*}|-\ip{\blambda_{\mathrm{cov}}}{\bSigma-\bSigma_{XY}}.
\end{align*}
The second inequality follows from the diagonal bounds, $\vartheta_j([\bSigma_{\rmN}^{*}]_{jj}-\tau_j)=0$, $\bSigma_{\rmN}\preceq\bSigma$, and~\eqref{eq:spectrum-complementarity}. Taking the infimum over the feasible $\bSigma_{\rmN}$ and using~\eqref{eq:converse-residual-minimum} yields
\begin{align}
\rvR_0(R_1^*,R_2^*,D_1,D_2\mid Q_{XY}^{\bSigma})-\frac12\log|\bSigma|
&\geq R_0^*-\frac12\log|\bSigma_{XY}|
-\ip{\blambda_{\mathrm{cov}}}{\bSigma-\bSigma_{XY}}.\label{eq:spectrum-dual-touch}
\end{align}
Equality holds in~\eqref{eq:spectrum-dual-touch} at $\bSigma=\bSigma_{XY}$. Differentiating there in both signs of every symmetric direction, using Proposition~\ref{prop:regularity}, yields
\begin{equation}\label{eq:spectrum-sensitivity}
\bg=\frac12\bSigma_{XY}^{-1}-\blambda_{\mathrm{cov}}.
\end{equation}

Set $\bSigma_{\mathrm W}^{*}:=\bSigma_{XY}-\bSigma_{\rmN}^{*}\succeq0$. Since $R_0^*>0$, we have $\bSigma_{\mathrm W}^{*}\neq\bzero_{2\times2}$. Positive semi-definiteness and~\eqref{eq:spectrum-complementarity} imply
\begin{align*}
0&=\tr(\blambda_{\mathrm{cov}}\bSigma_{\mathrm W}^{*})
=\norm{\blambda_{\mathrm{cov}}^{1/2}(\bSigma_{\mathrm W}^{*})^{1/2}}^2.
\end{align*}
Thus, $\blambda_{\mathrm{cov}}\bSigma_{\mathrm W}^{*}=\bzero_{2\times2}$ with $\bSigma_{\mathrm W}^{*}\neq\bzero_{2\times2}$, so $\rank(\blambda_{\mathrm{cov}})\leq1$. Equation~\eqref{eq:spectrum-sensitivity} implies
\begin{equation*}
\bomega=\frac12\bone_2-\bSigma_{XY}^{1/2}\blambda_{\mathrm{cov}}\bSigma_{XY}^{1/2},
\end{equation*}
so one eigenvalue of $\bomega$ is $1/2$. Since~\eqref{eq:G-gradient} implies $\tr(\bomega)=\tr(\bg\bSigma_{XY})=(\lambda_1^*+\lambda_2^*)/2$, the other is $(\lambda_1^*+\lambda_2^*-1)/2$. Substitution into~\eqref{eq:pf-quadratic-trace} proves~\eqref{eq:V-slopes}.

In $\calP_1$, let $r=r_{\alpha^*,\beta^*}$ and recall $u$ from~\eqref{eq:u-ratio}. The derivatives in~\eqref{eq:pf-curved-slope}--\eqref{eq:pf-curved-slope-beta} imply
\begin{align}
\lambda_1^*+\lambda_2^*-1
&=\frac{1+r^2-r(u+u^{-1})}{1-r^2}
=\frac{(u-r)(1-ru)}{u(1-r^2)}.\label{eq:spectrum-slope-range}
\end{align}
In the strict curved regime $0<r<\min\{u,u^{-1}\}$, so the last expression is positive, while each slope lies in $(0,1)$. In $\calP_2,\calP_3,\calP_4$ the slopes are respectively $(1,1),(1,0),(0,1)$ by~\eqref{eq:branch-derivatives}. Therefore, $1\leq\lambda_1^*+\lambda_2^*\leq2$, and substitution in~\eqref{eq:V-slopes} leads to all four expressions in~\eqref{eq:V-branches}, including their interfaces, together with $1/2\leq V\leq1$. Since $\tau_j>0$ and Proposition~\ref{prop:regularity} also applies when $R_j^*=0$, the proof includes zero private rates.

\section{Proof of Lemma~\ref{lem:euclidean-uniform}}
\label{app:euclidean-uniform}
Fix $T\geq1$, $n\geq3$, and $0\leq R\leq Tr$ with $r>0$. We construct a cover of $\calB_n(\bzero_n,R)$ by radius-$r$ balls with at most $C_Tn^4\max\{1,(R/r)^n\}$ centers.
If $R\leq r$, take $\calV=\{\bzero_n\}$. Suppose $R>r$, and put $h=r/(4n)$, $r_-=r-2h$, $m=\lceil(R-r)/h\rceil$, and $R_k=\min\{r+kh,R\}$ for $k\in[m]$.
On each sphere $\calS_k:=\{\bx\in\bbR^n:\norm\bx=R_k\}$, choose a maximal set $\calN_k$ with pairwise distances at least $h$. Maximality ensures that every point of $\calS_k$ is within distance $h$ of $\calN_k$. Packing radius-$h/2$ balls and comparing their volumes lead to
\begin{align}
|\calN_k|&\leq\left(1+\frac{2R_k}{h}\right)^n
\leq(1+8Tn)^n.\label{eq:ballcover-net}
\end{align}

Choose a center uniformly on the sphere of radius $a_k=\sqrt{R_k^2-r_-^2}$. Given $\bz\in\calN_k$, let $\theta$ be its angle with the center. The squared distance is $R_k^2+a_k^2-2R_ka_k\cos\theta$, so coverage within radius $r_-$ is equivalent to $\cos\theta\geq z_k:=a_k/R_k$. The first coordinate of a uniform point on the unit sphere has density $\omega_n(1-u^2)^{(n-3)/2}$, with $\omega_n:=\Gamma(n/2)/(\sqrt\pi\,\Gamma((n-1)/2))\geq1/2$. Substitution $w=1-u^2$ and $(1-w)^{-1/2}\geq1$ show that the coverage probability satisfies
\begin{align}
p_k&=\omega_n\int_{z_k}^1(1-u^2)^{(n-3)/2}\,du\notag\\
&=\frac{\omega_n}{2}\int_0^{r_-^2/R_k^2}w^{(n-3)/2}(1-w)^{-1/2}\,dw\notag\\
&\geq\frac{1}{2(n-1)}\left(\frac{r_-}{R_k}\right)^{n-1}.\label{eq:ballcover-cap-lower}
\end{align}
The inequality $\omega_n\geq1/2$ follows by integrating its density over $[-1,1]$, and the bound remains uniform when $R/r\downarrow1$.

Independently draw
\begin{equation}\label{eq:ballcover-random-count}
J_k:=\left\lceil2(n-1)\left(\frac{R_k}{r_-}\right)^{n-1}
[n\log(1+8Tn)+1]\right\rceil
\end{equation}
centers uniformly on the sphere of radius $a_k$. Since $J_kp_k\geq n\log(1+8Tn)+1$, the union bound and~\eqref{eq:ballcover-net} imply that the probability any point of $\calN_k$ is uncovered is at most
\begin{align*}
|\calN_k|(1-p_k)^{J_k}
&\leq(1+8Tn)^n e^{-J_kp_k}\leq e^{-1}<1.
\end{align*}
Fix sets $\calC_k$ of $J_k$ centers covering $\calN_k$, and put $\calV=\{\bzero_n\}\cup\bigcup_{k\in[m]}\calC_k$. Given $\bx\in\bbR^n$ with $r<\norm\bx\leq R$, choose $k$ and $\bz\in\calN_k$ such that $|\norm\bx-R_k|\leq h$ and $\norm{R_k\bx/\norm\bx-\bz}\leq h$. The triangle inequality implies
\begin{align*}
\min_{\bv\in\calV}\norm{\bx-\bv}
&\leq\norm{\bx-\frac{R_k}{\norm\bx}\bx}+\norm{\frac{R_k}{\norm\bx}\bx-\bz}+\min_{\bv\in\calC_k}\norm{\bz-\bv}\\
&\leq|\norm\bx-R_k|+h+r_-\leq2h+r_-=r.
\end{align*}
Given $\bx\in\bbR^n$ with $\norm\bx\leq r$, we have $\norm{\bx-\bzero_n}\leq r$.

Finally, $m\leq4nT+1$ and
\begin{align}
\left(\frac r{r_-}\right)^{n-1}
&=\left(1-\frac1{2n}\right)^{-(n-1)}\leq e^{1/2},\label{eq:ballcover-margin-cost}
\end{align}
where $-\log(1-x)\leq x/(1-x)$. Since $R_k\leq R$, the ceiling bound in~\eqref{eq:ballcover-random-count} yields
\begin{align}
|\calV|&\leq1+\sum_{k\in[m]}J_k\notag\\
&\leq1+m+2m(n-1)e^{1/2}(R/r)^{n-1}
[n\log(1+8Tn)+1]\notag\\
&\leq C_Tn^4(R/r)^n,\label{eq:ballcover-total-count}
\end{align}
where $C_T$ depends only on $T$, using $m\leq4nT+1$ and $1+\log n\leq n$ for $n\geq3$.

To apply the covering bound, take $T=\max\{1,\sqrt{(\kappa^{-1}+\chi_\kappa)/\min\{D_1,D_2\}}\}$. Since $s_n/n\leq1$ and $\bSigma_{\mathrm N,q}\preceq\kappa^{-1}\bone_2$, we have $\xi_{j,q,n}^2/(nD_j)\leq T^2$. Taking $R=\xi_{j,q,n}$ and $r=\sqrt{nD_j}$ leads to the codebooks in~\eqref{eq:private-codebook-cover}, with $C_{\mathrm{cov}}=C_T$ depending only on $(\kappa,D_1,D_2)$.

\section{Proof of Lemma~\ref{lem:shell-cover}}
\label{app:shell-cover}

Fix $\bSigma\in\calK$ and $(r_1,r_2)\in\bbR_{+,0}^2$. In the covariance optimization~\eqref{eq:cell-residual-choice}, replace the cell center $\bSigma_q$ by $\bSigma$ while retaining these private rates. Since $\norm{\bSigma-\bSigma_{XY}}\leq2\eta_{\mathrm{cov}}$, the eigenvalue and entry perturbation bounds in~\eqref{eq:explicit-covariance-radius} imply
\begin{align*}
\lambda_{\min}(\bSigma)&\geq\lambda_{\min}(\bSigma_{XY})-2\eta_{\mathrm{cov}}
\geq\frac34\lambda_{\min}(\bSigma_{XY})>0,\\
\Sigma_{jj}-D_j&\geq[\bSigma_{XY}]_{jj}-D_j-2\eta_{\mathrm{cov}}
\geq\frac34([\bSigma_{XY}]_{jj}-D_j)>0,\qquad j\in[2],
\end{align*}
where the first line follows from the Rayleigh quotient and $\norm{\ba}_{\mathrm{op}}\leq\norm{\ba}$ for a real matrix $\ba$, and the second uses the entrywise Frobenius bound. Compactness yields $0<m<M$ with $m\bone_2\preceq\bSigma\preceq M\bone_2$ throughout $\calK$. The covariance segment from $\bSigma_{XY}$ to $\bSigma$ has diagonal entries at least half their source values. Therefore, differentiating~\eqref{eq:rho-S} along this segment and applying the mean-value bound leads to
\begin{equation*}
|\rho(\bSigma)-\rho_{XY}|
\leq c_{\mathrm{cor}}\norm{\bSigma-\bSigma_{XY}}
\leq2c_{\mathrm{cor}}\eta_{\mathrm{cov}}
\leq\frac14\min\{\rho_{XY},1-\rho_{XY}\}.
\end{equation*}
Hence, $\rho_{XY}/2\leq\rho(\bSigma)\leq(1+\rho_{XY})/2$, so the Gaussian first-order formula applies uniformly on $\calK$.

Choose $0<\delta<\min\{m,D_1,D_2\}$ and $0<\kappa<1$ with $\kappa\leq\min\{\delta^2/M,1/M\}$. Given any nonnegative rate pair, $\delta\bone_2$ is feasible because $\delta\bone_2\preceq\bSigma$ and $\delta<D_j\leq D_je^{2r_j}$. Compactness of the feasible set and continuity of the determinant imply the existence of an optimizer $\bSigma_{\rmN}^{*}$ with $|\bSigma_{\rmN}^{*}|\geq\delta^2>0$. Consequently,
\begin{equation*}
\lambda_{\min}(\bSigma_{\rmN}^{*})
=\frac{|\bSigma_{\rmN}^{*}|}{\lambda_{\max}(\bSigma_{\rmN}^{*})}
\geq\frac{\delta^2}{M}\geq\kappa,
\end{equation*}
and $\bSigma\preceq M\bone_2\preceq\kappa^{-1}\bone_2$. Thus,
\begin{equation}
\label{eq:optimizer-compact}
\kappa\bone_2\preceq\bSigma_{\rmN}^{*}\preceq\bSigma\preceq\kappa^{-1}\bone_2.
\end{equation}
The same $\kappa$ applies to every $\bSigma\in\calK$ and nonnegative rate pair, including the optimizer $\bSigma_{\mathrm N,q}$ in $\calF_q$.

We next verify the optimizer's value explicitly. Let $\rho_{\bSigma}:=\rho(\bSigma)$, $\alpha:=\alpha(\bSigma;r_1,D_1)$, and $\beta:=\beta(\bSigma;r_2,D_2)$. Define $\bt_{\bSigma}:=\diag(\sqrt{\Sigma_{11}},\sqrt{\Sigma_{22}})$ and
\begin{equation*}
v_1:=e^{-2\alpha}=\min\{1,D_1e^{2r_1}/\Sigma_{11}\},\qquad
v_2:=e^{-2\beta}=\min\{1,D_2e^{2r_2}/\Sigma_{22}\}.
\end{equation*}
Then $\nu_\alpha=\sqrt{1-v_1}$ and $\nu_\beta=\sqrt{1-v_2}$. Given $(\alpha,\beta)\in\bbR_+^2$, consider the normalized covariance
\begin{equation}\label{eq:explicit-normalized-K}
\widetilde{\bSigma}_{\rmN}^{*}:=
\begin{cases}
\begin{bmatrix}v_1&\rho_{\bSigma}-\nu_\alpha\nu_\beta\\
\rho_{\bSigma}-\nu_\alpha\nu_\beta&v_2\end{bmatrix},
&\text{if }(\alpha,\beta)\in\calP_1,\\[1.2em]
\begin{bmatrix}v_1&0\\0&v_2\end{bmatrix},
&\text{if }(\alpha,\beta)\in\calP_2,\\[1.2em]
\begin{bmatrix}v_1&\rho_{\bSigma}v_1\\
\rho_{\bSigma}v_1&1-\rho_{\bSigma}^2+\rho_{\bSigma}^2v_1\end{bmatrix},
&\text{if }(\alpha,\beta)\in\calP_3,\\[1.2em]
\begin{bmatrix}1-\rho_{\bSigma}^2+\rho_{\bSigma}^2v_2&\rho_{\bSigma}v_2\\
\rho_{\bSigma}v_2&v_2\end{bmatrix},
&\text{if }(\alpha,\beta)\in\calP_4.
\end{cases}
\end{equation}
Here the regime definitions are evaluated with correlation $\rho_{\bSigma}$. Set $\bSigma_{\rmN}^{*}:=\bt_{\bSigma}\widetilde{\bSigma}_{\rmN}^{*}\bt_{\bSigma}$ and $\widetilde{\bSigma}:=\bt_{\bSigma}^{-1}\bSigma\bt_{\bSigma}^{-1}$. The latter matrix has unit diagonal and off-diagonal entry $\rho_{\bSigma}$.

On $\calP_1$, subtraction yields
\begin{equation*}
\widetilde{\bSigma}-\widetilde{\bSigma}_{\rmN}^{*}
=\begin{bmatrix}\nu_\alpha\\\nu_\beta\end{bmatrix}
\begin{bmatrix}\nu_\alpha&\nu_\beta\end{bmatrix}\succeq0.
\end{equation*}
The diagonal entries satisfy $[\widetilde{\bSigma}_{\rmN}^{*}]_{11}=v_1$ and $[\widetilde{\bSigma}_{\rmN}^{*}]_{22}=v_2$. Here the residual correlation is $r_{\alpha,\beta}=(\rho_{\bSigma}-\nu_\alpha\nu_\beta)/\sqrt{v_1v_2}$. The $\calP_1$ inequalities imply $0\leq r_{\alpha,\beta}<1$, as verified in Appendix~\ref{app:regularity}. Therefore,  $|\widetilde{\bSigma}_{\rmN}^{*}|=v_1v_2(1-r_{\alpha,\beta}^2)>0$.

On $\calP_2$, $\widetilde{\bSigma}_{\rmN}^{*}=\diag(v_1,v_2)\succ0$ satisfies the diagonal bounds. The difference $\widetilde{\bSigma}-\widetilde{\bSigma}_{\rmN}^{*}$ is positive semidefinite because its diagonal entries satisfy $(\nu_\alpha^2,\nu_\beta^2)\in\bbR_{+,0}^2$ and its determinant is $\nu_\alpha^2\nu_\beta^2-\rho_{\bSigma}^2\geq0$. On $\calP_3$, subtraction instead leads to
\begin{equation*}
\widetilde{\bSigma}-\widetilde{\bSigma}_{\rmN}^{*}
=(1-v_1)\begin{bmatrix}1\\\rho_{\bSigma}\end{bmatrix}
\begin{bmatrix}1&\rho_{\bSigma}\end{bmatrix}\succeq0.
\end{equation*}
The first diagonal entry of $\widetilde{\bSigma}_{\rmN}^{*}$ equals $v_1$, and the second equals $1-\rho_{\bSigma}^2\nu_\alpha^2\leq1-\nu_\beta^2=v_2$, because $\rho_{\bSigma}>\nu_\beta/\nu_\alpha$. Also, $|\widetilde{\bSigma}_{\rmN}^{*}|=v_1(1-\rho_{\bSigma}^2)>0$. The same calculations with the two coordinates interchanged prove feasibility and positive definiteness on $\calP_4$. 
Since $\bt_{\bSigma}$ is invertible, $\bx^{\rvT}\bSigma_{\rmN}^{*}\bx=(\bt_{\bSigma}\bx)^{\rvT}\widetilde{\bSigma}_{\rmN}^{*}(\bt_{\bSigma}\bx)>0$ for every nonzero $\bx\in\bbR^2$. Similarly, $\bSigma-\bSigma_{\rmN}^{*}=\bt_{\bSigma}(\widetilde{\bSigma}-\widetilde{\bSigma}_{\rmN}^{*})\bt_{\bSigma}\succeq0$, and
\begin{equation*}
[\bSigma_{\rmN}^{*}]_{jj}=\Sigma_{jj}[\widetilde{\bSigma}_{\rmN}^{*}]_{jj}\leq\Sigma_{jj}v_j=\min\{\Sigma_{jj},D_je^{2r_j}\}.
\end{equation*}

Directly expanding the determinants leads to
\begin{equation}\label{eq:explicit-K-determinants}
|\widetilde{\bSigma}_{\rmN}^{*}|=
\begin{cases}
v_1v_2(1-r_{\alpha,\beta}^2),&\text{if }(\alpha,\beta)\in\calP_1,\\
v_1v_2,&\text{if }(\alpha,\beta)\in\calP_2,\\
v_1(1-\rho_{\bSigma}^2),&\text{if }(\alpha,\beta)\in\calP_3,\\
v_2(1-\rho_{\bSigma}^2),&\text{if }(\alpha,\beta)\in\calP_4.
\end{cases}
\end{equation}
At the $\calP_1$--$\calP_2$ interface, $\rho_{\bSigma}=\nu_\alpha\nu_\beta$, so both off-diagonal entries are zero. At the $\calP_1$--$\calP_3$ interface, $\nu_\beta=\rho_{\bSigma}\nu_\alpha$, so the first candidate's off-diagonal entry is $\rho_{\bSigma}v_1$ and its second diagonal is $v_2=1-\rho_{\bSigma}^2+\rho_{\bSigma}^2v_1$. The $\calP_1$--$\calP_4$ interface follows by interchanging the coordinates. Hence, the matrices agree on all shared finite interfaces.

If $\alpha>0$ and $\beta=0$, use the $\calP_3$ matrix with $v_2=1$. Its determinant is $v_1(1-\rho_{\bSigma}^2)$ and its common-rate value is $\alpha$. If $\alpha=0$ and $\beta>0$, use the $\calP_4$ matrix with $v_1=1$; its common-rate value is $\beta$. In either case, the preceding calculations imply $\bzero_{2\times2}\prec\bSigma_{\rmN}^{*}\preceq\bSigma$ and $[\bSigma_{\rmN}^{*}]_{jj}\leq D_je^{2r_j}$ for $j\in[2]$. If $\alpha=\beta=0$, take $\bSigma_{\rmN}^{*}=\bSigma$, with common rate zero. These choices agree with the limits of the formulas for $(\alpha,\beta)\in\bbR_+^2$ and with~\eqref{eq:Upsilon-axes}.

Since $|\bSigma|=\Sigma_{11}\Sigma_{22}(1-\rho_{\bSigma}^2)$ and $|\bSigma_{\rmN}^{*}|=\Sigma_{11}\Sigma_{22}|\widetilde{\bSigma}_{\rmN}^{*}|$, determinant scaling and~\eqref{eq:explicit-K-determinants} imply
\begin{equation}\label{eq:pf-explicit-rate-optimal}
\frac12\log\frac{|\bSigma|}{|\bSigma_{\rmN}^{*}|}
=\frac12\log\frac{1-\rho_{\bSigma}^2}{|\widetilde{\bSigma}_{\rmN}^{*}|}
=\Upsilon_{\rho_{\bSigma}}(\alpha,\beta).
\end{equation}
Here $-\tfrac12\log v_1=\alpha$ and $-\tfrac12\log v_2=\beta$.

To prove optimality, take any positive-definite feasible $\bSigma_{\rmN}$ and independent Gaussian vectors $(\bW,\bN)$ with covariances $(\bSigma-\bSigma_{\rmN},\bSigma_{\rmN})$. Let $(X,Y)^{\rvT}=\bW+\bN$. Feasibility implies $I(X;\bW)=\tfrac12\log(\Sigma_{11}/[\bSigma_{\rmN}]_{11})\geq\alpha$ and, similarly, $I(Y;\bW)\geq\beta$. Yu's envelope~\cite[Theorem~3]{yu2023} yields
\begin{equation*}
\frac12\log\frac{|\bSigma|}{|\bSigma_{\rmN}|}
=I(X,Y;\bW)\geq\Upsilon_{\rho_{\bSigma}}(\alpha,\beta).
\end{equation*}
The candidates attain the bound by~\eqref{eq:pf-explicit-rate-optimal}. Singular residuals have determinant zero, and strict concavity of $\log|\bSigma_{\rmN}|$ on the convex positive-definite feasible set implies uniqueness.

We next prove the covering assertion for a fixed retained cell $q\in\calJ_n$. Use the optimizer $\bSigma_{\mathrm N,q}\in\calF_q$ and define $\bb_q:=(\bSigma_q-\bSigma_{\mathrm N,q})^{1/2}$ and $I_q:=\tfrac12\log(|\bSigma_q|/|\bSigma_{\mathrm N,q}|)$. The bound~\eqref{eq:optimizer-compact} makes all constants below independent of $q$, the nonnegative design rates, and the shell point.

Introduce the Gaussian distribution $Q$ under which $\bU_q\sim\calN(\bzero_2,\bone_2)$ and $\bN\sim\calN(\bzero_2,\bSigma_{\mathrm N,q})$ are independent and $\bX=\bb_q\bU_q+\bN$. Given $\bu,\bx\in\bbR^2$, taking the logarithm of the ratio of the conditional density of $\bX$ given $\bU_q=\bu$ to its marginal density leads to
\begin{equation*}
\imath(\bu;\bx):=I_q-\frac12(\bx-\bb_q\bu)^{\rvT}\bSigma_{\mathrm N,q}^{-1}(\bx-\bb_q\bu)
+\frac12\bx^{\rvT}\bSigma_q^{-1}\bx.
\end{equation*}
Given blocks $\bu^n=[\bu_1,\ldots,\bu_n]\in\bbR^{2\times n}$ and $\bx^n=[\bx_1,\ldots,\bx_n]\in\bbR^{2\times n}$, define $\imath(\bu^n;\bx^n):=\sum_{i\in[n]}\imath(\bu_i;\bx_i)$. Completing the square in the joint density, or applying Gaussian conditioning, leads to
\begin{equation*}
\bU_q\mid\bX=\bx\sim\calN(\bb_q^{\rvT}\bSigma_q^{-1}\bx,\bSigma_{\bU\mid\bX,q}),
\end{equation*}
where
\begin{equation*}
\bSigma_{\bU\mid\bX,q}=\bone_2-\bb_q^{\rvT}\bSigma_q^{-1}\bb_q
=(\bone_2+\bb_q^{\rvT}\bSigma_{\mathrm N,q}^{-1}\bb_q)^{-1}\succ0.
\end{equation*}
The last equality is the matrix inversion identity with $\bSigma_q=\bSigma_{\mathrm N,q}+\bb_q\bb_q^{\rvT}$. Its inverse expression proves positive definiteness even if $\bb_q$ is singular.

Given a fixed shell point $\bx^n\in\calT_n(\bSigma_q)$, use the product posterior $Q(\cdot\mid\bx^n):=\bigotimes_{i\in[n]}Q_{\bU_q\mid\bX=\bx_i}$, and let $\bU_q^n=[\bU_{q,1},\ldots,\bU_{q,n}]$ denote the resulting random block. Define $J:=\imath(\bU_q^n;\bx^n)$ and $E_j:=\Delta_j(\bU_q^n;\bx^n)$, where $\Delta_j(\bu^n;\bx^n):=\norm{(\bx^n-\bb_q\bu^n)_j}^2$. The following lemma concerns $Q(\cdot\mid\bx^n)$.

\begin{lemma}[Uniform posterior small-ball bound]\label{lem:small-ball}
Let $\bY:=(J,E_1,E_2)^{\rvT}$ under the product posterior at a fixed source block. There exist $c>0$ and an integer $n_0$ such that, for every $n\geq n_0$,
\begin{equation}\label{eq:small-ball-bound}
Q\left\{\norm{\bY-\bbE_Q[\bY\mid\bx^n]}\leq\frac{s_n}{64}\,\middle|\,\bx^n\right\}\geq\frac{c}{(\log n)^7},
\end{equation}
uniformly over $q\in\calJ_n$, all nonnegative design rates, and $\bx^n\in\calT_n(\bSigma_q)$.
\end{lemma}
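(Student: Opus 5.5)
The plan is to observe that, under the product posterior $Q(\cdot\mid\bx^n)$, the vector $\bY=(J,E_1,E_2)^{\rvT}$ is a sum of $n$ independent three-dimensional random vectors $\bY_i=(\imath(\bU_{q,i};\bx_i),\norm{(\bx_i-\bb_q\bU_{q,i})_1}^2,\norm{(\bx_i-\bb_q\bU_{q,i})_2}^2)$. Each summand is a quadratic polynomial in the Gaussian vector $\bU_{q,i}\sim\calN(\bb_q^{\rvT}\bSigma_q^{-1}\bx_i,\bSigma_{\bU\mid\bX,q})$, with coefficients that are bounded uniformly thanks to~\eqref{eq:optimizer-compact}: $\kappa\bone_2\preceq\bSigma_{\mathrm N,q}\preceq\bSigma_q\preceq\kappa^{-1}\bone_2$. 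It also implies that $\bSigma_{\bU\mid\bX,q}$ has eigenvalues in a fixed compact subinterval of $(0,1]$. The target radius $s_n/64=\sqrt n/(64\log n)$ is a factor $\log n$ below the natural fluctuation scale $\sqrt n$. Hence a lower bound of polynomial order in $1/\log n$ is what a multivariate local CLT or anti-concentration argument gives: a ball of radius $\sqrt n/\log n$ in $\bbR^3$ under a law spread at scale $\sqrt n$ carries mass $\asymp(\log n)^{-3}$. The stated exponent $7$ leaves ample slack for cruder arguments.

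The key steps are as follows.

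\textbf{Step (1): uniform moment bounds.} Writing $\bU_{q,i}=\bmu_i+\bSigma_{\bU\mid\bX,q}^{1/2}\bZ_i$ with $\bmu_i:=\bb_q^{\rvT}\bSigma_q^{-1}\bx_i$ and $\bZ_i\sim\calN(\bzero_2,\bone_2)$, each centered summand $\bY_i-\bbE_Q[\bY_i]$ equals $\ba_i^{\rvT}\bZ_i$ (linear, with $\ba_i$ depending linearly on $\bx_i$) plus a centered quadratic form in $\bZ_i$ with a fixed coefficient matrix. Its covariance $\bSigma_{\bY,i}$ and third absolute moment are therefore polynomial in $\norm{\bx_i}$. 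Because $\bx^n\in\calT_n(\bSigma_q)$, we have $\frac1n\sum_i\bx_i\bx_i^{\rvT}$ within $C_0/n$ of $\bSigma_q\in\calK$. Hence $\frac1n\sum_i\bSigma_{\bY,i}$ is bounded above uniformly.

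\textbf{Step (2): lower eigenvalue bound.} I will bound the smallest eigenvalue of the averaged covariance from below by a constant $c_1>0$. The quadratic part alone contributes a $\bx$-independent covariance, namely the covariance of three quadratic forms $(\tfrac12\bZ^{\rvT}\bA_0\bZ,\bZ^{\rvT}\bA_1\bZ,\bZ^{\rvT}\bA_2\bZ)$. Since the linear and quadratic parts of Gaussian chaos are uncorrelated, the total covariance dominates it. So it suffices to show that these three quadratic forms are linearly independent modulo constants, uniformly over the compact parameter set. Here $\bA_j=\bSigma_{\bU\mid\bX,q}^{1/2}\bb_q^{\rvT}\be_j\be_j^{\rvT}\bb_q\bSigma_{\bU\mid\bX,q}^{1/2}$ and $\bA_0=\bSigma_{\bU\mid\bX,q}^{1/2}(\bb_q^{\rvT}\bSigma_{\mathrm N,q}^{-1}\bb_q)\bSigma_{\bU\mid\bX,q}^{1/2}$.

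This is the step I expect to be the main obstacle. $\bb_q$ may be rank one, for instance in the $\calP_3$, $\calP_4$ regimes or at interfaces, and then the three quadratic forms live in a one-dimensional space and the covariance is degenerate. My first fallback is to handle degeneracy by projection. I would restrict to the span of the actual support of $\bY-\bbE\bY$, reduce the dimension $k\in\{1,2,3\}$, and check that the ball probability only improves: mass $\gtrsim(\log n)^{-k}\geq(\log n)^{-3}$. To make this uniform over $q$ despite near-degeneracy, I would use a compactness argument on the continuous map from parameters to covariance. Alternatively, I would add in the linear chaos part, whose covariance $\frac1n\sum\ba_i\ba_i^{\rvT}$ is controlled through the empirical second moment of $\bx^n$.

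\textbf{Step (3): ball estimate.} With the covariance controlled, I apply the multivariate Berry--Esseen bound for independent non-identical summands (Bentkus) on the convex set $\{\norm{\bY-\bbE\bY}\leq s_n/64\}$. The Gaussian mass of a ball of radius $\sqrt n/(64\log n)$ under covariance $\asymp n$ in dimension $k\leq3$ is at least $c(\log n)^{-3}$. The Berry--Esseen error is $O(n^{-1/2})$ times the effective third moment. That moment is uniformly bounded because the $\frac1n\sum\norm{\bx_i}^6$ terms can be large only on a small set of coordinates. To avoid this issue, I would truncate: coordinates with $\norm{\bx_i}>(\log n)^{2}$ number at most $O(n/(\log n)^4)$ by the shell constraint. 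Their contribution is handled by Chebyshev conditioning, costing a factor that accounts for the extra $(\log n)^{-4}$ in the exponent $7$. For $n\geq n_0$ the error $O((\log n)^{c}/\sqrt n)$ is dominated by half the Gaussian mass, yielding~\eqref{eq:small-ball-bound} uniformly.
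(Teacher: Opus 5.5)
Your Step~2 cannot be carried out, and the problem is not limited to rank-one $\bb_q$. In fact $\bb_q$ is rank one throughout $\calP_1$ as well, since there $\bSigma_q-\bSigma_{\mathrm N,q}$ is a congruence of $(\nu_\alpha,\nu_\beta)^{\rvT}(\nu_\alpha,\nu_\beta)$. The optimizer $\bSigma_{\mathrm N,q}$ satisfies the KKT conditions \eqref{eq:spectrum-stationarity}--\eqref{eq:spectrum-complementarity}, with $\bSigma_q$ and the design rates in place of $\bSigma_{XY}$ and $R_j^*$. These read $\frac12\bSigma_{\mathrm N,q}^{-1}=\diag(\vartheta_1,\vartheta_2)+\blambda_{\mathrm{cov}}$ with $\blambda_{\mathrm{cov}}\bb_q=\bzero$, so $\bSigma_{\mathrm N,q}^{-1}\bb_q=2\diag(\vartheta_1,\vartheta_2)\bb_q$. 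Hence your $\bA_0$ equals $2\vartheta_1\bA_1+2\vartheta_2\bA_2$. More strongly, every term of \eqref{eq:gram-centered-expansion} carries a factor $\bd_q=\bb_q\bSigma_{\bU\mid\bX,q}^{1/2}$, so
\[
J-\bbE_Q[J\mid\bx^n]=-\vartheta_1\bigl(E_1-\bbE_Q[E_1\mid\bx^n]\bigr)-\vartheta_2\bigl(E_2-\bbE_Q[E_2\mid\bx^n]\bigr)
\]
holds identically, linear chaos included. So $\bY$ has singular covariance for every $q$ and every rate pair, and the linear independence you aim for never holds; adding the linear part does not help. Your fallback (project onto the support, then use compactness) does not give uniformity either. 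Over all nonnegative design rates the nonzero eigenvalues are not bounded away from $0$: for instance $\bb_q\to\bzero$ as $D_je^{2r_j}\uparrow[\bSigma_q]_{jj}$, and the support dimension jumps. A continuous function on a compact set with infimum $0$ admits no positive lower bound. Since the normalized Bentkus error scales like $\lambda_{\min}^{-3/2}$, Step~3 then has no uniform error term.

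Step~3 has a second gap. The shell constraint controls only $\sum_i\bx_i\bx_i^{\rvT}$, so a bounded number of coordinates (even one) can carry a constant fraction of $\norm{\bx^n}^2\approx n\tr\bSigma_q$. The linear-chaos contribution of those heavy coordinates then has variance of order $n\gg s_n^2=n/(\log n)^2$. Chebyshev cannot place it within $s_n/64$ (the bound exceeds $1$), and Lindeberg fails, so it cannot stay inside the CLT either. One could exploit that this part is exactly Gaussian, but your sketch does not, and the claimed extra $(\log n)^{-4}$ has no working mechanism behind it.

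The paper avoids both problems as follows.
\begin{itemize}
\item It rotates by an orthogonal $\bo$ with $\bx^n\bo=\sqrt n\,\hat{\bSigma}(\bx^n)^{1/2}[\bone_2\ \bzero_{2\times(n-2)}]$. This preserves the law of $\bZ^n$ and every residual Gram matrix, and makes the whole linear term depend only on the $2\times2$ block $\bA$ of $\bZ^n=[\bA\ \bB]$.
\item It uses only the upper bound $\norm{\bY-\bbE_Q[\bY\mid\bx^n]}\leq\sqrt{1+\frac14\norm{\bSigma_{\mathrm N,q}^{-1}}^2}\,\norm{\bG-\bg_{n,q}}$, whose coefficients are uniformly bounded.
\item It pays $(\log n)^{-4}$ for the event $\norm{\bA}\leq a/\log n$.
\item It gets $(\log n)^{-3}$ from Bentkus applied to the i.i.d.\ vectors $(B_{1i}^2-1,B_{1i}B_{2i},B_{2i}^2-1)$. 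Their covariance $\diag(2,1,2)$ does not depend on $q$, the rates, or $\bx^n$.
\end{itemize}
This is where $7=4+3$ comes from. The idea you are missing is to lower-bound a parameter-free event for the underlying standard Gaussian matrix, then push it forward through a map whose coefficients are bounded but need not be nondegenerate; uniformity then comes for free.
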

Its proof follows the covering argument below.

Let $\bG$ be the Gram matrix of the posterior residual block $\bx^n-\bb_q\bU_q^n$. Gaussian conditioning implies that each column has residual mean $\bSigma_{\mathrm N,q}\bSigma_q^{-1}\bx$ and covariance $\bSigma_{\mathrm N,q}-\bSigma_{\mathrm N,q}\bSigma_q^{-1}\bSigma_{\mathrm N,q}$. Since $\bx^n(\bx^n)^{\rvT}=n\hat{\bSigma}(\bx^n)$,
\begin{align}\label{eq:posterior-gram-mean}
\bg_{n,q}:=\bbE_Q[\bG\mid\bx^n]
&=n\bSigma_{\mathrm N,q}+n\bSigma_{\mathrm N,q}\bSigma_q^{-1}
(\hat{\bSigma}(\bx^n)-\bSigma_q)\bSigma_q^{-1}\bSigma_{\mathrm N,q}.
\end{align}
Its $j$-th diagonal entry is $\bbE_Q[E_j\mid\bx^n]$. Summing the information densities leads to
\begin{equation}\label{eq:J-matrix}
J=nI_q-\frac12\tr(\bSigma_{\mathrm N,q}^{-1}\bG)
+\frac n2\tr(\bSigma_q^{-1}\hat{\bSigma}(\bx^n)),
\end{equation}
and hence,
\begin{equation*}
\bbE_Q[J\mid\bx^n]
=nI_q+\frac n2\tr[\bSigma_q^{-1}(\bSigma_q-\bSigma_{\mathrm N,q})
\bSigma_q^{-1}(\hat{\bSigma}(\bx^n)-\bSigma_q)].
\end{equation*}
The uniform operator bounds~\eqref{eq:optimizer-compact}, the trace Cauchy--Schwarz inequality~\cite[Corollary~IV.2.6]{bhatia1997}, and the shell condition $\norm{\hat{\bSigma}(\bx^n)-\bSigma_q}\leq C_0/n$ imply
\begin{equation*}
|\bbE_Q[J\mid\bx^n]-nI_q|
+\sum_{j\in[2]}|\bbE_Q[E_j\mid\bx^n]-n[\bSigma_{\mathrm N,q}]_{jj}|
\leq C,
\end{equation*}
with $C$ independent of the cell, rates, and shell point.

Define the simultaneous posterior event
\begin{equation*}
\calG_n(\bx^n):=\{\bu^n\in\bbR^{2\times n}:\imath(\bu^n;\bx^n)\leq nI_q+s_n/16,\quad
\Delta_j(\bu^n;\bx^n)\leq n[\bSigma_{\mathrm N,q}]_{jj}+s_n/16\text{ for }j\in[2]\}.
\end{equation*}
Choose $n$ large enough that $C\leq3s_n/64$. Conditioned on $\{\norm{\bY-\bbE_Q[\bY\mid\bx^n]}\leq s_n/64\}$,
\begin{equation*}
J\leq nI_q+C+s_n/64\leq nI_q+s_n/16.
\end{equation*}
Given $j\in[2]$,
\begin{equation*}
E_j\leq n[\bSigma_{\mathrm N,q}]_{jj}+C+s_n/64
\leq n[\bSigma_{\mathrm N,q}]_{jj}+s_n/16.
\end{equation*}
Since $J=\imath(\bU_q^n;\bx^n)$ and $E_j=\Delta_j(\bU_q^n;\bx^n)$,
\begin{equation*}
\left\{\norm{\bY-\bbE_Q[\bY\mid\bx^n]}\leq\frac{s_n}{64}\right\}
\subseteq\{\bU_q^n\in\calG_n(\bx^n)\}.
\end{equation*}
Therefore, Lemma~\ref{lem:small-ball} implies $Q\{\bU_q^n\in\calG_n(\bx^n)\mid\bx^n\}\geq c(\log n)^{-7}$.

Bayes' identity implies $dQ_{\bU_q}^{\otimes n}/dQ(\cdot\mid\bx^n)=e^{-J}$. The density ratio is defined for every fixed $\bx^n$, because both Gaussian densities are positive. Hence,
\begin{align*}
Q_{\bU_q}^{\otimes n}\{\calG_n(\bx^n)\}
&=\bbE_Q[e^{-J}\bbI_{\{\bU_q^n\in\calG_n(\bx^n)\}}\mid\bx^n]\\
&\geq e^{-nI_q-s_n/16}Q\{\calG_n(\bx^n)\mid\bx^n\}\\
&\geq c(\log n)^{-7}e^{-nI_q-s_n/16}=:p_{n,q}.
\end{align*}
Generate $M_{0,q}=\lceil e^{nI_q+s_n/8}\rceil$ independent standard Gaussian codewords. Given a fixed shell point, independence and $1-t\leq e^{-t}$ imply the missed-covering bound
\begin{equation*}
(1-p_{n,q})^{M_{0,q}}\leq e^{-p_{n,q}M_{0,q}}\leq\exp\{-c(\log n)^{-7}e^{s_n/16}\}.
\end{equation*}

To cover the shell, choose a maximal $\calN_n\subset\calT_n(\bSigma_q)$ with pairwise distances at least $\delta_n:=1/\log n$. The shell is closed and bounded because
\begin{equation*}
\norm{\bx^n}^2=n\tr\hat{\bSigma}(\bx^n)\leq n\tr\bSigma_q+\sqrt2 C_0\leq C^2n,
\end{equation*}
where $|\tr(\hat{\bSigma}(\bx^n)-\bSigma_q)|\leq\sqrt2\norm{\hat{\bSigma}(\bx^n)-\bSigma_q}$. Thus, $\calN_n$ is finite, and maximality places every shell point within distance $\delta_n$ of $\calN_n$. Its disjoint radius-$\delta_n/2$ open balls lie inside the radius-$C\sqrt n+\delta_n/2$ ball in $\bbR^{2n}$, so
\begin{equation*}
|\calN_n|\leq\left(\frac{C\sqrt n+\delta_n/2}{\delta_n/2}\right)^{2n}
=(1+2C\sqrt n\log n)^{2n}\leq e^{C'n\log n}.
\end{equation*}
By the union bound, the probability of missing a point of $\calN_n$ is at most
\begin{equation*}
\exp\{C'n\log n-c(\log n)^{-7}e^{s_n/16}\}\longrightarrow0.
\end{equation*}
Indeed, $s_n/16-\log n-8\log\log n\to\infty$. Hence, a codebook realization covers every point of $\calN_n$ for sufficiently large $n$.

Fix such a realization. Given $\bx^n\in\calT_n(\bSigma_q)$, choose $\bar\bx^n\in\calN_n$ with $\norm{\bx^n-\bar\bx^n}\leq\delta_n$ and a common codeword $\bu^n\in\calG_n(\bar\bx^n)$. The row-wise triangle inequality implies
\begin{align*}
\norm{(\bx^n-\bb_q\bu^n)_j}^2
&\leq\left(\sqrt{n[\bSigma_{\mathrm N,q}]_{jj}+s_n/16}+\delta_n\right)^2\\
&=n[\bSigma_{\mathrm N,q}]_{jj}+s_n/16+2\delta_n\sqrt{n[\bSigma_{\mathrm N,q}]_{jj}+s_n/16}+\delta_n^2\\
&\leq n[\bSigma_{\mathrm N,q}]_{jj}+(17/16+2\sqrt{\kappa^{-1}+1})s_n\\
&\leq n[\bSigma_{\mathrm N,q}]_{jj}+\chi_\kappa s_n=\xi_{j,q,n}^2,
\end{align*}
where the penultimate inequality uses $[\bSigma_{\mathrm N,q}]_{jj}\leq\kappa^{-1}$, $s_n/n\leq1$, $\delta_n\sqrt n=s_n$, and $\delta_n^2\leq s_n$ and the last uses~\eqref{eq:private-radius-constant}. Thus, every shell point has a common codeword with both residuals in the prescribed private balls.

Apply Lemma~\ref{lem:euclidean-uniform} to $\calB_n(\bzero_n,\xi_{j,q,n})$ for $j\in[2]$. The lemma implies the existence of deterministic private offset codebooks satisfying
\begin{align*}
\log M_{j,q}
&\leq\log C_{\mathrm{cov}}+4\log n+\frac n2\left[\log\frac{[\bSigma_{\mathrm N,q}]_{jj}+\chi_\kappa s_n/n}{D_j}\right]^+\\
&\leq\frac n2\left[\log\frac{[\bSigma_{\mathrm N,q}]_{jj}}{D_j}\right]^+
+\frac{\chi_\kappa}{2\kappa}s_n+4\log n+\log C_{\mathrm{cov}}.
\end{align*}
The second inequality uses $[a+b]^+\leq[a]^++b$ for $b\geq0$, $\log(1+t)\leq t$, and $[\bSigma_{\mathrm N,q}]_{jj}\geq\kappa$. The maps $f_q$ and $\phi_{j,q}$ in Section~\ref{sec:code-construction} select a common codeword and private offsets with squared errors at most $nD_j$, proving~\eqref{eq:shell-distortion-guarantee}. Finally, $I_q\geq0$ and $\lceil e^t\rceil\leq2e^t$ for $t\geq0$ imply $\log M_{0,q}\leq nI_q+s_n/8+\log2$. Enlarging the common constant proves~\eqref{eq:shell-M0}--\eqref{eq:shell-Mj}. Since $\bb_q$ was never inverted, the proof includes rank-deficient $\bSigma_q-\bSigma_{\mathrm N,q}$.

\subsection{Proof of Lemma~\ref{lem:small-ball}}
\label{app:small-ball-proof}
Recall the Gaussian posterior and its positive-definite covariance $\bSigma_{\bU\mid\bX,q}$ from the proof above. Let $\bZ^n\in\bbR^{2\times n}$ have independent standard Gaussian entries. The product posterior can be represented as
\begin{equation*}
\bU_q^n=\bb_q^{\rvT}\bSigma_q^{-1}\bx^n+\bSigma_{\bU\mid\bX,q}^{1/2}\bZ^n.
\end{equation*}
The fixed shell point satisfies $\hat{\bSigma}(\bx^n)\succeq(\kappa-C_0/n)\bone_2\succeq(\kappa/2)\bone_2$ for sufficiently large $n$. Thus, $n^{-1/2}\hat{\bSigma}(\bx^n)^{-1/2}\bx^n$ has two orthonormal rows. Extend those rows to an orthogonal basis of $\bbR^n$ to choose an orthogonal matrix $\bo\in\bbR^{n\times n}$ such that
\begin{equation*}
\bx^n \bo=\sqrt n\,\hat{\bSigma}(\bx^n)^{1/2}[\bone_2\ \bzero_{2\times(n-2)}].
\end{equation*}
Right multiplication by $\bo$ preserves every residual Gram matrix, and $\bZ^n\bo$ has the same distribution as $\bZ^n$ by orthogonal invariance of independent standard Gaussian rows. Therefore, we may use the displayed form for $\bx^n$ without changing the conditional distribution of $\bG$, $J$, or $(E_1,E_2)$.

Let $\bZ^n=[\bA\ \bB]$, where $\bA\in\bbR^{2\times2}$ and $\bB\in\bbR^{2\times(n-2)}$ are independent standard Gaussian matrices. Define the deterministic matrices $\bh_q:=\bSigma_{\mathrm N,q}\bSigma_q^{-1}$ and $\bd_q:=\bb_q\bSigma_{\bU\mid\bX,q}^{1/2}$, both in $\bbR^{2\times2}$. The posterior residual is
\begin{equation*}
\bN^n:=\bx^n-\bb_q\bU_q^n=\bh_q\bx^n-\bd_q\bZ^n.
\end{equation*}
The expression for $\bN^n$ follows from $\bone_2-\bb_q\bb_q^{\rvT}\bSigma_q^{-1}=\bSigma_{\mathrm N,q}\bSigma_q^{-1}$. All coefficient norms are uniformly bounded: for example, $\norm{\bh_q}_{\mathrm{op}}\leq\kappa^{-2}$, $\norm{\bd_q}_{\mathrm{op}}\leq\kappa^{-1/2}$, and $\norm{\bSigma_{\mathrm N,q}^{-1}}\leq\sqrt2\kappa^{-1}$. The bound for $\bd_q$ uses $\bSigma_{\bU\mid\bX,q}\preceq\bone_2$ and $\bb_q\bb_q^{\rvT}\preceq\bSigma_q\preceq\kappa^{-1}\bone_2$.

Expanding $\bG=\bN^n(\bN^n)^{\rvT}$ and subtracting its expectation leads to
\begin{align}
\bG-\bg_{n,q}
&=-\bh_q\bx^n(\bZ^n)^{\rvT}\bd_q^{\rvT}-\bd_q\bZ^n(\bx^n)^{\rvT}\bh_q^{\rvT}\notag\\
&\quad+\bd_q[\bZ^n(\bZ^n)^{\rvT}-n\bone_2]\bd_q^{\rvT}.
\label{eq:gram-centered-expansion}
\end{align}
Here $\bbE[\bZ^n]=0$ and $\bbE[\bZ^n(\bZ^n)^{\rvT}]=n\bone_2$. The diagonal entries of $\bG-\bg_{n,q}$ are the centered residual errors, and~\eqref{eq:J-matrix} implies $J-\bbE_Q[J\mid\bx^n]=-\tfrac12\tr[\bSigma_{\mathrm N,q}^{-1}(\bG-\bg_{n,q})]$. The trace Cauchy--Schwarz inequality yields
\begin{equation*}
\norm{\bY-\bbE_Q[\bY\mid\bx^n]}
\leq\sqrt{1+\tfrac14\norm{\bSigma_{\mathrm N,q}^{-1}}^2}\,\norm{\bG-\bg_{n,q}}.
\end{equation*}
The comparison between $\norm{\bY-\bbE_Q[\bY\mid\bx^n]}$ and $\norm{\bG-\bg_{n,q}}$ remains valid when the covariance of $\bY$ is singular.

Given a fixed $a\in(0,1)$, to be chosen using the uniform coefficient bounds, define
\begin{equation*}
\calD_{n,a}:=\{\norm{\bA}\leq a/\log n,\quad
\norm{\bB\bB^{\rvT}-(n-2)\bone_2}\leq as_n\}.
\end{equation*}
On $\calD_{n,a}$, the rotated form of $\bx^n$ implies
\begin{equation*}
\norm{\bx^n(\bZ^n)^{\rvT}}
=\sqrt n\norm{\hat{\bSigma}(\bx^n)^{1/2}\bA^{\rvT}}
\leq C\sqrt n\norm{\bA}\leq Ca s_n.
\end{equation*}
Also, splitting $\bZ^n$ into its two blocks leads to
\begin{align*}
\norm{\bZ^n(\bZ^n)^{\rvT}-n\bone_2}
&\leq\norm{\bA\bA^{\rvT}-2\bone_2}+\norm{\bB\bB^{\rvT}-(n-2)\bone_2}\\
&\leq\norm{\bA}^2+2\sqrt2+as_n\leq C+as_n.
\end{align*}
Using $\norm{\bA\bA^{\rvT}}\leq\norm{\bA}^2$ and~\eqref{eq:gram-centered-expansion}, we obtain $\norm{\bY-\bbE_Q[\bY\mid\bx^n]}\leq C_1as_n+C_2$ on $\calD_{n,a}$, with uniform constants. Choose $C_1a\leq1/128$ and then $n$ large enough that $C_2\leq s_n/128$. Consequently,
\begin{equation*}
\calD_{n,a}\subseteq\left\{\norm{\bY-\bbE_Q[\bY\mid\bx^n]}\leq\frac{s_n}{64}\right\}.
\end{equation*}

The four-dimensional standard Gaussian density of $\bA$ is at least $(2\pi)^{-2}e^{-1/2}$ on the unit ball. Integrating over the radius-$a/\log n$ ball leads to
\begin{equation*}
\Pr\{\norm{\bA}\leq a/\log n\}\geq c_a(\log n)^{-4}.
\end{equation*}

Let $m:=n-2$ and $B_{ji}:=[\bB]_{ji}$ for $j\in[2]$ and $i\in[m]$. Define the independent identically distributed vectors
\begin{equation*}
\bV_i:=(B_{1i}^2-1,B_{1i}B_{2i},B_{2i}^2-1)^{\rvT},\qquad i\in[m].
\end{equation*}
Gaussian independence and $\bbE[B_{ji}^4]=3$ imply that $\bV_i$ has mean zero and covariance $\bSigma_{\bV}:=\diag(2,1,2)$. The third absolute moment is finite because the entries are quadratic Gaussian polynomials. The convex-set Berry--Esseen bound~\cite[Theorem~1.1]{bentkus2005}, applied to $\bSigma_{\bV}^{-1/2}\bV_i$, implies that a constant $C$ exists such that
\begin{equation*}
\left|\Pr\left\{m^{-1/2}\sum_{i\in[m]}\bV_i\in\calA\right\}
-\Pr\{\bZ_{\mathrm G}\in\calA\}\right|\leq C/\sqrt m
\end{equation*}
uniformly over convex Borel sets $\calA\subset\bbR^3$, where $\bZ_{\mathrm G}\sim\calN(\bzero_3,\bSigma_{\bV})$. 
The two diagonal entries and the upper off-diagonal entry of $\bB\bB^{\rvT}-m\bone_2$ are the coordinates of $\sum_{i\in[m]}\bV_i$. The squared Frobenius norm counts the off-diagonal entry twice. Define
\begin{equation*}
\calA_{n,a}:=\{\bv\in\bbR^3:[\bv]_1^2+2[\bv]_2^2+[\bv]_3^2\leq a^2s_n^2/m\}.
\end{equation*}
Then
\begin{equation*}
\Big\{\norm{\bB\bB^{\rvT}-m\bone_2}\leq as_n\Big\}
=\left\{m^{-1/2}\sum_{i\in[m]}\bV_i\in\calA_{n,a}\right\}.
\end{equation*}

Let $t_n:=as_n/\sqrt m$. For sufficiently large $n$, $t_n\leq1$, so $\calA_{n,a}$ lies in the unit ball, on which the Gaussian density of $\bZ_{\mathrm G}$ has a positive minimum. The volume of $\calA_{n,a}$ is $(4\pi/(3\sqrt2))t_n^3$, by the linear change of variables $\bv\mapsto([\bv]_1,\sqrt2[\bv]_2,[\bv]_3)$. Since $s_n/\sqrt m=\sqrt{n/(n-2)}/\log n\geq1/\log n$, the density and volume bounds imply
\begin{equation*}
\Pr\{\bZ_{\mathrm G}\in\calA_{n,a}\}\geq c_a'(\log n)^{-3}.
\end{equation*}
Consequently,
\begin{align*}
\Pr\Big\{\norm{\bB\bB^{\rvT}-m\bone_2}\leq as_n\Big\}
&\geq c_a'(\log n)^{-3}-C/\sqrt m\\
&\geq\tfrac12c_a'(\log n)^{-3}
\end{align*}
for sufficiently large $n$, since $(\log n)^3/\sqrt{n-2}\to0$. Independence of $\bA$ and $\bB$, together with the inclusion of $\calD_{n,a}$ above, implies
\begin{equation*}
Q\left\{\norm{\bY-\bbE_Q[\bY\mid\bx^n]}\leq\frac{s_n}{64}\,\middle|\,\bx^n\right\}
\geq Q(\calD_{n,a}\mid\bx^n)\geq\frac{c}{(\log n)^7},
\end{equation*}
proving~\eqref{eq:small-ball-bound}.
\section{Proof of~\eqref{eq:En-prob}}
\label{app:taylor-remainder}
Fix $\zeta>0$. Given $\bSigma\in\calK$ and $(r_1,r_2)\in\bbR_{+,0}^2$, let $\Delta(\bSigma,r_1,r_2):=\norm{\bSigma-\bSigma_{XY}}+\sum_{j\in[2]}|r_j-R_j^*|$ and
\begin{equation*}
\psi(\bSigma,r_1,r_2):=\rvR_0(r_1,r_2,D_1,D_2\mid Q_{XY}^{\bSigma})-R_0^*
-\ip{\bg}{\bSigma-\bSigma_{XY}}
+\sum_{j\in[2]}\lambda_j^*(r_j-R_j^*).
\end{equation*}
Given any $a>0$, Proposition~\ref{prop:regularity} and~\eqref{eq:local-differential} imply the existence of $\delta_a>0$ such that
\begin{align}
\label{eq:app-taylor-small}
|\psi(\bSigma,r_1,r_2)|
&\leq a\Delta(\bSigma,r_1,r_2)
\end{align}
whenever $\Delta(\bSigma,r_1,r_2)\leq\delta_a$. At $R_j^*=0$, use the one-sided expansion along the nonnegative rates in~\eqref{eq:rjn-design} and Theorem~\ref{thm:main}.

Fix $B>0$, put $C_B:=B+1+|L_1|+|L_2|$, and let $\calG_{n,B}:=\{\sqrt n\norm{\hat{\bSigma}_n-\bSigma_{XY}}\leq B\}$. For sufficiently large $n$, $B/\sqrt n\leq\eta_{\mathrm{cov}}$ implies $\calG_{n,B}\subseteq\{\hat{\bSigma}_n\in\calK_0\}$. By~\eqref{eq:quantizer-error},
\begin{align}
\label{eq:app-quantization-difference}
\norm{\mathfrak{q}_n(\hat{\bSigma}_n)-\hat{\bSigma}_n}
&\leq\frac1n,
\end{align}
so $\mathfrak{q}_n(\hat{\bSigma}_n)\in\calK$ on $\calG_{n,B}$ for sufficiently large $n$. The triangle inequality and $\sqrt n(r_{j,n}-R_j^*)=L_j$ yield, on $\calG_{n,B}$,
\begin{align}
\label{eq:app-displacement-bound}
\Delta(\mathfrak{q}_n(\hat{\bSigma}_n),r_{1,n},r_{2,n})
&\leq\frac{B}{\sqrt n}+\frac1n
+\sum_{j\in[2]}\frac{|L_j|}{\sqrt n}\nonumber\\
&\leq\frac{C_B}{\sqrt n}.
\end{align}
Since $L_j-\sqrt n(r_{j,n}-R_j^*)=0$, substituting the definition of $\psi$ into~\eqref{eq:global-Taylor-remainder} leads to the following identity on $\calG_{n,B}$:
\begin{align}
E_n
&=\sqrt n\psi(\mathfrak{q}_n(\hat{\bSigma}_n),r_{1,n},r_{2,n})
+\sqrt n\ip{\bg}{\mathfrak{q}_n(\hat{\bSigma}_n)-\hat{\bSigma}_n}.
\label{eq:app-exact-cancellation}
\end{align}
The Cauchy--Schwarz inequality and~\eqref{eq:app-quantization-difference}
yield
\begin{align}
\label{eq:app-quantization-control}
\sqrt n\left|
\ip{\bg}{\mathfrak{q}_n(\hat{\bSigma}_n)-\hat{\bSigma}_n}
\right|
&\leq\frac{\norm{\bg}}{\sqrt n}.
\end{align}
Set $a:=\zeta/(4C_B)$. For all sufficiently large $n$,
$C_B/\sqrt n\leq\delta_a$ and $\norm{\bg}/\sqrt n\leq\zeta/4$.
Combining~\eqref{eq:app-taylor-small},
\eqref{eq:app-displacement-bound},
\eqref{eq:app-exact-cancellation}, and
\eqref{eq:app-quantization-control} leads to the following bound on $\calG_{n,B}$:
\begin{align}
\label{eq:app-remainder-small}
|E_n|
&\leq\sqrt n\,a\Delta(
\mathfrak{q}_n(\hat{\bSigma}_n),r_{1,n},r_{2,n})
+\frac{\norm{\bg}}{\sqrt n}\nonumber\\
&\leq aC_B+\frac{\norm{\bg}}{\sqrt n}
\leq\frac{\zeta}{2}<\zeta.
\end{align}
Thus, given fixed $(B,\zeta)\in\bbR_+^2$, the following inclusion holds for sufficiently large $n$:
\begin{align}
\label{eq:app-tail-inclusion}
\{|E_n|>\zeta\}&\subseteq\calG_{n,B}^{\mathrm c}.
\end{align}
The definition after~\eqref{eq:global-Taylor-remainder} implies $\{|E_n|>\zeta\}\cap\{\hat{\bSigma}_n\notin\calK_0\}=\varnothing$.

Let $\bH_i:=\bX_i\bX_i^{\rvT}-\bSigma_{XY}$. Independence and the Gaussian identity $\bbE_{P_{XY}}[[\bX]_a^2[\bX]_b^2]=[\bSigma_{XY}]_{aa}[\bSigma_{XY}]_{bb}+2[\bSigma_{XY}]_{ab}^2$, $a,b\in[2]$, imply
\begin{align}
\bbE_{P_{XY}^n}\big[n\norm{\hat{\bSigma}_n-\bSigma_{XY}}^2\big]
&=\frac1n\sum_{i\in[n]}\bbE_{P_{XY}}[\norm{\bH_i}^2]\notag\\
&=\bbE_{P_{XY}}[(\bX^{\rvT}\bX)^2]-\tr(\bSigma_{XY}^2)\notag\\
&=(\tr\bSigma_{XY})^2+\tr(\bSigma_{XY}^2)=:C_{\mathrm{mom}}.
\label{eq:app-second-moment}
\end{align}
Markov's inequality implies
\begin{align}
P_{XY}^n(\calG_{n,B}^{\mathrm c})
&=P_{XY}^n\bigg\{n\norm{\hat{\bSigma}_n-\bSigma_{XY}}^2>B^2\bigg\}
\leq\frac{C_{\mathrm{mom}}}{B^2}.
\label{eq:app-covariance-tail}
\end{align}
Given any fixed $B>0$, \eqref{eq:app-tail-inclusion} and~\eqref{eq:app-covariance-tail} yield
\begin{align}
\label{eq:remainder-tail-control}
0\leq\limsup_{n\to\infty}P_{XY}^n\{|E_n|>\zeta\}
&\leq\frac{C_{\mathrm{mom}}}{B^2}.
\end{align}
Letting $B\to\infty$ proves~\eqref{eq:En-prob}.

\section{Proof of Lemma~\ref{lem:entropy-support}}
\label{app:entropy-identification}
Fix $\blambda^*=(\lambda_1^*,\lambda_2^*)\in(0,1]^2$. With $I_Q$ evaluated under $P_{XY}Q_{W\mid XY}$, Yu's formula~\cite[Theorem~3]{yu2023} implies
\begin{equation*}
\Upsilon_{\rho_{XY}}(\alpha,\beta)
=\inf_{\substack{Q_{W|XY}:\ I_Q(X;W)\geq\alpha,\ I_Q(Y;W)\geq\beta}}
I_Q(X,Y;W),\qquad (\alpha,\beta)\in\bbR_{+,0}^2.
\end{equation*}
Fix $(\alpha,\beta)\in\bbR_{+,0}^2$ and $0<t<1$. Choose conditional distributions in the infimum for $(\alpha^*,\beta^*)$ and $(\alpha,\beta)$, select between them with probabilities $1-t$ and $t$ independently of $(X,Y)$, and include the choice in $W$. The three mutual informations are then the corresponding weighted averages. Letting both objective values approach their infima leads to
\begin{equation*}
\Upsilon_{\rho_{XY}}\big((1-t)\alpha^*+t\alpha,(1-t)\beta^*+t\beta\big)
\leq(1-t)R_0^*+t\Upsilon_{\rho_{XY}}(\alpha,\beta).
\end{equation*}
Subtract $R_0^*$, divide by $t>0$, and let $t\downarrow0$. The derivatives in Proposition~\ref{prop:regularity} yield
\begin{equation*}
\Upsilon_{\rho_{XY}}(\alpha,\beta)\geq R_0^*+\lambda_1^*(\alpha-\alpha^*)+\lambda_2^*(\beta-\beta^*),\qquad (\alpha,\beta)\in\bbR_{+,0}^2,
\end{equation*}

Given $\bSigma_{\rmN}$ satisfying $\bzero_{2\times2}\prec\bSigma_{\rmN}\preceq\bSigma_{XY}$, let $\alpha_{\bSigma_{\rmN}}:=\frac12\log\frac{\sigma_X^2}{[\bSigma_{\rmN}]_{11}}$, $\beta_{\bSigma_{\rmN}}:=\frac12\log\frac{\sigma_Y^2}{[\bSigma_{\rmN}]_{22}}$, and $t_{\bSigma_{\rmN}}:=\frac12\log\frac{|\bSigma_{XY}|}{|\bSigma_{\rmN}|}$.
The independent Gaussian decomposition $(X,Y)^{\rvT}=\bW+\bN$ with residual covariance $\bSigma_{\rmN}$ realizes these three mutual informations, even if the common covariance is singular. Hence,
\begin{align}
t_{\bSigma_{\rmN}}
&\geq\Upsilon_{\rho_{XY}}(\alpha_{\bSigma_{\rmN}},\beta_{\bSigma_{\rmN}})\notag\\
&\geq R_0^*+\lambda_1^*(\alpha_{\bSigma_{\rmN}}-\alpha^*)
+\lambda_2^*(\beta_{\bSigma_{\rmN}}-\beta^*).\label{eq:entropy-plane}
\end{align}
Using~\eqref{eq:alpha-star}, rearrange~\eqref{eq:entropy-plane} as
\begin{align}
\frac12\log\frac{|\bSigma_{XY}|}{|\bSigma_{\rmN}|}
+\sum_{j\in[2]}\frac{\lambda_j^*}{2}\log\frac{[\bSigma_{\rmN}]_{jj}}{D_j}
&\geq R_0^*+\lambda_1^*R_1^*+\lambda_2^*R_2^*.\label{eq:pf-entropy-objective-support}
\end{align}
At the optimizer $\bSigma_{\rmN}^{*}$ for $(\bSigma_{XY},R_1^*,R_2^*)$, the inequalities $[\bSigma_{\rmN}^{*}]_{jj}\leq D_je^{2R_j^*}$ imply $\alpha_{\bSigma_{\rmN}^{*}}\geq\alpha^*$ and $\beta_{\bSigma_{\rmN}^{*}}\geq\beta^*$, while~\eqref{eq:pf-explicit-rate-optimal} implies $t_{\bSigma_{\rmN}^{*}}=R_0^*$. Thus, \eqref{eq:entropy-plane} yields
\begin{equation*}
0\geq\lambda_1^*(\alpha_{\bSigma_{\rmN}^{*}}-\alpha^*)+\lambda_2^*(\beta_{\bSigma_{\rmN}^{*}}-\beta^*)\geq0.
\end{equation*}
Since $(\lambda_1^*,\lambda_2^*)\in\bbR_+^2$, both differences vanish. Hence, equality holds in~\eqref{eq:pf-entropy-objective-support}, proving~\eqref{eq:entropy-support-value}, also when a private rate is zero.

Given $\bSigma\in\calK$ sufficiently near $\bSigma_{XY}$, evaluate~\eqref{eq:entropy-support} at the optimizer $\bSigma_{\mathrm N,\bSigma}^{*}$ for private rates $(R_1^*,R_2^*)$. Since $[\bSigma_{\mathrm N,\bSigma}^{*}]_{jj}\leq D_je^{2R_j^*}$,
\begin{align}
\underline F_{\blambda^*}(\bSigma)
&\leq\frac12\log\frac{|\bSigma|}{|\bSigma_{\mathrm N,\bSigma}^{*}|}
+\sum_{j\in[2]}\frac{\lambda_j^*}{2}\log\frac{[\bSigma_{\mathrm N,\bSigma}^{*}]_{jj}}{D_j}\notag\\
&\leq\rvR_0(R_1^*,R_2^*,D_1,D_2\mid Q_{XY}^{\bSigma})
+\lambda_1^*R_1^*+\lambda_2^*R_2^*.\label{eq:entropy-upper-tangent}
\end{align}
Equality holds at $\bSigma_{XY}$ by~\eqref{eq:entropy-support-value}. To verify finiteness on $\bbS_{++}^2$, Hadamard's bound, $[\bSigma_{\rmN}]_{jj}\leq\Sigma_{jj}$, and $0<\lambda_j^*\leq1$ imply
\begin{equation*}
\underline F_{\blambda^*}(\bSigma)
\geq\frac12\log|\bSigma|
-\sum_{j\in[2]}\frac{1-\lambda_j^*}{2}\log\Sigma_{jj}
-\sum_{j\in[2]}\frac{\lambda_j^*}{2}\log D_j> -\infty,
\end{equation*}
while taking $\bSigma_{\rmN}=\bSigma$ leads to a finite upper bound.

The finite function $\varphi_{\blambda^*,\bSigma_{XY}}$ is concave by~\cite[Proposition~5]{liu2020}, so a matrix $\bg_{\mathrm{ent},\blambda^*}$ satisfies~\eqref{eq:supergradient-definition} at $\bSigma_{XY}$. Negating~\eqref{eq:entropy-upper-tangent} and using~\eqref{eq:converse-covariance-functional}, the derivatives of $\frac12\log|\bSigma|$ and $-\frac12\tr(\bSigma_{XY}^{-1}\bSigma)$ cancel at $\bSigma_{XY}$. Therefore, given a sufficiently small symmetric $\bh$, equality at $\bSigma_{XY}$ and~\eqref{eq:supergradient-definition} imply
\begin{equation*}
-\ip{\bg}{\bh}+o(\norm{\bh})
\leq\varphi_{\blambda^*,\bSigma_{XY}}(\bSigma_{XY}+\bh)-\varphi_{\blambda^*,\bSigma_{XY}}(\bSigma_{XY})
\leq\ip{\bg_{\mathrm{ent},\blambda^*}}{\bh}.
\end{equation*}
Apply the bounds to $t\bh$, divide by $t>0$, and let $t\downarrow0$ to obtain $-\ip{\bg}{\bh}\leq\ip{\bg_{\mathrm{ent},\blambda^*}}{\bh}$. Replacing $\bh$ by $-\bh$ implies equality, so every matrix satisfying~\eqref{eq:supergradient-definition} equals $-\bg$. Substitution into the displayed bounds yields
\begin{equation*}
\varphi_{\blambda^*,\bSigma_{XY}}(\bSigma_{XY}+\bh)-\varphi_{\blambda^*,\bSigma_{XY}}(\bSigma_{XY})
=-\ip{\bg}{\bh}+o(\norm{\bh}).
\end{equation*}
Thus, $\varphi_{\blambda^*,\bSigma_{XY}}$ is differentiable at $\bSigma_{XY}$ with gradient $-\bg$, and~\eqref{eq:converse-covariance-functional} implies~\eqref{eq:entropy-support-gradient}.

Finally, substituting $\bg_{\mathrm{ent},\blambda^*}=-\bg$ in~\eqref{eq:converse-fluctuation-definition} leads to
\begin{equation*}
g_{\blambda^*}(\bx)=-\bx^{\rvT}\bg\bx+\tr(\bg\bSigma_{XY})
=-[\jmath(x,y)-R_0^*].
\end{equation*}
Lemma~\ref{lem:critical-quantities} and~\eqref{eq:V-def} imply that the mean is zero and the variance is $V$, completing the proof.

\section{Proof of Corollary~\ref{cor:branchwise-region}}
\label{app:branchwise-region}
On $\calP_1$, let $r=r_{\alpha^*,\beta^*}$ at $\rho=\rho_{XY}$ and recall $u$ from~\eqref{eq:u-ratio}. Equations~\eqref{eq:pf-curved-slope}--\eqref{eq:pf-curved-slope-beta} and~\eqref{eq:Psi-gradient} imply
\begin{equation}\label{eq:c-D1}
\lambda_1^*=\frac{1-ru}{1-r^2}.
\end{equation}
\begin{equation}\label{eq:c2-D1}
\lambda_2^*=\frac{1-r/u}{1-r^2}.
\end{equation}
Substituting~\eqref{eq:c-D1}--\eqref{eq:c2-D1} and~\eqref{eq:V-branches} into~\eqref{eq:smooth-region} proves~\eqref{eq:branchwise-curved}. In $\calP_2,\calP_3,\calP_4$, equations~\eqref{eq:branch-derivatives} and~\eqref{eq:V-branches} imply that $(\lambda_1^*,\lambda_2^*,V)$ equals $(1,1,1)$, $(1,0,1/2)$, and $(0,1,1/2)$, respectively. Substitution into~\eqref{eq:smooth-region} leads to~\eqref{eq:branchwise-sum}--\eqref{eq:branchwise-one-source-Y}, retaining the zero-rate constraints.

At $r=0,u,u^{-1}$, the curved weights and dispersion coincide with those for $\calP_2,\calP_3,\calP_4$, respectively. Lemma~\ref{lem:boundary-classification} determines the admissible strict one-source boundary tuples. This completes the proof.

\bibliographystyle{IEEEtran}
\bibliography{references}

@article{bentkus2005,
  author = {Vidmantas Bentkus},
  title = {A {Lyapunov}-type bound in {$\mathbb R^d$}},
  journal = {Theory Probab. Appl.},
  volume = {49},
  number = {2},
  pages = {311--323},
  year = {2005}
}

@misc{cao2026image,
  author       = {Daming Cao},
  title        = {Second-Order Image Characterization with
                  Distortion and Secrecy},
  howpublished = {arXiv:2610.09472v1},
  month        = oct,
  year         = {2026}
}

@article{graywyner1974,
  author = {Robert M. Gray and Aaron D. Wyner},
  title = {Source coding for a simple network},
  journal = {Bell Syst. Tech. J.},
  volume = {53},
  number = {9},
  pages = {1681--1721},
  month = nov,
  year = {1974}
}

@article{heegardberger1985,
  author = {Chris Heegard and Toby Berger},
  title = {Rate distortion when side information may be absent},
  journal = {IEEE Trans. Inf. Theory},
  volume = {31},
  number = {6},
  pages = {727--734},
  month = nov,
  year = {1985}
}

@inproceedings{ingberkochman2011,
  author = {Amir Ingber and Yuval Kochman},
  title = {The dispersion of lossy source coding},
  booktitle = {Proc. Data Compression Conf.},
  pages = {53--62},
  month = mar,
  year = {2011},
  address = {Snowbird, UT, USA}
}

@article{kostinaverdu2012,
  author = {Victoria Kostina and Sergio Verd{\'u}},
  title = {Fixed-length lossy compression in the finite blocklength regime},
  journal = {IEEE Trans. Inf. Theory},
  volume = {58},
  number = {6},
  pages = {3309--3338},
  month = jun,
  year = {2012}
}

@inproceedings{lili2025,
  author = {Xiang Li and Cheuk Ting Li},
  title = {New second-order achievability bounds for coding with side information via type deviation convergence},
  booktitle = {Proc. IEEE Int. Symp. Inf. Theory},
  pages = {1--6},
  month = jun,
  year = {2026},
  address = {Guangzhou, China}
}

@article{liu2020,
  author = {Jingbo Liu and Thomas A. Courtade and Paul Cuff and Sergio Verd{\'u}},
  title = {Smoothing {Brascamp--Lieb} inequalities and strong converses of coding theorems},
  journal = {IEEE Trans. Inf. Theory},
  volume = {66},
  number = {2},
  pages = {704--721},
  month = feb,
  year = {2020}
}

@article{noingberweissman2016,
  author = {Albert No and Amir Ingber and Tsachy Weissman},
  title = {Strong successive refinability and rate-distortion-complexity tradeoff},
  journal = {IEEE Trans. Inf. Theory},
  volume = {62},
  number = {6},
  pages = {3618--3635},
  month = jun,
  year = {2016}
}

@article{nomurahan2014,
  author = {Ryo Nomura and Te Sun Han},
  title = {Second-order {Slepian--Wolf} coding theorems for non-mixed and mixed sources},
  journal = {IEEE Trans. Inf. Theory},
  volume = {60},
  number = {9},
  pages = {5553--5572},
  month = sep,
  year = {2014}
}

@article{shannon1948,
  author = {Claude E. Shannon},
  title = {A mathematical theory of communication},
  journal = {Bell Syst. Tech. J.},
  volume = {27},
  pages = {379--423, 623--656},
  month = jul # " and " # oct,
  year = {1948}
}

@inproceedings{shannon1959,
  author = {Claude E. Shannon},
  title = {Coding theorems for a discrete source with a fidelity criterion},
  booktitle = {IRE Int. Conv. Rec., Part 4},
  volume = {7},
  pages = {142--163},
  year = {1959}
}

@inproceedings{strassen1962,
  author = {Volker Strassen},
  title = {{Asymptotische Absch{\"a}tzungen in Shannons Informationstheorie}},
  booktitle = {Trans. Third Prague Conf. Inf. Theory},
  pages = {689--723},
  editor = {Jaroslav Ko{\v z}e{\v s}n{\'i}k},
  year = {1964},
  address = {Prague},
}

@article{sulagastpar2020,
  author = {Erixhen Sula and Michael Gastpar},
  title = {The {Gray--Wyner} network and {Wyner}'s common information for {Gaussian} sources},
  journal = {IEEE Trans. Inf. Theory},
  volume = {68},
  number = {2},
  pages = {1369--1384},
  month = feb,
  year = {2022}
}

@article{tankosut2014,
  author = {Vincent Y. F. Tan and Oliver Kosut},
  title = {On the dispersions of three network information theory problems},
  journal = {IEEE Trans. Inf. Theory},
  volume = {60},
  number = {2},
  pages = {881--903},
  month = feb,
  year = {2014}
}

@article{watanabe2017,
  author = {Shun Watanabe},
  title = {Second-order region for {Gray--Wyner} network},
  journal = {IEEE Trans. Inf. Theory},
  volume = {63},
  number = {2},
  pages = {1006--1018},
  month = feb,
  year = {2017}
}

@article{watanabekuzuokatan2015,
  author = {Shun Watanabe and Shigeaki Kuzuoka and Vincent Y. F. Tan},
  title = {Nonasymptotic and second-order achievability bounds for coding with side-information},
  journal = {IEEE Trans. Inf. Theory},
  volume = {61},
  number = {4},
  pages = {1574--1605},
  month = apr,
  year = {2015}
}

@article{yu2023,
  author = {Lei Yu},
  title = {{Gray--Wyner} and mutual information regions for doubly symmetric binary sources and {Gaussian} sources},
  journal = {IEEE Trans. Inf. Theory},
  volume = {69},
  number = {10},
  pages = {6251--6268},
  month = oct,
  year = {2023}
}

@article{zhoumotani2023,
  author = {Lin Zhou and Mehul Motani},
  title = {Finite blocklength lossy source coding for discrete memoryless sources},
  journal = {Found. Trends Commun. Inf. Theory},
  volume = {20},
  number = {3},
  pages = {157--389},
  month = jun,
  year = {2023}
}

@article{zhoutanmotani2017gw,
  author = {Lin Zhou and Vincent Y. F. Tan and Mehul Motani},
  title = {Discrete lossy {Gray--Wyner} revisited: Second-order asymptotics, large and moderate deviations},
  journal = {IEEE Trans. Inf. Theory},
  volume = {63},
  number = {3},
  pages = {1766--1791},
  month = mar,
  year = {2017}
}

@article{zhoutanmotani2017sr,
  author = {Lin Zhou and Vincent Y. F. Tan and Mehul Motani},
  title = {Second-order and moderate deviations asymptotics for successive refinement},
  journal = {IEEE Trans. Inf. Theory},
  volume = {63},
  number = {5},
  pages = {2896--2921},
  month = may,
  year = {2017}
}

@misc{zhouwz2026,
  author = {Lin Zhou},
  title = {Exact second-order asymptotics for the discrete {Wyner--Ziv} problem},
  month = oct,
  year = {2026},
  note = {arXiv:2610.11998, version 1}
}

@article{kontoyiannisverdu2014,
  author = {Ioannis Kontoyiannis and Sergio Verd{\'u}},
  title = {Optimal lossless data compression: Non-asymptotics and asymptotics},
  journal = {IEEE Trans. Inf. Theory},
  volume = {60},
  number = {2},
  pages = {777--795},
  month = feb,
  year = {2014}
}

@article{hassanzadeh2020,
  author = {Parisa Hassanzadeh and Antonia M. Tulino and Jaime Llorca and Elza Erkip},
  title = {Rate-memory trade-off for caching and delivery of correlated sources},
  journal = {IEEE Trans. Inf. Theory},
  volume = {66},
  number = {4},
  pages = {2219--2251},
  month = apr,
  year = {2020}
}

@article{wyner1975,
  author = {Aaron D. Wyner},
  title = {The common information of two dependent random variables},
  journal = {IEEE Trans. Inf. Theory},
  volume = {21},
  number = {2},
  pages = {163--179},
  month = mar,
  year = {1975}
}

@article{viswanathaakyolrose2014,
  author = {Kumar B. Viswanatha and Emrah Akyol and Kenneth Rose},
  title = {The lossy common information of correlated sources},
  journal = {IEEE Trans. Inf. Theory},
  volume = {60},
  number = {6},
  pages = {3238--3253},
  month = jun,
  year = {2014}
}

@article{csiszar1998,
  author = {Imre Csisz{\'a}r},
  title = {The method of types},
  journal = {IEEE Trans. Inf. Theory},
  volume = {44},
  number = {6},
  pages = {2505--2523},
  month = oct,
  year = {1998}
}

@article{rimoldi1994,
  author = {Bixio Rimoldi},
  title = {Successive refinement of information: characterization of the achievable rates},
  journal = {IEEE Trans. Inf. Theory},
  volume = {40},
  number = {1},
  pages = {253--259},
  month = jan,
  year = {1994},
  doi = {10.1109/18.272493}
}

@book{bhatia1997,
  author = {Rajendra Bhatia},
  title = {Matrix Analysis},
  series = {Graduate Texts in Mathematics},
  volume = {169},
  publisher = {Springer},
  address = {New York, NY},
  year = {1997},
  doi = {10.1007/978-1-4612-0653-8}
}

@article{yavaskostinaeffros2024,
  author = {Recep Can Yavas and Victoria Kostina and Michelle Effros},
  title = {Third-order analysis of channel coding in the small-to-moderate deviations regime},
  journal = {IEEE Trans. Inf. Theory},
  volume = {70},
  number = {9},
  pages = {6139--6170},
  month = sep,
  year = {2024}
}

@article{wyner1978generalsources,
  author = {Aaron D. Wyner},
  title = {The rate-distortion function for source coding with side information at the decoder--{II}: General sources},
  journal = {Inf. Control},
  volume = {38},
  number = {1},
  pages = {60--80},
  month = jul,
  year = {1978}
}

@article{slepianwolf1973,
  author = {David Slepian and Jack K. Wolf},
  title = {Noiseless coding of correlated information sources},
  journal = {IEEE Trans. Inf. Theory},
  volume = {19},
  number = {4},
  pages = {471--480},
  month = jul,
  year = {1973}
}

@article{wynerziv1976,
  author = {Aaron D. Wyner and Jacob Ziv},
  title = {The rate-distortion function for source coding with side information at the decoder},
  journal = {IEEE Trans. Inf. Theory},
  volume = {22},
  number = {1},
  pages = {1--10},
  month = jan,
  year = {1976}
}

@book{coverthomas2006,
  author = {Thomas M. Cover and Joy A. Thomas},
  title = {Elements of Information Theory},
  edition = {2nd},
  publisher = {Wiley-Interscience},
  address = {Hoboken, NJ, USA},
  year = {2006}
}
\end{document}